\documentclass[11pt]{article}
\pdfoutput=1
\usepackage[ascii]{inputenc}
\usepackage[bookmarksnumbered, hypertexnames=false,colorlinks,colorlinks=true,linkcolor=blue,urlcolor=blue,citecolor=blue,anchorcolor=green]{hyperref}
\usepackage{bbm,braket,microtype,mathrsfs,amsmath,amssymb,color,amsthm,graphicx,enumitem,mathtools,ae,aecompl,float,cases,physics,appendix,verbatim,url,fullpage}
\usepackage[affil-it]{authblk}
\usepackage[all]{xy}
\usepackage[T1]{fontenc}
\usepackage[USenglish]{babel}
\usepackage{cleveref}
\usepackage{xcolor}
\usepackage{thmtools, thm-restate}
\usepackage{xpatch}
\usepackage{apptools}
\newtheorem{theorem}{Theorem}[section]

\newtheorem{proposition}[theorem]{Proposition}
\newtheorem{problem}[theorem]{Problem}
\newtheorem{corollary}[theorem]{Corollary}
\newtheorem{lemma}[theorem]{Lemma}

\newtheorem{remark}[theorem]{Remark}
\theoremstyle{definition}
\newtheorem{definition}[theorem]{Definition}
\floatstyle{boxed}\newfloat{Algorithm}{ht}{alg}
\crefname{theorem}{Theorem}{Theorems}
\crefname{question}{Question}{Questions}
\crefname{claim}{Claim}{Claims}
\crefname{proposition}{Proposition}{Propositions}
\crefname{problem}{Problem}{Problems}
\crefname{corollary}{Corollary}{Corollaries}
\crefname{lemma}{Lemma}{Lemmas}
\crefname{conjecture}{Conjecture}{Conjectures}
\crefname{openproblem}{Open Problem}{Open Problems}
\crefname{example}{Example}{Examples}
\crefname{fact}{Fact}{Facts}
\crefname{remark}{Remark}{Remarks}
\crefname{definition}{Definition}{Definitions}
\crefname{Algorithm}{Algorithm}{Algorithms}
\crefname{section}{Section}{Sections}
\crefname{appendix}{Appendix}{Appendices}
\numberwithin{equation}{section}
\DeclareMathOperator{\defi}{def}

\DeclareMathOperator{\sign}{sign}

\DeclareMathOperator{\capac}{cap}

\DeclareMathOperator{\Ten}{Ten}
\DeclareMathOperator{\poly}{poly}
\DeclareMathOperator{\expon}{exp}

\DeclareMathOperator{\rk}{rk}
\DeclareMathOperator{\GL}{GL}
\DeclareMathOperator{\SL}{SL}

\DeclareMathOperator{\diag}{diag}
\DeclareMathOperator{\supp}{supp}
\DeclareMathOperator{\ins}{instability}
\DeclareMathOperator{\ds}{ds}
\DeclareMathOperator{\dds}{dds}
\DeclareMathOperator{\sr}{slice-rank}
\DeclareMathOperator*{\argmin}{argmin}

\newcommand{\ot}{\otimes}

\newcommand{\eps}{\varepsilon}
\newcommand{\wt}[1]{\widetilde{#1}}
\newcommand{\RO}{\mathcal{R}_G}
\newcommand{\F}{{\mathbb{F}}}
\newcommand{\Q}{{\mathbb{Q}}}
\newcommand{\Z}{{\mathbb{Z}}}
\newcommand{\N}{{\mathbb{N}}}
\newcommand{\R}{{\mathbb{R}}}

\newcommand{\C}{{\mathbb{C}}}


\newcommand{\cH}{\mathcal H}

\newcommand{\cX}{\mathcal X}
\newcommand{\cR}{\mathcal R}




\allowdisplaybreaks
\makeatletter
\xpatchcmd{\thmt@restatable}
{\csname #2\@xa\endcsname\ifx\@nx#1\@nx\else[{#1}]\fi}
{\IfAppendix{\csname #2\@xa\endcsname}{\csname #2\@xa\endcsname\ifx\@nx#1\@nx\else[{#1}]\fi}}
{}{}
\makeatother

\begin{document}
\title{Alternating minimization, scaling algorithms, and the null-cone problem from invariant theory}
\author[1]{Peter B\"{u}rgisser}
\author[2]{Ankit Garg}
\author[3,4]{Rafael Oliveira}
\author[5,6]{Michael Walter}
\author[7]{Avi Wigderson}
\affil[1]{Institut f\"{u}r Mathematik, Technische Universit\"{a}t Berlin}
\affil[2]{Microsoft Research New England}
\affil[3]{Department of Computer Science, Princeton University}
\affil[4]{Department of Computer Science, University of Toronto}
\affil[5]{QuSoft, Korteweg-de Vries Institute for Mathematics, Institute of Physics, and Institute for Logic, Language and Computation, University of Amsterdam}
\affil[6]{Stanford Institute for Theoretical Physics, Stanford University}
\affil[7]{Institute for Advanced Study, Princeton}
\date{}
\maketitle

\begin{abstract}
Alternating minimization heuristics seek to solve a (difficult) global optimization task through iteratively solving a sequence of (much easier) local optimization tasks on different parts (or blocks) of the input parameters.
While popular and widely applicable, very few examples of this heuristic are rigorously shown to converge to optimality, and even fewer to do so efficiently.

In this paper we present a general framework which is amenable to rigorous analysis, and expose its applicability.
Its main feature is that the local optimization domains are each a group of invertible matrices, together naturally acting on tensors, and the optimization problem is minimizing the norm of an input tensor under this joint action.
The solution of this optimization problem captures a basic problem in Invariant Theory, called the \emph{null-cone problem}.

This algebraic framework turns out to encompass natural computational problems in combinatorial optimization, algebra, analysis, quantum information theory, and geometric complexity theory. It includes and extends to high dimensions the recent advances on (2-dimensional) \emph{operator scaling}~\cite{gurvits2004,GGOW,IQS15b}.

Our main result is a fully polynomial time approximation scheme for this general problem, which may be viewed as a multi-dimensional scaling algorithm. This directly leads to progress on some of the problems in the areas above, and a unified view of others. We explain how faster convergence of an algorithm for the same problem will allow resolving central open problems.

Our main techniques come from Invariant Theory, and include its rich \emph{non-commutative duality theory}, and new bounds on the bitsizes of coefficients of \emph{invariant polynomials}. They enrich the algorithmic toolbox of this very computational field of mathematics, and are directly related to some challenges in geometric complexity theory (GCT).
\end{abstract}
\clearpage\tableofcontents\clearpage

\section{Introduction and summary of results}

Alternating minimization refers to a large class of heuristics commonly used in optimization, information theory, statistics and machine learning.
It addresses optimization problems of the following general form.
Given a function
\begin{equation}\label{eq:block-form}
f\colon \cX_1\times\dots\times\cX_d \rightarrow \mathbb R
 \end{equation}
the goal is to find a global optimum
\begin{equation}\label{eq:minimization problem}
   \inf_{x_1\in\cX_1,\dots,x_d\in\cX_d} f(x_1, \ldots, x_d) .
\end{equation}
While both the function $f$ and its domain may be extremely complicated, in particular non-convex, the decomposition of the domain to $d$ {\em blocks} is such that the {\em local} optimization problems are all feasible.
More precisely, for every $i\in [d]$, and for every choice of $\alpha_j\in \cX_j$ with $j\neq i$, computing
\begin{equation*}
 \inf_{x_i\in\cX_i} \, f(\alpha_1,\ldots,\alpha_{i-1},x_i,\alpha_{i+1},\ldots,\alpha_d)
\end{equation*}
is easy.

A natural heuristic in such cases is to repeatedly make local improvements to different coordinates.
Namely, we start from some arbitrary vector $\alpha^{(0)}$, and generate a sequence $\alpha^{(0)}, \alpha^{(1)}, \dots, \alpha^{(t)}, \dots$, such that $\alpha^{(t+1)}$ is obtained by solving the above local optimization problem for some $i=i^{(t)}$, freeing up this variable while fixing all other coordinates according to $\alpha^{(t)}$.
The argmin of the optimum replaces the $i$th coordinate in $\alpha^{(t)}$ to create $\alpha^{(t+1)}$.

There is a vast number of situations which are captured by this natural framework.
For one example (we'll see more), consider the famous Lemke-Howson~\cite{Lemke-Howson} algorithm for finding a Nash-equilibrium in a 2-player game.
Here $d = 2$, and starting from an arbitrary pair of strategies, proceed by alternatingly finding a ``best response'' strategy for one player given the strategy of the other.
This local optimization is simply a linear program that can be efficiently computed.
As is well known, this algorithm always converges, but in some games it requires exponential time!

So, the basic questions which arise in some settings are under which conditions does this general heuristic converge at all, and even better, when does it converge efficiently.
Seemingly the first paper to give {\em provable} sufficient conditions (the ``5-point property'') for convergence in a \emph{general} setting was Csisz{\'a}r and Tusn{\'a}dy~\cite{CT84}.
An application is computing the distance (in their case, KL-divergence, but other metrics can be considered) between two convex sets in $\R^n$, as well as the two closest points achieving that distance.
Again, the local problem (fixing one point and finding the closest to it in the other convex set) is a simple convex program that has a simple efficient algorithm.
For two affine subspaces and the $\ell_2$-distance, von Neumann's alternating projection method~\cite{neumann1950functional} is an important special case with numerous applications.
As mentioned, in the past three decades numerous papers studied various variations and gave conditions for convergence, especially in cases where $d>2$ (often called also ``block-coordinate descent'') -- we cite only a few recent ones~\cite{XY13,RHL13,WYZ15} which address a variety of problems and have many references.
Much much fewer cases are known in which convergence is fast, namely requires only a polynomial number of local iterations.
Some examples include the recent works on matrix completion~\cite{JNS13,Hardt14}.
Our aim is to develop techniques that expand {\em efficient} convergence results in the following algebraic setting, which as we shall see is surprisingly general and has many motivations.

We will consider the case of minimizing~\eqref{eq:block-form} with very specific domain and function, that we in turn explain and then motivate (so please bear with this necessary notation).
First, each of the blocks~$\cX_i$ is a special
linear group, $\SL_{n_i}(\C)$ (which we will also abbreviate by $\SL(n_i)$), namely the invertible complex matrices of some size $n_i$
and determinant one.
Note that this domain is quite complex, as these groups are certainly not convex, and not even compact.
To explain the function $f$ to be minimized over this domain, consider the natural linear transformation (basis change) of the vector space $\C^{n_i}$ by matrices $A_i \in \SL(n_i)$.
Thus, the product group
$G := \SL(n_1) \times \SL(n_2) \times \cdots \times \SL(n_d)$
acts on tensors
$X \in V := \Ten(n_0,n_1,n_2,\dots,n_d) = \C^{n_0} \otimes \C^{n_1} \otimes \C^{n_2} \otimes \cdots \otimes \C^{n_d}$
of order $d+1$\footnote{To better capture past work and some mathematical motivations, we add a $0$-th component (on which there is no action at all) and make the tensor of order $d+1$. Equivalently, this means that we study tuples of tensors of order $d$.},
where (the basis change) $A_i$ is applied to all vectors (slices, ``fibers'') of $X$ along the $i$th dimension.
Now the objective function $f=f_X$~depends on an input tensor $X\in V$;
for any vector of matrices $A=(A_1,A_2, \dots A_d)\in G$, $f_X(A)$ is defined simply as the $\ell_2$-norm squared\footnote{Namely, the sum of squares of the absolute values of the entries.} of $A \cdot X$, the tensor resulting by the action of $A$ on~$X$.
Summarizing this description, for input $X\in V$ we would like to compute (or approximate) the quantity we call {\em capacity} of $X$, denoted $\capac(X)$ defined by
\begin{equation}\label{eq:alg-alt-min}
   \capac(X) := \inf_{A\in G}\, \|A\cdot X\|_2^2 .
\end{equation}
In particular, we would like to decide the {\em null-cone problem}: is the capacity of
$X$ zero or not?

While this formulation readily lends itself to an alternating minimization procedure (the local problems have an easily computed closed-form solution), the algorithms we will develop will be related to a dual optimization (scaling) problem that also has a similar form.
\emph{Our main result will be a fully polynomial time approximation scheme (FPTAS) for that dual problem!} This will also allow us to solve the null-cone problem.
However we defer this discussion and turn to motivations.

Now where do such problems and this notion of capacity naturally arise? Let us list a few sources, revealing how fundamental this framework really is.
Some of these connections go back over a century and some were discovered in the past couple of years.
Also, some are less direct than others, and some overlap.
Some of them, and others, will be elaborated on throughout the paper.
\begin{itemize}
\item {\bf Combinatorial Optimization.}
Computing matroid intersection (for linear matroids over $\Q$) easily reduces to checking zeroness of capacity for $d=2$ for very ``simple'' inputs $X$.
This was discovered in~\cite{gurvits2004}, who showed it remains true for an implicit version of this problem for which there was no known polynomial time algorithm.
His alternating minimization algorithm (called today  {\em operator scaling}) is efficient for such simple inputs.
It also generalizes similar algorithms for such problems as perfect matchings in bipartite graphs and maximum flow in networks, where again zeroness of capacity captures the decision versions of the problems.
The associated alternating minimization algorithm gives a very different and yet efficient way than is taught in algorithms texts for these classical problems (these go by the name of {\em matrix scaling}, originated in~\cite{Sink} and described explicitly in~\cite{LSW}).

\item {\bf Non-commutative algebra.}
The most basic computational problem in this field is the {\em word problem}, namely, given an expression over the free skew field (namely, a formula over $\Q$ in non-commutative variables), is it identically zero\footnote{This is an analog of the PIT problem in algebraic complexity, for non-commutative formulas with division.}.
As shown in~\cite{GGOW} this problem is reducible to testing if the capacity is 0, for $d=2$.
\cite{GGOW} significantly extends~\cite{gurvits2004}, proving that his algorithm is actually an FPTAS for {\em all} inputs, resolving the $d=2$ of our general problem above, and providing the first polynomial time algorithm for this basic word problem (the best previous algorithm, deterministic or probabilistic, required exponential time)\footnote{This problem was solved for finite characteristics by different methods in~\cite{IQS15b}.}.

\item {\bf Analysis.}
We move from efficiently testing {\em identities} to efficiently testing {\em inequalities}.
The Brascamp-Lieb inequalities~\cite{BL76,Lieb90} are an extremely broad class capturing many famous inequalities in geometry, probability, analysis and information theory (including H\"older, Loomis-Whitney, Shearer, Brunn-Minkowski, \dots).
A central theorem in this field~\cite{Lieb90} expresses feasibility, and the tightest possible constant if feasible, for every instantiation of such inequality in terms of capacity (defined somewhat differently, but directly related to ours).
In~\cite{GGOW2} a direct alternating minimization algorithm\footnote{Which can be viewed also as a reduction to the operator scaling algorithm in~\cite{GGOW}.} is given which efficiently approximates it on every input! Previous techniques to even bound capacity were completely ineffective, using compactness arguments.

\item {\bf Quantum Information Theory and Geometric Complexity Theory.}
We finally move to arbitrary $d$.
The {\em quantum marginal problem} (generalizing the classical {\em marginal problem} in probability theory) asks if a collection of marginals (namely, density matrices) on $d$ subsystems of a given quantum system are {\em consistent}, namely is there a density matrix on the whole system whose partial traces on the given subsystems result in these marginals~\cite{klyachko2006quantum}.
When the state of the whole system is pure and the marginals are proportional to identity matrices, each subsystem is maximally entangled with the others.
It is an important task to distill entanglement, that is, given a tensor, to transform it into such a \emph{locally maximally entangled} state by SLOCC\footnote{Acronym for \emph{Stochastic Local Operations and Classical Communication}; these are the natural actions when each of the subsystems is owned by a different party.}  operations~\cite{bennett2000exact,klyachko2002coherent}.
It turns out that this amounts precisely to the capacity optimization problem, and the matrices which minimize it solve this distillation problem.
One can thus view our FPTAS as achieving approximate entanglement distillation (see, e.g., \cite{VDB03,walter2013entanglement,walter2014multipartite}).
We note that for $d=3$, the non-zeroness of capacity captures a refinement of the {\em asymptotic positivity} of special {\em Kronecker coefficients}, a central notion in representation theory and geometric complexity theory (GCT) of~\cite{GCT1,buergisser2017membership,ikenmeyer2017vanishing}.
We refer to \cref{sec:quantum} for more detail.

\item {\bf Invariant Theory.}
We will be brief, and soon discuss this area at length, as it underlies most of the work in this paper.
Invariant theory studies symmetries, namely group actions on sets, and their invariants.
The action of our particular group $G$ above on the particular linear space of tensors $V$ is an example of the main object of study in invariant theory:
actions of reductive\footnote{A somewhat technical term which includes all classical linear groups.} groups $G$ on linear spaces $V$.
A central notion in the (geometric) study of such actions is the {\em null cone};
it consists of all elements in $v\in V$ which $G$ maps arbitrarily close to $0\in V$.
Here the connection to our problem is most direct:
the null cone of the action in our framework is {\em precisely} the input tensors $X$ whose capacity is 0!

We stress that previous algorithms for the null cone (for general reductive actions) required \emph{doubly} exponential time (e.g., \cite[Algorithm 4.6.7]{Sturmfels}). In contrast, our algorithm (for the general framework above) decides if a given tensor is in the null cone in \emph{singly} exponential time!
We will also discuss special cases where our FPTAS achieves polynomial time. The same null-cone problem is a direct challenge among the algorithmic problems suggested by Mulmuley towards the GCT program in~\cite{gct5}.
\end{itemize}

We now turn to discuss at more length the relevant features of invariant theory, explaining in particular the dual optimization problem that our algorithm actually solves.
Then we will state our main results.

\subsection{Invariant theory}

Invariant theory is an area of mathematics in which computational methods and efficient algorithms are extremely developed (and sought).
The reader is referred to the excellent texts~\cite{Derksen-Kemper, Sturmfels} for general background, focused on algorithms.
Using invariant theory was key for the recent development of efficient algorithms for operator scaling mentioned above~\cite{GGOW,IQS15b}, while these algorithms in turn solved basic problems in this field itself.
This paper proceeds to expand the applications of invariant theory for algorithms (and computational complexity!) and to improve the efficiency of algorithms for basic problems in the field.

Invariant theory is an extremely rich theory, starting with seminal works of Cayley~\cite{Cay46}\footnote{Which describes his ingenious algorithm, the Omega-process.}, to compute invariants of $\SL(n)$, of Hilbert~\cite{Hil90,Hil}\footnote{In which he proves his celebrated Nullstellensatz and Finite Basis theorem.} and others in the 19th century.
Again, we focus on linear actions of reductive groups $G$ on vector spaces $V$.
It turns out that in this linear setting the relevant invariants are polynomial functions of the variables (entries of vectors in $V$ with respect to some fixed basis), and the central problem is to understand the {\em invariant ring},
namely all polynomials which are left invariant (compute the same value) under the action of $G$.
Two familiar examples, which illustrate how good an understanding we can hope for, are the following:
\begin{itemize}
\item When $G=S_n$, the group of permutations of the coordinates of an $n$-vector in $V=\F^n$, the ring of invariants
are (naturally) all {\em symmetric polynomials}, which is (finitely!) generated by the $n$ elementary symmetric polynomials.
\item When $G=\SL(n)\times\SL(n)$\footnote{Here $\SL(n)$ is the group of invertible matrices of determinant 1.} acts on $n\times n$ matrices by changing basis of the rows and columns respectively, the ring of invariants
is generated by one polynomial in the entries of this matrix: the \emph{determinant}.
\end{itemize}
One of Hilbert's celebrated results in~\cite{Hil90,Hil} was that the invariant ring is {\em always} finitely generated.
Further understanding calls for listing generators, finding their algebraic relations, making them as low degree as possible, as few as possible, and as easily computable as possible.
Many new motivations for these arose from the GCT program mentioned above (and below).

The action of $G$ naturally carves the space $V$ into {\em orbits}, namely sets of the form $G\cdot v$ consisting of all images of $v\in V$ under the action of $G$.
Understanding when are two vectors in the same orbit captures numerous natural problems, e.g., {\em graph isomorphism} (when is a graph in the orbit of another, under vertex renaming), {\em module isomorphism} (when is a vector of matrices in the orbit of another, under simultaneous conjugation), and others.
Over the complex numbers $\C$ it is more natural to consider {\em orbit closures} and answer similar problems regarding membership of vectors in such orbit closures.
For an important example, the central approach of Mulmuley and Sohoni's GCT program~\cite{GCT1} to Valiant's conjecture that VP$\neq$VNP is to translate it to the question of whether the permanent is in the orbit closure of (a polynomially larger) determinant under a linear action on the matrix entries.
Understanding orbit closures is the subject of \emph{Geometric Invariant Theory}, starting with the seminal work of Mumford~\cite{Mum65} in the middle of the last century.
He proved that answering such questions is {\em equivalent} to computing invariants.
That is, the orbit closures of two vectors $v,w \in V$ intersect if and only if $p(v)=p(w)$ for all invariant polynomials $p$.

An important special case is to understand the orbit closures that contain the zero vector $0\in V$.
Note that by the theorem above, these are all vectors $v\in V$ for which $p(v)=0$ holds for all invariant polynomials $p$ without constant term.
This is an algebraic variety (nothing more than a zero set a system of polynomials), which is called the {\em null cone} of the action of $G$ on $V$.
Understanding the null cone is also central to geometric complexity theory from a topological viewpoint when {\em projectivizing} the space of orbits (making a point of each one, ``modding out'' by the group action).

A key to understanding the null cone is a duality theory for the group action, which may be viewed as a non-commutative analog of the duality theory for linear programming.
We will elaborate on this duality theory, essential to our work, in \cref{sec:scaling,sec:instability} and only sketch the essentials here.
It is clear that a vector $v\in V$ is in the null cone if its capacity is equal to zero, and so can be ``certified'' by exhibiting a sequence of group elements $g_t$ such that $\|g_t \cdot v\|$ approaches zero.
How can we ``certify'' that a vector $v\in V$ is {\em not} in the null cone?
The capacity formulation means that there must be some fixed $\delta>0$ such that, for every element $w =g\cdot v$ in the orbit of $v$, $\|w\|\geq \delta$.
Consider a point $w$ attaining this minimum distance to~$0$ (assume it exists).
There is a way to write down a non-commutative analog of the ``Lagrange multiplier'' conditions giving equations saying essentially that the derivative of the group action on $w$ in any direction is zero.
It turns out that the distance to satisfying these equations is {\em another, dual} optimization problem.
In particular, to ``certify'' non-membership of $v$ in the null cone, it suffices to exhibit a sequence of group elements $g_t$ such that $g_t\cdot v$ approaches distance 0 from satisfying these equations.
Again, we give plenty more detail on that in \cref{sec:scaling,sec:instability}.

What is remarkable is that in our setting, when $G$ is a product of $d$ groups as above, the new optimization has the exact same form as the original one we started with -- only with a different function to minimize!
Moreover, the set of equations decouples to $d$ subsets, and optimizing each {\em local} one is efficiently computable.
In short, this makes our new optimization problem again amenable to alternating minimization.
Further, the conditions that we need to satisfy may be viewed as ``scaling conditions'', which for $d=2$ can be shown to be equivalent to the matrix scaling conditions in the commutative case, and to the operator scaling conditions in the non-commutative case.
Thus, these scaling algorithms in combinatorial optimization to ``doubly stochastic'' position naturally arise from general considerations of duality in geometric invariant theory.
We will continue to borrow this name and call a tensor that satisfies the minimization conditions \emph{``$d$-stochastic"}, and we quantify how well a tensor~$Y$ satisfies $d$-stochasticity by a distance measure denoted~$\ds(Y)$.
For an input tensor~$X$, we then seek to compute the minimal distance to $d$-stochasticity, denoted $\dds(X)$, which is the infimum of $\ds(Y)$ over all $Y$ in the orbit of $X$ (and formally defined in~\cref{def:ds}).
Thus, summarizing, with $G$ and $V$ as before \cref{eq:alg-alt-min}, our new, dual optimization problem is
\begin{equation}\label{eq:dual-alt-min}
\dds(X) := \inf_{A\in G} \, \ds(A\cdot X).
\end{equation}
In our framework, $\ds(Y)$ has a very simple form.
Taking the quantum information theory view of the the problem (described above in the motivation), the tensor $Y$ captures a quantum system of $d$ local parts; then $\ds(Y)$ is simply the total $\ell_2$-distance squared of the $d$ subsystems to being maximally mixed (that is, proportional to normalized identity density matrices).

We now proceed to describe our main results:
the problems we address,
our algorithms for them,
and their complexities.
More formal statements of all will appear in the technical sections.

\subsection{Our results}

We start by describing our technical results, and then discussing the conceptual contribution of our paper.

\paragraph{Technical results.}

We fix $G$ and $V$ as above. While we work over the complex numbers, we will think of an input tensor $X \in V$ as an array of integers
(one can consider rational numbers, but this is no gain in generality) represented in binary.
The input size parameters will be the number of tensor entries $n=n_0 \times n_1 \times \dots \times n_d$,
and the maximum binary length of each entry, which will be denoted by $b$.
So the total input length may be thought of as $nb$.

Recall again that we have considered two dual optimization problems\footnote{While we do not know of any precise relationship between the value of these optimization problems, the vanishing behaviour of these two problems, as explained above, is dual to each other. $\capac(X) = 0$ iff $\dds(X) > 0$.} above for which the input is $X$:
\begin{equation}
\tag{\ref{eq:alg-alt-min}}
   \capac(X) = \inf_{A\in G}\, \|A\cdot X\|_2^2,
\end{equation}
\begin{equation}
\tag{\ref{eq:dual-alt-min}}
\dds(X) = \inf_{A\in G} \, \ds(A\cdot X).
\end{equation}

These lead to both exact and approximate computational problems for which we give new algorithms.

The exact {\em null-cone problem} we will consider is to test if the input tensor $X$ is in the null cone of the group action $G$.
As we discussed, $X$ is in the null cone iff $\capac(X) = 0$ and iff $\dds(X) >0$ (!).

\medskip

We will give two different {\em exponential} time algorithms for this problem. We note that the previous best algorithms (which work in the greater generality of all reductive group actions) required {\em doubly exponential} time! These algorithms may be very useful for the study of invariants for the actions of ``small'', specific groups\footnote{Invariant theory started with the study of invariants of binary, quadratic forms under the action of $\SL(2)$. Constructing complete sets of invariants for slightly larger groups is still open.}. We will discuss these algorithms soon.

The approximate problem we will consider will be to approximate $\dds(X)$\footnote{One could also consider the same for $\capac(X)$, but unlike $\dds(X)$, we have no faster algorithm for approximating $\capac(X)$ than computing it exactly.}.
Our approximation of $\dds$ runs in polynomial time in the input length $nb$ and the (additive) approximation $\epsilon$ -- thus a FPTAS is our main result\footnote{Some readers might object to our use of the term FPTAS for our algorithm since we have an additive approximation guarantee whereas the term is commonly used for multiplicative guarantees. However, note that a multiplicative guarantee for either $\capac(X)$ or $\dds(X)$ will need to determine their vanishing behaviour. This captures the null-cone problem, a polynomial time algorithm for which is the main open problem left open by this work. Also note that we don't have any guarantee on approximating $\dds(X)$ if $\dds(X) > 0$, which is also left as an interesting open problem. However, approximating $\dds(X)$ when it is $0$ is more fundamental because of the connection to the null cone.}, and we state it first.

\begin{restatable}[Main theorem]{theorem}{MainTheorem}\label{thm:main}
There is a $\poly(n, b, \frac1\eps)$ time deterministic algorithm (\cref{alg:scaling}) that, given a tensor
$X \in V$ with integer coordinates of bit size bounded by~$b$,
either identifies that $X$ is in the null cone or outputs a ``scaling'' $Y \in G \cdot X$ such that $\ds(Y)<\eps$.
\end{restatable}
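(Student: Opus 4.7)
The plan is to analyze a natural alternating minimization procedure and bound its iteration count via three ingredients: a trivial upper bound on $\capac(X)$ at initialization, a per-step progress lemma tying the decrease of $\capac$ to $\ds$, and a non-trivial lower bound on $\capac(X)$ whenever $X$ is outside the null cone. Together, these force the algorithm either to terminate with a scaling of small $\ds$ or to exhaust a budget $T = \poly(n,b,1/\eps)$ of iterations, in which case $X$ must lie in the null cone.

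\textbf{Algorithm.} Initialize $A^{(0)} = (I_{n_1},\ldots,I_{n_d})$ and $X^{(0)} = X$. At step $t$, compute the $d$ quantum marginals $\rho_1^{(t)},\ldots,\rho_d^{(t)}$ of $X^{(t)}/\|X^{(t)}\|_2$, so that $\ds(X^{(t)}) = \sum_i \|\rho_i^{(t)} - I_{n_i}/n_i\|_2^2$. If $\ds(X^{(t)}) < \eps$, output $X^{(t)}$. Otherwise, choose the coordinate $i$ maximizing this summand (its value is then at least $\eps/d$) and apply $g_i := \det(\rho_i^{(t)})^{1/(2n_i)} (\rho_i^{(t)})^{-1/2} \in \SL(n_i)$ on the $i$-th tensor factor, updating $A^{(t+1)}$ and $X^{(t+1)}$ accordingly. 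After $T$ unsuccessful iterations, declare $X \in $ null cone.

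\textbf{Progress and termination.} A direct calculation gives $\|X^{(t+1)}\|_2^2 = n_i \det(\rho_i^{(t)})^{1/n_i} \cdot \|X^{(t)}\|_2^2$, so each iteration is precisely the local capacity minimizer in the $i$-th factor. By AM-GM, $n_i \det(\rho_i^{(t)})^{1/n_i} \le 1$ with equality iff $\rho_i^{(t)} = I/n_i$; a Pinsker-type quantitative version yields
\begin{equation*}
 n_i \det(\rho_i^{(t)})^{1/n_i} \;\le\; e^{-D(I/n_i \,\|\, \rho_i^{(t)})} \;\le\; 1 - \Omega\!\left(\|\rho_i^{(t)} - I/n_i\|_2^2\right) \;\le\; 1 - \Omega(\eps/d).
\end{equation*}
Starting from $\|X\|_2^2 \le n\cdot 2^{2b}$, after $T$ iterations we therefore have $\|X^{(T)}\|_2^2 \le n \cdot 2^{2b} \cdot e^{-\Omega(T\eps/d)}$. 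The third ingredient is a lower bound on $\capac(X)$ outside the null cone: there exists a non-constant homogeneous $G$-invariant $P$ with integer coefficients, degree $D$, and coefficient magnitudes bounded by $C$ such that $P(X) \neq 0$; since $P$ is $G$-invariant, $1 \le |P(X)| = |P(A \cdot X)| \le C \binom{n+D}{D} \|A\cdot X\|_2^D$, yielding $\capac(X) \ge \bigl(C \binom{n+D}{D}\bigr)^{-2/D}$. Provided $D$ and $\log C$ are at most $\exp(\poly(n))$, choosing $T = \poly(n,b,1/\eps)$ drives $\|X^{(T)}\|_2^2$ below this lower bound whenever the algorithm has not terminated, forcing $X$ into the null cone.

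\textbf{Main obstacle.} The crux is the third ingredient: exhibiting a non-vanishing invariant polynomial of controlled degree and coefficient bit-size for the action of $G = \prod_i \SL(n_i)$ on $\Ten(n_0,\ldots,n_d)$. For $d=2$, Derksen's degree bound combined with Gurvits' closed formula sufficed in \cite{GGOW}, but no analogous bounds are available for general $d$; establishing $\sigma(G,V), \log C \le \exp(\poly(n))$ via the duality theory of \cref{sec:scaling,sec:instability}, together with careful use of Cayley-type constructions and Nullstellensatz-style bit-size accounting, is the main technical step to be developed. Once those estimates are in hand, the per-step analysis above is essentially a one-line convexity argument, and the $\SL$-normalization is handled by the explicit rescaling factor in the definition of $g_i$.
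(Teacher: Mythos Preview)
Your strategy and algorithm match the paper's exactly: the same alternating minimization, the same three-step analysis (initial upper bound on $\|X\|_2^2$, per-step multiplicative decrease tied to $\ds$ via a quantitative AM--GM, and a capacity lower bound from a nonvanishing integral invariant). Your per-step argument via relative entropy and Pinsker is equivalent to the paper's quantitative AM--GM lemma.

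The gap is in the third ingredient. Your stated sufficient condition, ``$D$ and $\log C$ at most $\exp(\poly(n))$'', does \emph{not} yield $T=\poly(n,b,1/\eps)$: from your own bound $\capac(X)\ge (C\tbinom{n+D}{D})^{-2/D}$ what is actually needed is $(\log C)/D = O(\poly(n))$, and your hypothesis allows, e.g., $D=2$ with $\log C=\exp(n)$, which forces exponentially many iterations. More importantly, the paper does \emph{not} resolve the obstacle via the Derksen-plus-Cayley route you anticipate. Its key insight is to construct, via Schur--Weyl duality, an explicit spanning set of degree-$m$ invariants $P$ (for each $m$ divisible by all $n_i$) satisfying the pointwise bound $|P(X)|\le (n_1\cdots n_d)^m\,\|X\|_2^m$; since such $P$ have integer coefficients, $|P(X)|\ge 1$ whenever $P(X)\neq 0$ on integral $X$, and the degree $m$ cancels to give $\capac(X)\ge 1/(n_1\cdots n_d)^2$ outright --- no degree bound is needed at all. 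The Cayley $\Omega$-process combined with Derksen's bound appears in the paper only as a secondary, more general but quantitatively weaker alternative.
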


We present the algorithm right below in \cref{subsec:scaling algo} and analyze it in \cref{subsec:scaling analysis}.
It is a realization of the alternating minimization scheme discussed before, and may be viewed as a \emph{scaling algorithm}.
Indeed, it generalizes the FPTAS given in~\cite{GGOW} for $d=2$, which is the operator scaling case.
While the analysis is similar in spirit, and borrows many ideas from~\cite{GGOW}, we note that for higher dimensions $d>2$, the understanding of invariant polynomials was much poorer.
Thus, new ideas are needed, and in particular we give explicit\footnote{The generators are explicitly defined but may not be efficiently computable by algebraic circuits.} generators (via classical Invariant Theory techniques) of the invariant ring (for all degrees), all of which have small coefficients, and we bound the running time in a way which is {\em independent} of degree bounds.

Moving to the algorithms for the exact null-cone problem, as discussed, we have two exponential time ones, improving the known doubly exponential ones.

The first algorithm is a natural instantiation of the FPTAS above, with the right choice of $\epsilon$.
Specifically, we prove that when if $X$ is not in the null cone, $\dds(X) \geq \exp(-O(n \log n))$.
So picking $\epsilon = \exp(-O(n\log n))$ suffices.
This algorithm is described in \cref{thm:null cone exp scaling}.

Our second algorithm is much more general, and applies to the action of a product of general linear groups on {\em any} vector space $V$, not just the the set of tensors.
Here, instead of giving explicit generators, we simply prove their existence, via Cayley's famous {\em Omega-process}. Namely, we prove that in this very general case there are always generating invariants which have both exponential degree and whose coefficients have exponential bit length. These bounds turns out to be sufficient to carry out our time analysis.
The bound on the size of coefficients is (to the best of our knowledge) a new structural result in Invariant Theory, and what our analysis demonstrates is its usefulness for solving computational problems.

The corollaries of these main results to specific areas, which we described in some of the motivation bullets earlier in the introduction, are presented in the different technical sections.

\paragraph{Conceptual contributions.}

We believe that the general framework we present, namely an \emph{algebraic framework of alternating minimization}, establishes an important connection between the fields of optimization and invariant theory.
As we saw, this framework is very general and encompasses many natural problems in several fields.
It exposes the importance of {\em symmetry} in many of these problems, and so invites the use of tools from invariant theory which studies symmetry to the analysis of algorithms.
This is akin to the GCT program, which connects algebraic complexity theory with invariant theory and representation theory, exposing the importance of using symmetry to proving lower bounds.

At the same time we expose basic computational problems of invariant theory, which are in great need of better algorithms and algorithmic techniques from computer science and optimization to solve them.
The advent of operating scaling already pointed out to this power in their utility of alternate minimization and scaling {\em numerical} algorithms to a field in which most work was {\em symbolic}, and we expand on this theme in this paper.
But there are many other optimization methods which can be useful there, and we note only one other example, that may in particular be useful for the very null-cone problem we study.
While the problem of capacity optimization is not convex under the usual Euclidean metric, it is actually {\em geodesically convex}~\cite{KN79,NM84,Woodward11} if one moves to the natural (hyperbolic) metric on the group. This opens up this problem (and related ones) to avenues of attack from techniques of classical optimization (gradient and other descent methods, interior point methods, etc.) when generalized and adapted to such metrics. We believe that the area of
geodesic convex optimization, which itself in its nascent stages (from an algorithmic standpoint), is likely to have a significant impact on algorithmic invariant theory. See more on algorithmic techniques for geodesic convexity in the work of~\cite{Zhang_Sra16} and the references therein.

\subsection{Plan of the paper}

In the remainder of the paper, we explain the problems that we study in more detail; we discuss our results and techniques and point out various connections and interesting open problems.
In~\cref{sec:nullcone as optimization}, we expand on the null-cone problem and motivate a natural alternating minimization algorithm for it.
In~\cref{sec:scaling}, we explain the Kempf-Ness theorem, which can be seen as a form of noncommutative duality, and we present our scaling algorithm for solving the null cone problem and its analysis.
We first give an overview of the techniques that go into its analysis.
We then construct a generating set of invariants via Schur-Weyl duality and lower bound the capacity of tensors that are not in the null cone.
Lastly, we give a precise mathematical analysis of our algorithm.
In~\cref{sec:instability}, we describe the Hilbert-Mumford criterion for the null cone and quantify the instability of a tensor in the null cone, which gives a way of characterizing when our algorithm can decide the null-cone problem in polynomial time.
\Cref{sec:quantum} explores connections to the quantum marginal problem and Kronecker coefficients and \cref{sec:slice rank} connects the null cone to the notion of slice rank of tensors.
In \Cref{sec:cayley}, we describe an explicit construction of Reynolds operators via Cayley's Omega process, and we present alternative, weaker bounds on the capacity and a resulting algebraic exponential-time algorithm for the null-cone problem, which generalizes to arbitrary polynomial actions of products of special linear groups.
In~\cref{sec:conclusion}, we mention some open problems.
To make the paper self-contained, we supply concrete proofs of well-known statements from geometric invariant theory in \cref{app:git}.

\section{Null-cone problem as an optimization problem}\label{sec:nullcone as optimization}

In the introduction, we defined the \emph{null cone} as the set of tensors~$X$ such that the zero vector $0\in V$ lies in the orbit closure $\overline{G\cdot X}$, i.e., there exists a sequence of group elements $A^{(1)},A^{(2)},\dots$ sending the vector to zero, $\lim_{j \rightarrow \infty} A^{(j)} \cdot X = 0$.
Thus the \emph{null-cone problem} amounts to the optimization problem~\eqref{eq:alg-alt-min}:
A tensor $X\in V = \Ten(n_0,n_1,\dots,n_d)$ is in the null cone if and only if the \emph{capacity} is equal to zero, namely:
\begin{equation*}
   \capac(X) = \inf_{A\in G}\, \norm{A\cdot X}_2^2 = \min_{Y \in \overline{G \cdot X}} \norm{Y}_2^2 = 0.
\end{equation*}

\begin{remark}
We are denoting the above optimization problem by $\capac(X)$, short for \emph{capacity}, to remain consistent with similar notions defined in previous papers like \emph{\cite{GurYianilos,gurvits2004,gur_Waerden,GGOW}}.
In most of these cases, the capacity notion that they consider is also looking at the optimization problem: ``minimum $\ell_2$-norm in an orbit closure" for specific group actions.
As phrased in these papers, they might look different (e.g., they involve determinants in place of $\ell_2$ norms) but there is a tight connection between the two via the AM-GM inequality (i.e., the inequality of arithmetic and geometric means) in one direction and a single alternating minimization step in the other direction.
\end{remark}

A fundamental theorem of Hilbert~\cite{Hil} and Mumford~\cite{Mum65} gives an alternative characterization of the null cone.
It states that the null cone is precisely the set of tensors $X$ on which all invariant polynomials~$P$ (without constant term) vanish.
This is the starting point to the field of Geometric Invariant Theory.

\label{page:qualitative}
It is instructive to see why $P(X)=0$ must hold for any tensor~$X$ in the null cone:
Since $P$ is $G$-invariant, $P(Y)=P(X)$ for all $Y$ in the $G$-orbit of $X$, and, by continuity, also for all $Y$ in the closure of the orbit.
If the tensor~$X$ is in the null cone, then $Y=0\in V$ is in its closure.
But then $P(X)=P(0)=0$, since the polynomial~$P$ has no constant term.
In \cref{prp:inf norm on orbit} we derive a more subtle, \emph{quantitative} version of this observation.
It will be a fundamental ingredient to the analysis of our algorithm.

\medskip

Since the group $G = \SL(n_1)\times\cdots\times\SL(n_d)$ consists of tuples $A = (A_1,\dots,A_d)$, an alternating minimization algorithm suggests itself for solving the optimization problem $\capac(X)$:
Fix an index~$i\in[d]$ and optimize only over a single $A_i$, leaving the $(A_j)_{j\neq i}$ unchanged.
This gives rise to an optimization problem of the following form:
$$
\inf_{A_S \in \SL(n_S)} \norm{\left( A_S \otimes I_{n_T} \right) \cdot Y}_2^2
$$
Here $n_S = n_i$, $n_T = \prod_{j \neq i} n_j$ and $Y = A \cdot X$.
This optimization problem has a closed form solution in terms of the following quantum mechanic analog of the marginals of a probability distribution:

\begin{definition}[Quantum marginal]\label{def:quantum marginal}
Given a matrix $\rho$ acting on $\cH_S \otimes \cH_T$, where $\cH_S$ and $\cH_T$ are Hilbert spaces of dimensions $n_S$ and $n_T$, respectively (e.g., $\cH_S = \C^{n_S}$ and $\cH_T = \C^{n_T}$).
Then there is a unique matrix $\rho_S$ acting on $\cH_S$, called the \emph{quantum marginal} (or reduced density matrix) of $S$, such that
\begin{equation}\label{eq:rdm}
\tr \left[\left( M_S \otimes I_T\right) \rho\right] = \tr \left[ M_S \rho_S\right]
\end{equation}
for every $M_S$ acting on $\cH_S$.
\end{definition}

This point of view gives rise to an important interpretation of our result (\cref{sec:quantum}).

It can be easily seen that if $\rho$ is positive semidefinite (PSD) then $\rho_S$ is PSD, and that it has the same trace as $\rho$.
It is given by the following explicit formula
\[
  \rho_S = \sum_{k=1}^{n_T} \left(I_S \otimes e_k\right)^{\dagger}
  \rho \left(I_S \otimes e_k\right) .
\]
Here the $e_k$'s are the elementary column vectors of dimension $n_T$.

Now observe that\footnote{Note that we can also view tensors in $\Ten(n_0,n_1,\dots,n_d)$ as column vectors in the space $\C^{n_0} \otimes \cdots \otimes \C^{n_d}$.}
\[
\left\| \left( A_S \otimes I_T \right) \cdot Y\right\|_2^2 = \tr \left[ \left(A_S^{\dagger} A_S \otimes I_T\right) Y Y^{\dagger}\right].
\]
Let $\rho_i$ be the partial trace of $\rho = Y Y^{\dagger}$ obtained by tracing out all the Hilbert spaces except the $i^{\text{th}}$ one, and rename $A_i = A_S$. Then, using \cref{eq:rdm},
\[
\tr \left[ \left(A_S^{\dagger} A_S \otimes I\right) Y Y^{\dagger}\right] = \tr \left[ A_i^{\dagger} A_i \rho_i \right].
\]
Hence we are left with the optimization problem:
\begin{equation}\label{eq:single optimization}
\inf_{A_i \in \SL(n_i)} \tr \left[ A_i A_i^{\dagger} \rho_i \right] .
\end{equation}
We can see by the AM-GM inequality that the optimum of the program~\eqref{eq:single optimization} is $n_i \det(\rho_i)^{1/n_i}$, which is achieved for $A_i = \det(\rho_i)^{1/2n_i} \rho_i^{-1/2}$ (if $\rho_i$ is not invertible then we define the inverse on the support of $\rho_i$; the infimum will at any rate be zero).

We thus obtain the core of an alternating minimization algorithm for the capacity $\capac(X)$.
At each step $t$, select an index $i\in[d]$, compute the quantum marginal $\rho_i$ of the current tensor $X^{(t)}$, and update by performing the following \emph{scaling step}:
\begin{equation}\label{eq:scaling step}
  X^{(t+1)} \leftarrow \det(\rho_i)^{1/2n_i} \rho_i^{-1/2} \cdot X^{(t)}.
\end{equation}
Here and in the following we use the abbreviation
\[ A_i \cdot Y := (I_{n_0}\ot I_{n_1}\ot\dots\ot I_{n_{i-1}}\ot A_i\ot
  I_{n_{i+1}}\ot\dots\ot I_{n_d}) \cdot Y , \]
where we act nontrivially on the $i$-th tensor factor only.

We will analyze essentially this algorithm, augmented by an appropriate method for selecting the index $i$ at each step and a stopping criterion (see \cref{alg:scaling} in the next section).

\section{Noncommutative duality and the tensor scaling algorithm}\label{sec:scaling}

To obtain a scaling algorithm with rigorous guarantees, we will now derive a dual formulation of the optimization problem $\capac(X)$.
This will achieved by a theorem from Geometric Invariant Theory, due to Kempf and Ness, which can be understood as part of a noncommutative duality theory.

We briefly return to the setting of a general action of a group $G$ on a vector space $V$ to explain this result.
Fix a vector $v \in V$ and consider the function $f_v(g) := \norm{g\cdot v}_2^2$.
The \emph{moment map} at $v$, $\mu(v)$, measures how much $f_v(g)$ changes when we perturb $g$ around the identity.
So $\mu(v)$ is just the derivative of the function $f_v(g)$ at the identity element (in a precise sense which we do not need to define here).
Now suppose $v$ is not in the null cone.
Then there exists a \emph{nonzero} vector $w \in \overline{G\cdot v}$ of minimal norm in the orbit closure.
Since the norm is minimal, perturbing $w$ by the action of group elements $g$ close to the identity element does not change the norm to first order.
Hence $\mu(w) = 0$.
To summarize, if $v$ is not in the null cone, then there exists $0 \neq w \in \overline{G\cdot v}$ such that $\mu(w) = 0$.
This condition can be understood as a vast generalization of the notion of ``doubly stochastic'', which one might be tempted to call \emph{``$G$-stochastic''}.

A remarkable theorem due to Kempf and Ness~\cite{KN79} asserts that this is not only a necessary but in fact also sufficient condition, i.e., $v$ is not in the null cone if and only if there exists $0 \neq w \in \overline{G\cdot v}$ such that $\mu(w) = 0$.
This is a \emph{duality theorem} and (along with the Hilbert-Mumford criterion discussed below) should be thought of as a non-commutative version of linear programming duality, which arises when the group $G$ is commutative (see \cref{rem:duality} below, \cite{sury00}, and the proof of \cref{thm:kempf ness} in \cref{app:git}).
For a detailed discussion of moment maps, we refer the reader to~\cite{Woodward11,walter2014multipartite}.

\bigskip
We now return to our setting where $G = \SL(n_1) \times \cdots \times \SL(n_{d})$ acts on tensors in $V = \Ten(n_0,n_1,\dots,n_d)$.
We first define the notion of a scaling and then instantiate the Kempf-Ness theorem in the situation at hand:

\begin{definition}[Scaling]
A tensor $Y \in \Ten(n_0,n_1,\ldots, n_d)$ is called a (\emph{tensor})  \emph{scaling} of
$X \in \Ten(n_0,n_1\ldots,n_d)$ if $Y \in G \cdot X$.
\end{definition}

This notion of scaling generalizes the notions of operator scaling~\cite{gurvits2004,GGOW} and matrix scaling~\cite{Sink}.
It is immediate from the definition of the capacity that a tensor $X$ is in the null cone if and only if any of its scalings $Y$ lies in the null cone.

We now state the Kempf-Ness theorem for tensors:

\begin{restatable}[Duality, special case of~\cite{KN79}]{theorem}{KempfNess}\label{thm:kempf ness}
A tensor $X \in \Ten(n_0,\ldots,n_d)$ is \emph{not} in the null cone if and only if there exists $Y \in \overline{G \cdot X}$ such that the quantum marginals $\rho_1,\dots,\rho_d$ of $\rho=YY^\dagger/Y^\dagger Y$ of the last $d$ tensor factors are given by $\rho_i = I_{n_i} / n_i$.
\end{restatable}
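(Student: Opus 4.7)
The plan is to prove both directions via a standard Kempf--Ness argument, using the geodesic convexity of the log-norm-squared function along one-parameter subgroups generated by Hermitian matrices, together with a compactness argument for existence of a minimizer.

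For the forward direction, suppose $X$ is not in the null cone, so $c := \capac(X) > 0$. First I would show the infimum is attained on $\overline{G\cdot X}$: any minimizing sequence in $\overline{G\cdot X}$ is bounded in $V$, so extracting a convergent subsequence produces a point $Y \in \overline{G\cdot X}$ with $\|Y\|_2^2 = c > 0$. Since $G\cdot Y \subseteq \overline{G\cdot X}$, the point $Y$ minimizes $\|A\cdot Y\|_2^2$ over $A\in G$. To extract the moment-map condition, I would write a general element near the identity as $A = (e^{tH_1},\dots,e^{tH_d})$ where each $H_i$ is a traceless Hermitian matrix (the unitary directions do not change the norm, so the first-order condition in those directions is automatic). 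Using the identity $\|A\cdot Y\|_2^2 = \tr[(A_i^\dagger A_i)\rho_i^{\mathrm{un}}]$ where $\rho_i^{\mathrm{un}}$ is the unnormalized marginal of $YY^\dagger$, a direct computation gives
\begin{equation*}
\frac{d}{dt}\bigg|_{t=0} \|A(t)\cdot Y\|_2^2 = 2\sum_{i=1}^d \tr[H_i\, \rho_i^{\mathrm{un}}].
\end{equation*}
Vanishing of this derivative for all traceless Hermitian $H_i$ forces $\rho_i^{\mathrm{un}}$ to be a scalar multiple of $I_{n_i}$; taking traces (which all equal $\|Y\|_2^2$) determines the scalar as $\|Y\|_2^2/n_i$. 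Dividing by $\|Y\|_2^2 = Y^\dagger Y$ yields $\rho_i = I_{n_i}/n_i$ as desired.

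For the converse, suppose $Y \in \overline{G\cdot X}$ is nonzero with marginals $\rho_i = I_{n_i}/n_i$. It suffices to show $Y$ is a global minimizer of $A \mapsto \|A\cdot Y\|_2^2$ on $G$, since then $\capac(X) = \inf_{Z\in \overline{G\cdot X}} \|Z\|_2^2 \geq \|Y\|_2^2 > 0$. The key tool is geodesic convexity: for any tuple $H=(H_1,\dots,H_d)$ of traceless Hermitian matrices, the function $t\mapsto \log\|e^{tH}\cdot Y\|_2^2$ is convex in $t$. This follows from writing $\|e^{tH}\cdot Y\|_2^2 = \sum_\lambda e^{t\lambda} c_\lambda$ after diagonalizing the commuting Hermitian operators $H_i$ acting on the appropriate factors, with $c_\lambda \geq 0$; such log-sum-exp expressions are convex. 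Consequently, the moment-map condition (which is exactly the first-order criticality derived above) forces $t=0$ to be a global minimum along every such geodesic. Since every element of $G$ can be written as $U\cdot e^{H}$ with $U$ in the maximal compact $\prod_i SU(n_i)$ (which is norm-preserving on $V$) and $H$ a tuple of traceless Hermitian matrices, $Y$ is a global minimizer on all of $G$.

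The main obstacle is the rigorous justification of geodesic convexity used in the converse direction; it is not hard once one chooses a common eigenbasis for each $H_i$ factor-by-factor, but care is needed to ensure the decomposition really yields nonnegative coefficients on exponentials with linear exponents in $t$. The existence of the minimizer in the forward direction is routine but must be phrased carefully because $\overline{G\cdot X}$ is closed but not compact; this is handled by noting the minimizing sequence lies in a bounded, hence relatively compact, subset of $V$. The remaining computations (quantum marginal derivatives and the identification $\rho_i = I_{n_i}/n_i$) are straightforward applications of \cref{def:quantum marginal}.
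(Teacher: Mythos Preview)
Your proof is correct and follows essentially the same approach as the paper's: the forward direction is identical (differentiate the norm-squared at a minimizer in the orbit closure to extract the moment-map condition), and your converse via geodesic convexity of $t\mapsto\log\|e^{tH}\cdot Y\|_2^2$ unfolds into precisely the Jensen/concavity-of-log computation the paper carries out explicitly after eigendecomposing each $A_iA_i^{\dagger}$. The only difference is packaging: you invoke polar decomposition and convexity along one-parameter paths, whereas the paper directly bounds $\log\|A\cdot Z\|_2^2 \geq 0$ for arbitrary $A\in G$ by expanding it as a log-sum-exp over the same nonnegative tensor $T$ of coefficients with $1/n_i$ marginals.
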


\noindent To be self-contained, we give a proof of \cref{thm:kempf ness} in \cref{app:git}.

We note that the trace of $\rho=YY^\dagger/Y^\dagger Y$ is one.
This normalization is very natural from the quantum viewpoint.
In quantum theory, PSD matrices of unit trace describe quantum states of multi-partite systems and the quantum marginals describe the state of subsystems.
The condition that $\rho_i$ is proportional to the identity means that the state of the $k^\text{th}$ subsystem is maximally mixed or, for $\rho$ as above, that the $k^\text{th}$ system is maximally entangled with the rest.
We discuss applications of our result to quantum information theory in \cref{sec:quantum}.

\medskip

The condition that each $\rho_i$ is proportional to the identity matrix generalizes the notion of ``doubly stochastic'', which arises in the study of the left-right action/operator scaling~\cite{gurvits2004,GGOW}.
We will refer to it as \emph{``$d$-stochastic''}, which we define formally below.

\begin{definition}[$d$-stochastic]
  A tensor $Y \in \Ten(n_0, \ldots, n_d)$ is called \emph{$d$-stochastic} if the quantum marginals of $\rho=YY^\dagger/Y^\dagger Y$ of the last $d$ tensor factors are given by $\rho_i = I_{n_i} / n_i$ for $i\in[d]$.
\end{definition}

More explicitly, we can state the condition that $\rho_i=I_{n_i}/n_i$ as follows:
We want that the slices of the tensor $Y$ in the $k^{\text{th}}$ direction are pairwise orthogonal and that their norm squares are equal to $1/n_i$.
That is, for each $i\in[d]$ consider the order $d$ tensors $Y^{(1)}, \ldots, Y^{(n_i)} \in \Ten(n_0,\dots,n_{i-1},n_{i+1},\dots,n_d)$ defined as
$$
Y^{(j)}(j_0,\ldots, j_{i-1}, j_{i+1},\ldots, j_d) := Y(j_0, \ldots, j_{i-1}, j, j_{i+1}, \ldots, j_d);
$$
then the following should hold:
$$
\langle Y^{(j)}, Y^{(j')} \rangle = \frac{1}{n_i} \delta_{j,j'} .
$$
Here $\delta_{i,j}$ is the Kronecker delta function and the inner product is the Euclidean inner product of tensors.

\medskip

\Cref{thm:kempf ness} implies that $X$ is \emph{not} in the null cone if and only if we can find scalings $Y$ whose last $d$ quantum marginals are arbitrarily close to $I_{n_i}/n_i$.
We can therefore rephrase the null-cone problem as \emph{another, dual} optimization problem, where we seek to minimize the distance of the marginals to being maximally mixed.
This is captured by the following definitions:

\begin{definition}[Distance to $d$-stochasticity]\label{def:ds}
Let $Y \in \Ten(n_0,n_1,\ldots,n_d)$ be a tensor and $\rho=YY^\dagger/Y^\dagger Y$, with quantum marginals $\rho_1,\dots,\rho_{d}$ on the last $d$ systems.
Then we define the \emph{distance to $d$-stochasticity} $\ds(Y)$ as the (squared) distance between the marginals and $d$-stochastic marginals,
\[
\ds(Y) := \sum_{i=1}^d \norm{\rho_i - \frac{I_{n_i}}{n_i}}_F^2,
\]
where $\lVert M \rVert_F := (\tr M^\dagger M)^{1/2}$ is the Frobenius norm.
Following \cref{eq:dual-alt-min}, we further define the \emph{minimal distance to $d$-stochasticity} as
\[
\dds(X) := \inf_{A\in G} \, \ds(A\cdot X).
\]
\end{definition}

Using this language, \cref{thm:kempf ness} states that $X$ is in the null cone if and only if the minimal distance to $d$-stochasticity is nonzero.
We summarize the duality between the two optimization problems:
\begin{equation}\label{eq:nc duality}
\text {$X$ is in the null cone}
\quad\Longleftrightarrow\quad
\capac(X) = 0
\quad\Longleftrightarrow\quad
\dds(X) > 0
\end{equation}

\begin{remark}\label{rem:duality}
According to \cref{eq:nc duality}, for any tensor~$X$ \emph{exactly one} of the following two statements is true:
$X$ is in the null cone ($\capac(X)=0$) or
its orbit closure contains a $d$-stochastic tensor ($\dds(X)=0$).
Such dichotomies are well-known from the duality theory of linear programming (Farkas' lemma, Gordan's lemma, the duality between membership in a convex hull and the existence of separating hyperplanes, etc.).
In the case of the (commutative) group of diagonal matrices, one recovers precisely these linear programming dualities from the general noncommutative framework.
\end{remark}

\subsection{The tensor scaling algorithm}\label{subsec:scaling algo}

If $X$ is \emph{not} in the null cone then there exist scalings~$Y$ such that $\ds(Y)$ is arbitrarily small.
The main technical result of this paper is an algorithm that finds such scalings for any fixed $\eps>0$.
It is a generalization of the results obtained for matrix and operator scaling in~\cite{LSW,GurYianilos,gurvits2004,GGOW}.
Recall that we use~$n$ to denote the total number of coordinates of the tensor, $n = n_0 \cdots n_d$.

\begin{restatable}{theorem}{MainTheoremPrecise}\label{thm:main precise}
Let $X \in \Ten(n_0,\dots,n_d)$ be a (nonzero) tensor whose entries are integers of bitsize no more than $b$.
Let $\ell = \min_{i \ge 1} n_i$ and suppose that $\varepsilon \le d/(\max_{i\geq1} n_i^2)$.
Then \cref{alg:scaling} with $T \geq \frac {18\ln2} {\ell \eps} d \left(  b + \log n \right)$ iterations either identifies that $X$ is in the null cone or outputs a scaling $Y \in G\cdot X$ such that $\ds(Y)\leq\varepsilon$.
\end{restatable}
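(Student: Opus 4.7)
The plan is a potential argument on $\Phi^{(t)} := \|Y^{(t)}\|_2^2$, in the spirit of the operator-scaling analysis of \cite{GGOW}. I would structure it in three steps, with the heavy invariant-theoretic lifting done once, offline.

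\emph{Step 1: Sandwich $\Phi^{(t)}$.} The initial value satisfies $\Phi^{(0)} = \|X\|_2^2 \leq n\cdot 2^{2b}$, since $X$ has at most $n$ integer entries each of absolute value at most $2^b$. For the lower bound, I would invoke the explicit generating set of invariants to be constructed via Schur--Weyl duality in \cref{sec:scaling}: when $X$ is not in the null cone, some invariant polynomial $P$ of controlled degree and integer coefficients of controlled bit length satisfies $|P(X)|\geq 1$. Combining the $G$-invariance of $P$ with a coarse bound of the form $|P(Y)|\leq \|P\|\cdot \|Y\|_2^{\deg P}\cdot \poly(n)$ yields $\capac(X)\geq \tau:=2^{-c(b+\log n)}$ for a suitable constant $c$. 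Since $\|A\cdot X\|_2^2\geq \capac(X)$ for every $A\in G$, we have $\Phi^{(t)}\geq \tau$ throughout, provided $X$ is not in the null cone. The algorithm's null-cone test is therefore simply: if ever $\Phi^{(t)}<\tau$, output ``null cone''.

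\emph{Step 2: Per-step multiplicative progress.} Suppose the algorithm has not terminated at step $t$, i.e.\ $\ds(Y^{(t)})>\varepsilon$. Writing $\ds(Y^{(t)})=\sum_i\delta_i$ with $\delta_i:=\|\rho_i-I_{n_i}/n_i\|_F^2$, the index $i$ selected by \cref{alg:scaling} (maximizing $\delta_i$) satisfies $\delta_i\geq\varepsilon/d$. The closed-form minimizer derived around \cref{eq:single optimization} multiplies $\Phi^{(t)}$ by exactly $n_i\det(\rho_i)^{1/n_i}$. The key analytic ingredient is the quadratic bound
\[
n\,\det(\rho)^{1/n}\;\leq\;\exp\!\bigl(-\tfrac{n}{4}\,\|\rho-I_n/n\|_F^2\bigr),
\]
valid for density matrices $\rho$ on $\C^n$ with $\|\rho-I_n/n\|_F^2\leq 1/n^2$; the hypothesis $\varepsilon\leq d/\max_i n_i^2$ places $\delta_i$ safely inside this regime. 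The inequality itself follows from writing the eigenvalues as $\lambda_j=1/n+\mu_j$, applying $\log(1+x)\leq x-x^2/4$ (valid for $|x|\leq 1/2$), and summing over $j$. Since $n_i\geq\ell$, each such iteration shrinks $\log \Phi^{(t)}$ by at least $\ell\varepsilon/(4d)$.

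\emph{Step 3: Termination.} After $T$ iterations without termination,
\[
\log_2\Phi^{(T)}\;\leq\; \log_2\Phi^{(0)}-T\cdot\frac{\ell\varepsilon}{4d\ln 2}\;\leq\;(2b+\log_2 n)-T\cdot\frac{\ell\varepsilon}{4d\ln 2}.
\]
For $T$ equal to a suitable constant times $d(b+\log n)/(\ell\varepsilon)$, the right-hand side falls below $\log_2\tau$; tightening the constants of Steps~1--2 produces the claimed threshold $T\geq\frac{18\ln 2}{\ell\varepsilon}\,d(b+\log n)$. At that point either the algorithm has already declared ``null cone'' (because $\Phi^{(t)}<\tau$ at some step), or it has already exited with $\ds(Y^{(t)})\leq\varepsilon$, since otherwise Step~2 would drive $\Phi^{(T)}$ below $\tau$ and contradict the capacity lower bound of Step~1.

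I expect the principal obstacle to be the capacity lower bound in Step~1. The potential bookkeeping in Step~3 and the matrix-calculus inequality in Step~2 are essentially standard; the substantive part, genuinely beyond the $d=2$ case of \cite{GGOW}, is producing enough invariants via Schur--Weyl duality with sufficient control on the bit length of their coefficients, so that a quantitative implication ``$X\notin$ null cone $\Rightarrow$ $\capac(X)\geq \tau$'' with $\log_2(1/\tau)=O(b+\log n)$ can actually be read off.
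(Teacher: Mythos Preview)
Your three-step potential argument is exactly the paper's strategy, and you have correctly identified that the hard part is the capacity lower bound via explicit invariants. Two points, one substantive and one minor.

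\textbf{The gap in Step~2.} Your quadratic inequality
\[
n\,\det(\rho)^{1/n}\ \le\ \exp\!\Bigl(-\tfrac{n}{4}\,\lVert\rho-I_n/n\rVert_F^2\Bigr)
\]
is stated under the hypothesis $\lVert\rho-I_n/n\rVert_F^2\le 1/n^2$, and you claim that the assumption $\varepsilon\le d/\max_i n_i^2$ ``places $\delta_i$ safely inside this regime''. This is backwards. The algorithm selects the index $i$ with the \emph{largest} $\delta_i=\lVert\rho_i-I_{n_i}/n_i\rVert_F^2$, and the only thing the non-termination condition gives you is the \emph{lower} bound $\delta_i\ge \varepsilon/d$. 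Nothing prevents $\delta_i$ from being of order $1$, far outside the regime where your Taylor bound $\log(1+x)\le x-x^2/4$ is valid (indeed that bound fails badly for $x$ large and positive, which can occur since the eigenvalues of $n_i\rho_i$ may be as large as $n_i$). The paper handles this with \cref{cor:lsw}: if $\sum_j x_j=n$ and $\sum_j(x_j-1)^2\ge\beta$ with $\beta\le 1$, then $\prod_j x_j\le 2^{-\beta/(6\ln 2)}$. The point is that one applies this with the \emph{truncated} parameter $\beta=n_i^2\varepsilon/d$ (not with $n_i^2\delta_i$), and the hypothesis $\varepsilon\le d/\max_i n_i^2$ is exactly what makes $\beta\le 1$. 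This yields the per-step decrease factor $2^{-n_i\varepsilon/(6d\ln 2)}$ and hence the constant $18\ln 2$.

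\textbf{Minor.} Your capacity bound $\tau=2^{-c(b+\log n)}$ is weaker than what the paper actually obtains. Because the Schur--Weyl invariants $P$ satisfy $\lvert P(Y)\rvert\le (n_1\cdots n_d)^m\lVert Y\rVert_2^m$ (\cref{prp:bound on invariants}) and $P(X)$ is a nonzero integer, one gets $\capac(X)\ge 1/(n_1\cdots n_d)^2\ge 1/n^2$, \emph{independent of $b$} (\cref{prp:inf norm on orbit}). This is what makes the stated iteration bound come out with a single $(b+\log n)$ factor rather than a multiple of it. Also, the algorithm as written does not test $\Phi^{(t)}<\tau$; it checks singularity of marginals up front and otherwise declares ``null cone'' only after $T$ rounds, so your Step~3 should be phrased as a proof by contradiction on $X\notin$ null cone rather than as an online threshold test.
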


To motivate \cref{alg:scaling}, note that, for every $i\in[d]$, we would like the quantum marginal of the $i^\text{th}$ system to be proportional to the identity matrix.
Suppose the quantum marginal on the $i^{\text{th}}$ system is $\rho_i$.
Then by acting on this system by any $A_i\in\SL(n_i)$ proportional to $\rho_i^{-1/2}$ we can precisely arrange for the quantum marginal on $i^{\text{th}}$ system to be proportional to the identity matrix\footnote{It is not hard to see that if $X$ is not in the null cone then $\rho_i$ is invertible (\cref{lem:singular marginal implies null cone}).}.
(But this might disturb the quantum marginals on the other systems!)
An appropriate choice of $A_i$ that ensures that it has determinant one is $A_i = \det(\rho_i)^{1/2n_i} \rho_i^{-1/2}$.
Thus we find that the alternating minimization heuristics in \cref{eq:scaling step} that we previously derived for $\capac(X)$ is equally natural from the perspective of the dual problem of minimizing $\dds(X)$!
Remarkably, iterating this operation in an appropriate order of subsystems $i$ does ensure that \emph{all} the quantum marginals get arbitrarily close to the normalized identity matrices -- provided that $X$ is not in the null cone.
(If $X$ is in the null cone, then it will lead to a sequence of scalings of norm arbitrarily close to zero, certifying that $X$ is in the null cone.)

\begin{Algorithm}
\textbf{Input}: A tensor $X$ in $\Ten(n_0,n_1,\ldots,n_d)$ with integer entries (specified as a list
of entries, each encoded in binary, with bit size $\le b$). \\[.3ex]

\textbf{Output}: Either the algorithm correctly identifies that $X$ is in the null cone, or it outputs a scaling $Y$ of $X$ such that $\ds(Y) \leq \eps$. \\[.3ex]

\textbf{Algorithm}:
\begin{enumerate}
\item\label{it:scaling step 1} If any of the quantum marginals of $X$ is singular, then output that the tensor $X$ is in the null cone and return.
Otherwise set $Y^{(1)} = X$ and proceed to step~\ref{it:scaling step 2}.
\item\label{it:scaling step 2} For $t=1,\dots,T = \poly(n, b , 1/\eps)$, repeat the following:
\begin{itemize}
\item Compute the quantum marginals $\rho_i$ of $Y^{(t)} Y^{(t),\dagger} / Y^{(t),\dagger} Y^{(t)}$ and find the index $i\in[d]$ for which $\norm{\rho_i - \frac {I_{n_i}} {n_i}}_F^2$ is largest.
If $\norm{\rho_i - \frac {I_{n_i}} {n_i}}_F^2 < \eps/d$, output $Y^{(t)}$ and return.
\item Otherwise, set $Y^{(t+1)} \leftarrow \det(\rho_i)^{1/2n_i} \rho_i^{-1/2} \cdot Y^{(t)}$.
\end{itemize}
\item\label{it:scaling step 3} Output that the tensor $X$ is in the null cone.
\end{enumerate}
\caption{Scaling algorithm for the null-cone problem}\label{alg:scaling}
\end{Algorithm}

\begin{remark}[Rational Entries]
Notice that we required our input tensor $X$ to have integer entries.
In general, an input tensor $X$ could have rational entries.
If this is the case, we can simply multiply $X$ by the denominators to obtain an integral tensor $X'$, on which we can then apply our algorithm above.
Since multiplying a tensor by a nonzero number does not change the property of being in the null cone,
our algorithm will still be correct.
The following remarks discuss the changes in bit complexity of our algorithm.
In this case, the size of $X'$ is at most $b \cdot n$, and and therefore the bound for the number of iterations $T$ is still $\poly(n, b, 1/\eps)$, as we wanted.
\end{remark}

When analyzing bit complexity, \Cref{alg:scaling} is not directly applicable, since it computes inverse square roots of matrices.
Even if the entries of the square roots were always rational, the algorithm would also iteratively multiply rationals with very high bit complexity.
We can handle these numerical issues by truncating the entries of the
scalings $Y^{(t)}$ of $X$, as well as the quantum marginals $\rho_k$,
to $\poly(n, b, \frac{1}{\eps})$ many bits after the decimal point.
Then, in a similar way as done in~\cite{gurvits2004,GGOW}, we can prove that there is essentially no change in the convergence required in the number of iterations $T$.
Since each arithmetic operation will be done with numbers of $\poly(n, b, \frac1\eps)$ many bits, this truncation ensures that we run in polynomial time.
As a consequence, we obtain our main theorem, which we had already announced in the introduction:

\MainTheorem*

We remark that for $d=2$, our algorithm in essence reduces to the algorithm of~\cite{GurYianilos,gurvits2004,gur_Waerden,GGOW}; for $n_0=1$, it refines the algorithm proposed previously in~\cite{VDB03} without a complexity analysis.
Indeed, \Cref{thm:main} is a generalization of the main theorem of~\cite{GGOW} on operator scaling (or the left-right action).
There it was enough to take $\eps = 1/n$ to be sure that the starting tensor is not in the null cone.

In our much more general setting it is still true that there exists some $\eps=\eps(n)>0$ such that $\ds(Y)>\eps$ for any $Y$ in the null cone.
Unfortunately, in our general setting, $\eps$ will be exponentially small (see \cref{lem:gap} in the next section).
Therefore, \cref{thm:main} with this choice of $\eps$ gives an exponential time algorithm for deciding the null-cone problem:

\begin{theorem}[Null-cone problem]\label{thm:null cone exp scaling}
There is a $\poly(b)\exp(O(n \log n))$
time deterministic algorithm
$($\cref{alg:scaling} with $\eps=\exp(-O(n\log n)))$ that,
given a tensor $X \in \Ten(n_0,n_1,\ldots,n_d)$ with integer
coordinates of bit size at most~$b$,
decides whether $X$ is in the null cone.
\end{theorem}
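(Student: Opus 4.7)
The plan is to derive this decision algorithm as a direct corollary of the FPTAS of \cref{thm:main precise}, by choosing the accuracy parameter $\eps$ just below the ``gap'' provided by \cref{lem:gap}. Concretely, \cref{lem:gap} (announced in the paragraph preceding the theorem) asserts that $\ds(Y)\geq \eps_0 := \exp(-O(n\log n))$ for every tensor $Y$ that lies in the null cone. Since the null cone is $G$-invariant, every scaling $A\cdot X$ of a null-cone tensor $X$ is itself in the null cone, so this bound immediately propagates to the orbit-infimum: if $X$ is in the null cone then $\dds(X)\geq \eps_0$. Combined with the duality \eqref{eq:nc duality}, this gives a sharp quantitative dichotomy --- either $X$ is not in the null cone and $\dds(X)=0$, or $X$ is in the null cone and $\dds(X)\geq \eps_0$.

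With this dichotomy in hand, I would run \cref{alg:scaling} with $\eps := \exp(-Cn\log n)$ for a constant $C$ large enough that $\eps<\eps_0$. Correctness follows immediately from \cref{thm:main precise}. If $X$ is not in the null cone, the algorithm terminates by outputting a scaling $Y$ with $\ds(Y)<\eps$, which is a genuine (and consistent) certificate. If $X$ is in the null cone then, by the gap, \emph{no} scaling $Y$ satisfies $\ds(Y)<\eps$, so the ``output a scaling'' branch of \cref{thm:main precise} cannot fire; the algorithm is forced into its other branch and correctly reports that $X$ lies in the null cone. Thus no mistake is possible in either direction.

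For the running time, the iteration bound of \cref{thm:main precise} gives $T = O(d(b+\log n)/(\ell\eps)) = \poly(b)\cdot\exp(O(n\log n))$, and the bit-complexity bookkeeping already carried out to derive \cref{thm:main} from \cref{thm:main precise} --- truncating each iterate $Y^{(t)}$ and each marginal $\rho_i$ to $\poly(n,b,1/\eps)$ bits after the point --- shows that each iteration costs $\poly(b)\cdot\exp(O(n\log n))$ bit operations. Multiplying these yields the claimed total time $\poly(b)\cdot\exp(O(n\log n))$.

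The hard part of this argument is not the present theorem but the exponential gap $\ds(Y)\geq \exp(-O(n\log n))$ supplied by \cref{lem:gap}; once that bound is available, the decision procedure is essentially a one-line combination of the FPTAS with the duality \eqref{eq:nc duality}. Establishing the gap requires exhibiting invariant polynomials of controlled degree and coefficient bit-size that quantitatively separate the null cone from its complement, which is the substantive invariant-theoretic content developed in the subsequent subsections and not something one has to redo for this corollary.
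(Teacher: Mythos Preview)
Your argument is correct and matches the paper's approach exactly: invoke \cref{lem:gap} to fix $\eps$ below the $\exp(-O(n\log n))$ gap and then run the FPTAS of \cref{thm:main precise}/\cref{thm:main}, reading off the claimed running time from the iteration bound and the per-iteration bit-complexity analysis.

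One small correction to your closing commentary (it does not affect the proof): the gap in \cref{lem:gap} is \emph{not} established via invariant polynomials. It is proved through the Hilbert--Mumford side, by lower-bounding the instability $\ins(X)$ using bit-size bounds for solutions of the linear system defining deficiency (Schrijver's theorem, see \cref{app:git}), and then invoking $\sqrt{\dds(X)}\geq\ins(X)$ from \cref{lem:ins vs ds}. The invariant-polynomial coefficient bounds of \cref{subsec:invariants} are used for the \emph{other} ingredient --- the capacity lower bound \cref{prp:inf norm on orbit} that underlies the FPTAS analysis in \cref{thm:main precise}.
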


Thus our algorithm does not solve the null-cone problem in polynomial time for general tensor actions, and this remains an excellent open question!

Nevertheless, \cref{thm:main} as such, with its promise that $\ds(Y)<\eps$, is of interest beyond merely solving the null-cone problem.
It has applications to quantitative notions of instability in geometric invariant theory, to a form of multipartite entanglement distillation in quantum information theory, and to questions about the slice rank of tensors, which underpinned recent breakthroughs in additive combinatorics.
We will discuss these in \cref{sec:instability,sec:quantum,sec:slice rank} below.

\subsection{Analysis sketch}\label{subsec:scaling analysis}

To analyze our algorithm and prove \cref{thm:main precise}, we follow a three-step argument similar to the analysis of the algorithms for matrix scaling and operator scaling in~\cite{LSW,gurvits2004,GGOW} once we have identified the appropriate potential function and distance measure.
In the following we sketch the main ideas and we refer to \cref{subsec:invariants,subsec:detailed analysis algo} for all technical details.

As potential function, we will use the \emph{norm squared} of the tensor, denoted $f(Y)=\norm{Y}_2^2$, and the distance measure is $\ds(Y)$ defined above in \cref{def:ds}.
Note that these the two functions are exactly dual to each other in our noncommutative optimization framework (see \cref{eq:nc duality,eq:alg-alt-min,eq:dual-alt-min})!

\medskip

The following two properties of the capacity will be crucial for the analysis:
\begin{enumerate}
\item[A.] $\capac(X) \ge 1/(n_1\cdots n_{d})^2$ if $X \in \Ten(n_0,\ldots,n_d)$ is a tensor with integral entries that is not in the null cone (\cref{prp:inf norm on orbit}).
\item[B.] $\capac(X)$ is invariant under the action of the group $ \SL(n_1) \times \cdots \times \SL(n_{d})$.
\end{enumerate}
Using the above properties, a three-step analysis follows the following outline:
\begin{enumerate}
\item An upper bound on $f(Y^{(1)}) = f(X)$ from the input size.
\item As long as $\ds(Y)\geq\eps$, the norm squared decreases by a factor $2^{-\frac{\ell \eps}{6 d \ln2}}$ in each iteration:
$f(Y^{(t+1)}) \leq 2^{-\frac{\ell \eps}{6 d \ln2}} f(Y^{(t)})$, where we recall that $\ell=\min_{i\geq1}n_i$.
\item A lower bound on $f(Y^{(t)})$ for all $t$. This follows from properties A and B above.
\end{enumerate}
The above three steps imply that we must achieve $\ds(Y^{(t)})<\eps$ for some $t \in [T]$.

Step $1$ is immediate and step $2$ follows from the $G$-invariance of the capacity (property B) and a quantative form the AM-GM inequality (\cref{cor:lsw}).

Step $3$, or rather the lower bound on $\capac(X)$~(property A) is the technically most challenging part of the proof.
It is achieved by quantifying the basic argument given on p.~\pageref{page:qualitative}.
The main tool required to carry out this analysis is the existence of a set of invariant polynomials with ``small" integer coefficients that spans the space of invariant polynomials.
We do so by giving an ``explicit" construction of such a set via Schur-Weyl duality (\cref{subsec:invariants}).
Remarkably, we do not need to use the known exponential bounds on the degree of generating invariants in~\cite{derksen2001polynomial}.

\medskip

We give an alternative lower bound on $\capac(X)$ using Cayley's Omega-process (\cref{sec:cayley}).
This proof is more general (it applies to arbitrary actions of product of $\SL$'s) but gives a weaker bound (although sufficient to yield a polynomial time analysis for the algorithm).
Here we do need to rely on recent exponential bounds on the degree of generating invariants by Derksen.

The size of the integer coefficients of a spanning set of invariant polynomials is an interesting invariant theoretic quantity that appears to not have been studied in the invariant theory literature.
It is crucial for our analysis, and we believe it could be of independent interest in computational invariant theory.

\subsection{Capacity bound from invariant polynomials}\label{subsec:invariants}

If a tensor~$X$ is not in the null cone then its $G$-orbit remains bounded away from zero, i.e., $\capac(X)>0$.
In this section we will derive a quantitative version of this statement (\cref{prp:inf norm on orbit}).
As mentioned in the preceding section, this bound features as an important ingredient to the analysis our scaling algorithm (see \cref{subsec:detailed analysis algo} below).
To prove the bound, we use the concrete description of invariants in a crucial way.

We start by deriving a concrete description of the space of $G$-invariant polynomials on $V=\Ten_{n_0,\dots,n_d}(\C)$ of fixed degree~$m$, denoted $\C[V]^G_m$; see, e.g., \cite{procesi2007lie,burgisser2013explicit} for similar analyses.
Any such polynomial can be written as $P(X) = p(X^{\ot m})$, where $p$ is a $G$-invariant linear form on $V^{\ot m}$.\footnote{In general, $p$ is not unique. It would be unique if we required $p$ to be an element in the symmetric subspace.}
We are thus interested in the vector space
\begin{align*}
\bigl( ( V^{\ot m} )^*\bigr)^G
= ( ( \C^{n_0} )^{\ot m} )^* \ot
   \bigl( ( ( \C^{n_1} )^{\ot m} )^* \bigr)^{\SL(n_1)} \ot\ldots\ot \bigl( ( ( \C^{n_{d}} )^{\ot m} )^* \bigr)^{\SL(n_{d})} .
\end{align*}
Here $V^*$ is the dual vector space of $V$, i.e., the space of linear functions $f: V \rightarrow \C$. The action of $G$ on $V$ induces an action of $G$ on $V^*$ as follows: $(g \cdot f)(v) = f(g^{-1} \cdot v)$. The notation $W^G$ denotes the subspace of $W$ invariant under
the specified action of the group $G$ on $W$. We first focus on a single tensor factor.
Clearly, the $n$ by $n$ determinant is an $\SL(n)$-invariant linear form in $( (\C^n)^{\ot n} )^*$.
More generally, if $m$ is a multiple of $n$, then for each permutation $\pi\in S_m$,
\begin{equation}\label{eq:multiple determinants}
p_{n,\pi}(v_1 \ot\dots\ot v_m) = \det(v_{\pi(1)},\dots,v_{\pi(n)}) \det(v_{\pi(n+1)},\dots,v_{\pi(2n)}) \cdots \det(v_{\pi(m-n+1)},\dots,v_{\pi(m)})
\end{equation}
defines an $\SL(n)$-invariant linear form in $((\C^n)^{\ot m})^*$.
In fact:

\begin{lemma}\label{lem:single tensor factor}
If $n$ divides $m$ then the linear forms $p_{n,\pi}$ for $\pi\in S_m$ span the space of $\SL(n)$-invariant forms in $( ( \C^{n} )^{\ot m} )^*$.
If $n$ does not divide $m$ then no nonzero $\SL(n)$-invariant linear forms exist.
\end{lemma}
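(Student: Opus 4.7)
My plan is to handle the two directions of the lemma separately. For the non-existence claim when $n \nmid m$, I would exploit the scalar subgroup of $\SL(n)$: for any primitive $n$-th root of unity $\zeta$, the matrix $\zeta I_n$ lies in $\SL(n)$ and acts on $(\C^n)^{\otimes m}$ by the scalar $\zeta^m$. Any $\SL(n)$-invariant linear form $p$ on $(\C^n)^{\otimes m}$ must therefore satisfy $p(X) = \zeta^m p(X)$ for every $X$, forcing $\zeta^m = 1$ and hence $n \mid m$. So if $n \nmid m$ no nonzero invariant linear form can exist.

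Now assume $n \mid m$ and set $k = m/n$. Each $p_{n,\pi}$ is manifestly $\SL(n)$-invariant as a product of $n \times n$ determinants, so it remains to show the $p_{n,\pi}$ span. The plan is to invoke Schur--Weyl duality, which decomposes $(\C^n)^{\otimes m}$ as a $\GL(n) \times S_m$-module into $\bigoplus_{\lambda \vdash m,\ \ell(\lambda) \le n} V_\lambda \otimes [\lambda]$, where $V_\lambda$ is the irreducible polynomial $\GL(n)$-module of highest weight $\lambda$ and $[\lambda]$ the associated Specht module. Restricted to $\SL(n)$, $V_\lambda$ is the trivial representation precisely when $\lambda$ is the rectangle $(k,k,\ldots,k)$ of height $n$, since the other polynomial $\GL(n)$-irreps remain nontrivial on $\SL(n)$. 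Consequently
\[
\bigl(((\C^n)^{\otimes m})^*\bigr)^{\SL(n)} \;\cong\; [\lambda_0]^*, \qquad \lambda_0 := (k^n),
\]
as $S_m$-modules, i.e.\ a single irreducible $S_m$-representation.

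The remaining step is to verify that $p_{n,e}$ is a nonzero element of this subspace and that the family $\{p_{n,\pi}\}_{\pi \in S_m}$ is exactly its $S_m$-orbit. Nonvanishing of $p_{n,e}$ follows by evaluating on $(e_1 \otimes \cdots \otimes e_n)^{\otimes k}$, which yields $1$. The natural $S_m$-action on $((\C^n)^{\otimes m})^*$ permutes tensor factors and, by inspection of \eqref{eq:multiple determinants}, sends $p_{n,e}$ to $p_{n,\pi}$, so the $p_{n,\pi}$ together fill out the $S_m$-orbit of $p_{n,e}$. Since the ambient invariant subspace is an irreducible $S_m$-representation containing the nonzero vector $p_{n,e}$, its $S_m$-span must be the entire subspace, which proves the claim. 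The main conceptual step is the Schur--Weyl reduction: once it identifies the invariants with a single irreducible $S_m$-module, the spanning question collapses to exhibiting a single nonzero invariant, which is trivial.
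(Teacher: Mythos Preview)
Your proof is correct and follows essentially the same route as the paper: both invoke Schur--Weyl duality to identify the space of $\SL(n)$-invariant forms with a single irreducible $S_m$-module (you spell out that it corresponds to the rectangular partition $(k^n)$), then observe that the $p_{n,\pi}$ form the $S_m$-orbit of the nonzero vector $p_{n,e}$ and hence span. Your treatment of the $n\nmid m$ case via the center $\zeta I_n\in\SL(n)$ is a nice explicit elaboration of what the paper subsumes under the Schur--Weyl statement.
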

\begin{proof}
The space $((\C^n)^{\ot m})^*$ is not only a representation of $\SL(n)$, which acts by $m$-fold tensor powers, but also of the permutation group~$S_m$, which acts by permuting the $m$ tensor factors.
Both actions clearly commute, so the subspace of $\SL(n)$-invariant forms is again a representation of~$S_m$.
A fundamental result in representation theory known as Schur-Weyl duality implies that the space of $\SL(n)$-invariant forms is in fact an \emph{irreducible} representation of $S_m$ if $n$ divides $m$ (and that otherwise it is zero).
This implies that the space of $\SL(n)$-invariant forms is spanned by the $S_m$-orbit of any single nonzero $\SL(n)$-invariant form.
It is clear that the linear forms $p_{n,\pi}$ form a single $S_m$-orbit; since $p_{n,\operatorname{id}}(e_1\ot\dots\ot e_n\ot\dots\ot e_1\ot\dots\ot e_n) \neq 0$ this establishes the lemma.
\end{proof}

Lastly, we denote by $\varepsilon_{i_1,\dots,i_m}$ the dual basis of the standard product basis of $(\C^{n_0})^{\ot m}$, i.e., $\varepsilon_{i_1,\dots,i_m}(e_{j_1}\ot\dots\ot e_{j_m}) = \delta_{i_1,j_1}\cdots\delta_{i_m,j_m}$.
As a direct consequence of \cref{lem:single tensor factor} and our preceding discussion, we find that the polynomials
\begin{align}\label{eq:spanning invariants}
P(X) = (\eps_{i_1,\dots,i_m} \ot p_{n_1,\pi_1}\ot\dots\ot p_{n_{d},\pi_{d}})(X^{\ot m}) ,
\end{align}
where $\pi_1,\dots,\pi_{d}\in S_m$ and $i_1,\dots,i_m\in[n_0]$,
span the space of $G$-invariant polynomials of degree $m$ on $V$ if $n_1|m,\dots,n_{d}|m$
(otherwise there exist no $G$-invariant polynomials).

This concrete description of $\C[V]^G_m$ allows us to establish a fundamental bound on invariants:

\begin{proposition}\label{prp:bound on invariants}
The vector space $\C[V]^G_m$ of $G$-invariant homogeneous polynomials is nonzero only if $n_1|m,\dots,n_{d}|m$.
In this case, it is spanned by the polynomials $P(X)$ defined in~\eqref{eq:spanning invariants}, where $\pi_1,\dots,\pi_{d}\in S_m$ and $i_1,\dots,i_m\in[n_0]$.
These are polynomials with integer coefficients and they satisfy the bound
\[ \abs{P(X)} \leq (n_1\cdots n_{d})^m \, \norm{X}^m_2. \]
\end{proposition}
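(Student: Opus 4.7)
The first two structural assertions follow almost immediately from the preceding discussion. By applying \cref{lem:single tensor factor} to each of the $d$ actions of $\SL(n_i)$ separately, the space of $G$-invariant linear forms on $V^{\otimes m}$ decomposes as the tensor product
\[
   ((\C^{n_0})^{\otimes m})^* \otimes \bigotimes_{i=1}^{d} \bigl(((\C^{n_i})^{\otimes m})^*\bigr)^{\SL(n_i)},
\]
which is zero unless $n_i \mid m$ for every $i\in[d]$, in which case it is spanned by the forms $\varepsilon_{i_1,\dots,i_m} \otimes p_{n_1,\pi_1} \otimes \dots \otimes p_{n_d,\pi_d}$. Integrality of the coefficients of $P$ is immediate, since $\varepsilon_{i_1,\dots,i_m}$ has coefficients in $\{0,1\}$ and every determinant in~\eqref{eq:multiple determinants} expands to a sum of monomials with coefficients $\pm 1$, with tensor products preserving integrality.

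For the analytic bound, my plan is to apply Cauchy--Schwarz after regrouping $V^{\otimes m}$ so that the $m$ copies of each $\C^{n_i}$ sit together, via the canonical shuffle $V^{\otimes m} \cong \bigotimes_{i=0}^{d} (\C^{n_i})^{\otimes m}$. Under this identification, the linear form $L := \varepsilon_{i_1,\dots,i_m} \otimes p_{n_1,\pi_1}\otimes\dots\otimes p_{n_d,\pi_d}$ lives naturally on the right and, via the standard Hermitian inner product, corresponds to a vector in $V^{\otimes m}$ of the same Euclidean norm, satisfying $P(X) = \langle L, X^{\otimes m}\rangle$. Cauchy--Schwarz then yields $|P(X)| \le \|L\|_2 \cdot \|X^{\otimes m}\|_2 = \|L\|_2 \cdot \|X\|_2^m$.

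It remains to bound $\|L\|_2$. Because $L$ factorizes across disjoint tensor slots, $\|L\|_2 = \|\varepsilon_{i_1,\dots,i_m}\|_2 \cdot \prod_{i=1}^{d} \|p_{n_i,\pi_i}\|_2$, and the first factor equals~$1$. For each~$i$, the form $p_{n_i,\pi_i}$ is, up to a permutation of the $m$ tensor coordinates (which is an isometry), the $(m/n_i)$-fold tensor power of the Levi-Civita form on $(\C^{n_i})^{\otimes n_i}$, whose squared norm is $n_i!$; therefore $\|p_{n_i,\pi_i}\|_2 = (n_i!)^{m/(2n_i)} \le n_i^{m/2}$, where the inequality uses $n_i! \le n_i^{n_i}$. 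Multiplying gives $\|L\|_2 \le (n_1\cdots n_d)^{m/2}$, which is in fact stronger than the claimed bound.

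The only non-routine ingredient is the bookkeeping in the tensor-factor reorganization: one must verify that under the canonical shuffle $V^{\otimes m} \cong \bigotimes_{i=0}^{d} (\C^{n_i})^{\otimes m}$ the form $L$ genuinely is a tensor product of forms on disjoint slots, so that the Euclidean norm factorizes cleanly. Once this is set up, the proof reduces to a one-line Cauchy--Schwarz estimate combined with the elementary inequality $(n!)^{1/n}\le n$.
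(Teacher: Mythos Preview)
Your proof is correct and in fact sharpens the paper's bound. The paper argues by brute-force expansion: write $X=\sum_{j_1,\dots,j_d} v_{j_1,\dots,j_d}\otimes e_{j_1}\otimes\dots\otimes e_{j_d}$, expand $X^{\otimes m}$ as a sum over tuples of maps $J_i\colon[m]\to[n_i]$, and apply the form term by term. Each of the $(n_1\cdots n_d)^m$ summands is a product of a factor of absolute value at most $\|X\|_2^m$ (coming from $\varepsilon_{i_1,\dots,i_m}$) with $d$ factors in $\{0,\pm1\}$ (the $p_{n_i,\pi_i}$ evaluated on standard basis vectors are determinants of $0/1$ matrices). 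The triangle inequality then gives $(n_1\cdots n_d)^m\|X\|_2^m$.

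Your route is different and cleaner: after the shuffle isometry $V^{\otimes m}\cong\bigotimes_i(\C^{n_i})^{\otimes m}$ you invoke Cauchy--Schwarz and the multiplicativity of the Euclidean norm over disjoint tensor slots, together with the exact value $\|p_{n,\pi}\|_2=(n!)^{m/(2n)}$ for the Levi-Civita-type form. This yields $|P(X)|\le(n_1\cdots n_d)^{m/2}\|X\|_2^m$, a square-root improvement over the paper's estimate; propagated into \cref{prp:inf norm on orbit} it would give $\capac(X)\ge 1/(n_1\cdots n_d)$ rather than $1/(n_1\cdots n_d)^2$. The paper's expansion has the minor virtue of avoiding any Riesz/shuffle bookkeeping, but your argument is shorter and strictly stronger.
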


\begin{proof}
The determinant is a homogeneous polynomial with integer coefficients and it is clear that this property is inherited by the polynomials $P(X)$, so it remains to verify the bound.
Let $X\in V=\Ten_{n_0,\dots,n_d}(\C)$ be an arbitrary tensor.
We expand
$X = \sum_{j_1=1}^{n_1}\dots\sum_{j_d=1}^{n_d} v_{j_1,\dots,j_d} \ot e_{j_1} \ot \dots \ot e_{j_d}$,
where the $v_{j_1,\dots,j_d}$ are vectors in $\C^{n_0}$.
Note that the euclidean norm of
$\eps_{i_1,\dots,i_m} (\bigotimes_{\alpha=1}^m v_{j_1,\ldots,j_d})$
is bounded by $\|X\|_2^m$.
We obtain
\begin{align*}
X^{\ot m}
= \sum_{J_1\colon[m]\to[n_1]}\dots\sum_{J_d\colon[m]\to[n_d]} \left( \bigotimes_{\alpha=1}^m v_{J_1(\alpha),\dots,J_d(\alpha)} \right) \ot \left( \bigotimes_{\alpha=1}^m e_{J_1(\alpha)} \right) \ot\dots\ot \left( \bigotimes_{\alpha=1}^m e_{J_d(\alpha)} \right) ,
\end{align*}
and so
\begin{align*}
P(X)
&= (\eps_{i_1,\dots,i_m} \ot p_{n_1,\pi_1}\ot\dots\ot p_{n_{d},\pi_{d}})(X^{\ot m}) \\
&= \sum_{J_1,\dots,J_d} \eps_{i_1,\dots,i_m}\left(\bigotimes_{\alpha=1}^m v_{J_1(\alpha),\dots,J_d(\alpha)} \right) \;
p_{n_1,\pi_1}\left(\bigotimes_{\alpha=1}^m e_{J_1(\alpha)}\right) \cdots p_{n_{d},\pi_{d}}\left(\bigotimes_{\alpha=1}^m e_{J_{d}(\alpha)}\right) .
\end{align*}
The first factor is bounded in absolute value by $\|X\|_2^m$.
In view of \cref{eq:multiple determinants}, the remaining factors are determinants
of matrices whose columns are standard basis vectors; hence they are equal to zero or $\pm1$.
Altogether, we find that
\begin{equation*}
\lvert P(X) \rvert
\leq \sum_{J_1,\dots,J_{d}} \norm{X}_2^m = (n_1 \cdots n_{d})^m \norm{X}^m_2 . \qedhere
\end{equation*}
\end{proof}

We remark that similar results can be established for covariants.
We now use \cref{prp:bound on invariants} to deduce the following lower bound on the capacity:

\begin{proposition}\label{prp:inf norm on orbit}
Let $X\in\Ten_{n_0,\ldots, n_{d}}(\Z)$ be a tensor with integral entries that is not in the null cone.
Then,
\[ \capac(X) = \inf_{g\in G} \lVert g \cdot X \rVert^2_2 \geq \frac 1 {(n_1 \cdots n_d)^2} \geq \frac 1 {n^2}, \]
where we recall that $n = n_0 n_1 \cdots n_d$.
\end{proposition}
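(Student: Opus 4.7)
The plan is to combine three ingredients: the Hilbert-Mumford criterion (which says that outside the null cone there exists a homogeneous invariant not vanishing on $X$), the explicit spanning set of integer-coefficient invariants from \cref{prp:bound on invariants}, and an integrality argument to turn ``nonzero'' into ``at least one in absolute value.''

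First, since $X$ is not in the null cone, the theorem of Hilbert and Mumford discussed on \cpageref{page:qualitative} guarantees the existence of some homogeneous $G$-invariant polynomial $P$ of positive degree $m$ with $P(X) \neq 0$. By \cref{prp:bound on invariants}, the space $\C[V]^G_m$ is spanned by the concrete polynomials $P_{\pi_1,\dots,\pi_d,i_1,\dots,i_m}$ defined in \cref{eq:spanning invariants}, so at least one of these spanning polynomials, call it $P_0$, must itself satisfy $P_0(X) \neq 0$; in particular $n_1,\dots,n_d$ all divide $m$. Since $P_0$ has integer coefficients (built from determinants of integer matrices and a coordinate functional) and $X\in\Ten_{n_0,\dots,n_d}(\Z)$, the value $P_0(X)$ is a \emph{nonzero integer}, hence
\[
\lvert P_0(X) \rvert \;\geq\; 1.
\]

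Next I exploit $G$-invariance of $P_0$ together with the norm bound from \cref{prp:bound on invariants}. For every $g\in G$, invariance gives $P_0(g\cdot X) = P_0(X)$, while \cref{prp:bound on invariants} applied to the tensor $g\cdot X$ yields
\[
1 \;\leq\; \lvert P_0(X) \rvert \;=\; \lvert P_0(g\cdot X)\rvert \;\leq\; (n_1\cdots n_d)^m \,\lVert g\cdot X\rVert_2^m.
\]
Extracting the $m$-th root and squaring gives $\lVert g\cdot X\rVert_2^2 \geq 1/(n_1\cdots n_d)^2$ uniformly in $g\in G$, and taking the infimum over $g$ produces the desired lower bound $\capac(X)\geq 1/(n_1\cdots n_d)^2 \geq 1/n^2$.

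There is no real obstacle here once \cref{prp:bound on invariants} is in hand; the only step requiring care is the initial appeal to Hilbert-Mumford to produce a nonvanishing homogeneous invariant of positive degree (as opposed to just an invariant without constant term), which is immediate by decomposing into homogeneous components. The elegance of the argument is that the degree $m$ enters only through the exponent on both sides of the inequality and cancels out after taking $m$-th roots, so no bound on the degree of generating invariants is needed -- only the uniform bound on coefficients from \cref{prp:bound on invariants}.
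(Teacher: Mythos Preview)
Your proof is correct and follows essentially the same approach as the paper: pick a spanning invariant $P_0$ from \cref{prp:bound on invariants} that does not vanish on $X$, use integrality of $P_0$ and $X$ to get $\lvert P_0(X)\rvert\geq 1$, then combine $G$-invariance with the norm bound $\lvert P_0(Y)\rvert\leq (n_1\cdots n_d)^m\lVert Y\rVert_2^m$ and cancel the exponent~$m$. One terminological nitpick: what you invoke on \cpageref{page:qualitative} is the Hilbert--Mumford characterization of the null cone as the common zero locus of non-constant invariants, not the ``Hilbert--Mumford criterion'' (which in this paper refers to the one-parameter subgroup criterion of \cref{prp:hilbert mumford}); the mathematics is unaffected.
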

\begin{proof}
Since $X$ is not in the null cone, there exists some $G$-invariant homogeneous polynomial~$P$ as in \cref{prp:bound on invariants}
such that $P(X)\neq0$.
Let $m$ denote the degree of $P$.
Then it holds that
\[ 1 \leq \abs{P(X)} = \abs{P(g \cdot X)} \leq (n_1\cdots n_{d})^m \lVert g \cdot X \rVert^m_2. \]
The first inequality holds because both $P$ and $X$ have integer coefficients, the subsequent equality because $P$ is a $G$-invariant polynomial, and the last inequality is the bound from \cref{prp:bound on invariants}.
From this the claim follows immediately.
\end{proof}

In \cref{sec:cayley} we explain how rather worse bounds, that are nevertheless sufficient to establish \cref{thm:main}, can be established by analyzing the Cayley Omega-process~\cite{Derksen-Kemper}, a classical tool from invariant theory, and Derksen's degree bounds~\cite{derksen2001polynomial}.

\subsection{Analysis of the algorithm}\label{subsec:detailed analysis algo}

In this section, we formally analyze \cref{alg:scaling} and prove \cref{thm:main precise}.
As mentioned in \cref{subsec:scaling analysis}, key to our analysis is the bound on the norm of vectors in a $G$-orbit established in \cref{prp:inf norm on orbit}, as well the following result from~\cite{LSW} (see~\cite[Lemma 5.1]{GGOW}), which will allow us to make progress in each step of the scaling algorithm.


\begin{lemma}\label{cor:lsw}
  Let $x_1, \ldots, x_n$ be positive real numbers such that $\sum_{i=1}^n x_i = n$
  and $\sum_{i=1}^n (x_i-1)^2 \geq \beta$, where $\beta\leq1$. Then,
  $\prod_{i=1}^n x_i \leq 2^{-\beta/(6\ln2)}$.
\end{lemma}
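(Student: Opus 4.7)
The plan is to pass to the variables $y_i := x_i - 1$, so that the hypotheses become $\sum y_i = 0$, $\sum y_i^2 \geq \beta$, and $y_i > -1$, while the conclusion (after taking logarithms and noting $2^{-\beta/(6\ln 2)} = e^{-\beta/6}$) becomes $\sum \ln(1+y_i) \leq -\beta/6$. The workhorse of the proof will be the pointwise estimate $\ln x \leq (x-1) - \tfrac{1}{6}(x-1)^2$ on $(0,2]$, which I would verify by elementary calculus applied to $h(x) := (x-1)-(x-1)^2/6-\ln x$: check $h(1) = h'(1) = 0$ and $h''(1) = 2/3 > 0$, solve $h'(x) = 0$ to locate the only other critical point of $h$ on $(0,\infty)$ at $x = 3$, and note $h(x) \to +\infty$ as $x \to 0^+$. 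Together these force $h \geq 0$ on $(0,2]$.

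I would then split the argument on the size of $\sum(x_i-1)^2$. In the first case, $\sum(x_i-1)^2 \leq 1$, each $|x_i-1| \leq 1$, so every $x_i$ lies in $[0,2]$ and the pointwise estimate applies termwise. Summing and using $\sum(x_i-1)=0$ gives
\[
\sum \ln x_i \leq -\tfrac{1}{6}\sum(x_i-1)^2 \leq -\beta/6,
\]
which exponentiates to the desired bound.

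The second case, $\sum(x_i-1)^2 > 1$, is handled by a rescaling step: set $\lambda := 1/\sqrt{\sum(x_i-1)^2} \in (0,1)$ and $z_i := 1+\lambda(x_i-1)$. By construction $\sum z_i = n$, $\sum (z_i-1)^2 = 1$, and $z_i > 0$ because $x_i > 0$ and $\lambda < 1$. The crucial point is the domination $\prod x_i \leq \prod z_i$, which I would prove by considering $F(t) := \sum_i \ln\bigl(1+t(x_i-1)\bigr)$ on the interval around $0$ where all arguments are positive: one has $F(0) = 0$, $F'(0) = \sum(x_i-1) = 0$, and $F''(t) = -\sum(x_i-1)^2/(1+t(x_i-1))^2 \leq 0$, so $F$ is concave with a critical point at the origin and is thus non-increasing on $[0,1]$. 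In particular $F(1) \leq F(\lambda)$, i.e., $\prod x_i \leq \prod z_i$. Applying the first case to the $z_i$'s yields $\prod z_i \leq e^{-1/6} \leq e^{-\beta/6}$, the final inequality holding by $\beta \leq 1$.

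The only step requiring real care is the pointwise inequality (a short calculus exercise on $h$); the rest is a clean two-step reduction. Conceptually, the hypothesis $\beta \leq 1$ enters in exactly one place — the comparison $e^{-1/6} \leq e^{-\beta/6}$ at the end of Case~II — but it is also what makes the rescaling meaningful, since it transports the problem into the regime $\sum(z_i-1)^2 = 1$ where every $z_i \in [0,2]$ and the quadratic Taylor-type bound of Step~1 closes the argument.
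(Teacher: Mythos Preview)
Your proof is correct and self-contained. The pointwise estimate $\ln x \leq (x-1) - \tfrac16(x-1)^2$ on $(0,2]$ checks out (the critical-point analysis of $h$ is accurate, and $h(2) = 5/6 - \ln 2 > 0$ handles the right endpoint), Case~I follows immediately, and in Case~II the concavity argument for $F(t) = \sum_i \ln(1+t(x_i-1))$ is clean --- note that $1+t(x_i-1) = (1-t) + t x_i > 0$ on $[0,1]$, so $F$ is defined on the whole interval, and $F'(0)=0$ together with $F''\leq 0$ gives $F(1)\leq F(\lambda)$ as you claim. One cosmetic point: in Case~I you write $x_i \in [0,2]$, but you need $x_i \in (0,2]$ for the pointwise bound; this is immediate from the hypothesis $x_i > 0$.

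As for comparison: the paper does not actually prove this lemma --- it is quoted verbatim from \cite{LSW} (see also \cite[Lemma~5.1]{GGOW}) and treated as a black box. Your argument is essentially the standard one from that literature: the quadratic Taylor-type upper bound on $\ln$ near $1$, plus a rescaling/concavity step to reduce the large-variance case to the small-variance case. So you have supplied what the paper omits, and by the expected route.
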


We will also need the following easy lemmas which are well-known in the quantum information literature.
We supply proofs for sake of completeness:

\begin{lemma}\label{lem:singular marginal implies null cone}
Let $X\in\Ten(n_0,\dots,n_d)$ be a nonzero tensor with a singular
reduced density matrix $\rho_i$ for some $i\ge 1$.
Then $X$ is in the null cone for the $G$-action.
\end{lemma}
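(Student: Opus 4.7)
The plan is to exploit the kernel of $\rho_i$ to build a one-parameter subgroup in $\SL(n_i)$ that drives $X$ to zero.

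First I would unpack what singularity of $\rho_i$ says about $X$. Since $\rho_i$ is positive semidefinite, singularity means there is a unit vector $v \in \C^{n_i}$ with $\rho_i v = 0$, and in particular $v^\dagger \rho_i v = 0$. Applying the defining identity~\eqref{eq:rdm} of the quantum marginal with $M_S$ the rank-one projector $vv^\dagger$ on the $i$-th factor (and $T$ the product of all remaining factors, including the $0$-th), one gets
\[
0 = v^\dagger \rho_i v = \tr\bigl[(vv^\dagger)_i \otimes I_T \; \rho\bigr] = \frac{\lVert (vv^\dagger)_i \otimes I_T \cdot X \rVert_2^2}{\lVert X \rVert_2^2},
\]
so the projection of $X$ onto the subspace where the $i$-th tensor factor lies in $\spn(v)$ vanishes. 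Equivalently, $X$ lies entirely in the subspace where the $i$-th factor is in $v^\perp$.

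Next I would construct an explicit element of $\SL(n_i)$ that shrinks this subspace. Extend $v$ to an orthonormal basis $v, v_2, \ldots, v_{n_i}$ of $\C^{n_i}$ and, for $t \in (0,1]$, define $A_i(t) \in \SL(n_i)$ by
\[
A_i(t) v = t^{-(n_i-1)} v, \qquad A_i(t) v_k = t\, v_k \;\; (k \geq 2),
\]
so $\det A_i(t) = 1$. Let $A(t) = (I_{n_1}, \ldots, A_i(t), \ldots, I_{n_d}) \in G$. By the previous step, $X$ lies in the invariant subspace on which $A_i(t)$ acts as multiplication by $t$, hence $A(t) \cdot X = t\, X$. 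Letting $t \to 0$ shows $0 \in \overline{G \cdot X}$, i.e., $X$ is in the null cone.

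There is no real obstacle here; the only point requiring a moment's care is that we must stay inside $\SL(n_i)$ rather than $\GL(n_i)$, which is why the scaling factors are chosen to be $t^{-(n_i-1)}$ and $t$ (balancing the determinant), and why a basis adapted to $v$ is used instead of a naive rescaling. The argument also illustrates the Hilbert–Mumford criterion in a very simple form: a singular marginal furnishes a destabilizing one-parameter subgroup.
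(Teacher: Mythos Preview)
Your proof is correct and follows essentially the same approach as the paper's: both construct a one-parameter family in $\SL(n_i)$ that expands along (part of) the kernel of $\rho_i$ and contracts along its complement, using the observation that $X$ has no component in the kernel direction. The only cosmetic difference is that the paper works with the full support/kernel decomposition $P_i \oplus Q_i$ and takes $g_i = \eps^{1/\rk(P_i)} P_i + \eps^{-1/\rk(Q_i)} Q_i$, whereas you pick a single kernel vector $v$ and balance $t$ against $t^{-(n_i-1)}$; either choice works.
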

\begin{proof}
Let $P_i$ denote the orthogonal projection onto the support of $\rho_i$ and $Q_i = I_{n_i} - P_i$.
Note that 
$Q_i\neq0$ since $\rho_i$ is singular.
Then the following block diagonal matrix is in $\SL(n_i)$,
\[ g_i := \eps^{1/\rk(P_i)} P_i + \eps^{-1/\rk(Q_i)} Q_i, \]
and it is clear that
\[ g_i \cdot X = \eps^{1/\rk(P_i)} X \to 0 \]
as $\eps\to0$.
Thus $X$ is in the null cone.
\end{proof}

\begin{lemma}[\cite{dur2000three}]\label{lem:nonsingular marginals}
The rank of the quantum marginals $\rho_i$ associated with a tensor $X\in\Ten(n_0,\dots,n_d)$ is invariant under scaling.
\end{lemma}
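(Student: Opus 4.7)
The plan is to reduce the statement to the elementary fact that multiplying a matrix on the left and right by invertible matrices preserves its rank. The key is to identify $\rk(\rho_i)$ with the rank of a suitable matricization (flattening) of the tensor~$X$ and to track how this matricization transforms under the $G$-action.

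First, I would fix $i\in[d]$ and view $X$ as a matrix $X_{(i)}$ of size $n_i\times N_i$ with $N_i:=n_0\prod_{j\neq i,\,j\geq 1} n_j$, obtained by grouping the $i$-th tensor index as the row index and all remaining indices (including the $0$-th) as the column index. A short computation from \cref{def:quantum marginal} shows that $XX^{\dagger}$ has partial trace
\[ \rho_i \;=\; \frac{X_{(i)} X_{(i)}^{\dagger}}{\operatorname{tr}(X_{(i)} X_{(i)}^{\dagger})} , \]
so $\rk(\rho_i)=\rk(X_{(i)})$, independent of the (nonzero) normalization.

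Next, I would describe how the matricization transforms under a scaling $Y=A\cdot X$ with $A=(A_1,\dots,A_d)\in G$. By the definition of the $G$-action on tensors, acting by $A_i$ on the $i$-th factor corresponds to left multiplication of $X_{(i)}$ by $A_i$, while acting by all the other $A_j$ on the remaining factors corresponds to right multiplication of $X_{(i)}$ by $B_i^{T}$, where
\[ B_i \;:=\; I_{n_0}\otimes A_1\otimes\cdots\otimes A_{i-1}\otimes A_{i+1}\otimes\cdots\otimes A_d . \]
Thus $Y_{(i)} = A_i\, X_{(i)}\, B_i^{T}$.

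Now each $A_j\in\SL(n_j)$ is invertible, and a tensor product of invertible matrices is invertible, so both $A_i$ and $B_i$ are invertible. Hence $\rk(Y_{(i)})=\rk(X_{(i)})$, which by the first step gives $\rk(\rho_i(Y))=\rk(\rho_i(X))$. This holds for every $i\in[d]$, proving the lemma. No real obstacle arises; the only care required is the bookkeeping that shows left multiplication on the $i$-th factor of the tensor corresponds to left multiplication on the matricization $X_{(i)}$ and right multiplication on the others.
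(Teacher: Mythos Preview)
Your proof is correct and is essentially the same argument as the paper's: both identify $\rk(\rho_i)$ with the rank of the flattening of $X$ along the $i$-th factor, and then observe that the $G$-action corresponds to left and right multiplication by invertible matrices, which preserves matrix rank. The paper just states this more tersely, invoking the well-known invariance of matrix rank under $\GL(n_i)\times\GL(\prod_{j\neq i} n_j)$ rather than spelling out the matricization transformation explicitly.
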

\begin{proof}
Let $X\in\Ten(n_0,\dots,n_d)$ and $\rho_i$ be one of its quantum marginals.
Then the rank of $\rho_i$ is equal to the rank of $X$ considered as a 2-tensor
(namely, a matrix) in $\C^{n_i} \ot (\bigotimes_{j=0,j\neq i}^d \C^{n_j})$.
But the latter is well-known to be invariant under $\GL(n_i) \times \GL(\prod_{j=0,j\neq i}^d n_j)$.
In particular, it is invariant when we apply $A_i \ot (I_{n_0} \ot \otimes_{j=1,j\neq i}^d A_j)$ for arbitrary invertible matrices $A_1,\dots,A_d$.
Thus the rank of $\rho_i$ is a scaling invariant.
\end{proof}

We now prove \cref{thm:main precise}, restated for convenience, following the outline given in \cref{subsec:scaling analysis}.

\MainTheoremPrecise*
\begin{proof}
If we return in step~\ref{it:scaling step 1}, then $X$ is correctly identified to be in the null cone by \cref{lem:singular marginal implies null cone}.
Now assume that the quantum marginals are all nonsingular so that the algorithm proceeds to step~\ref{it:scaling step 2}.
Let us denote by $i^{(t)}$ the indices selected at the algorithm in steps $t=1,2,\dots$ and by $\rho^{(t)}_{i^{(t)}}$ the corresponding marginals.

We first note that the inverse matrix square roots are always well-defined and that each $Y^{(t)}$ is a scaling of $X$.
Indeed, this is clear for $t=1$, and since $Y^{(t+1)} = A^{(t)}_{i^{(t)}} \cdot Y^{(t)}$, where
\[ A^{(t)}_{i^{(t)}} := \det(\rho^{(t)}_{i^{(t)}})^{1/2n_{i^{(t)}}} (\rho^{(t)}_{i^{(t)}})^{-1/2} \in \SL(n_{i^{(t)}}), \]
we have that $Y^{(t+1)}$ is a scaling $Y^{(t)}$, which in turn implies that its quantum marginals are nonsingular (\cref{lem:nonsingular marginals}).
In particular, it follows that if \cref{alg:scaling} returns in step~\ref{it:scaling step 2} then it returns a scaling $Y$ such that $\ds(Y)<\eps$.

To finish the proof of the theorem, let us assume \cref{alg:scaling} does \emph{not} return in step~\ref{it:scaling step 2}
but instead proceeds to step~\ref{it:scaling step 3}.
We follow the sketch in \cref{subsec:scaling analysis} and track the norm squared of the tensors $Y^{(t)}$.
Initially,
\begin{equation}\label{eq:initial bound}
  \norm{Y^{(1)}}_2^2 = \Bigl\lVert X \Bigr\rVert_2^2 \leq n \, 2^{2b}
\end{equation}
since $b$ is a bound on the bitsize of each entry in the tensor.
Next, we note in each step of the algorithm,
\begin{align*}
&\quad \norm{Y^{(t+1)}}_2^2
= \det(\rho^{(t)}_{i^{(t)}})^{1/n_{i^{(t)}}} \norm{(\rho^{(t)}_{i^{(t)}})^{-1/2} \cdot Y^{(t)}}_2^2
= \det(n_{i^{(t)}} \rho^{(t)}_{i^{(t)}})^{1/n_{i^{(t)}}} \norm{(n_{i^{(t)}} \rho^{(t)}_{i^{(t)}})^{-1/2} \cdot Y^{(t)}}_2^2 \\
&< 2^{-\frac{n_i \eps}{6 d \ln2}} \norm{(n_{i^{(t)}} \rho^{(t)}_{i^{(t)}})^{-1/2} \cdot Y^{(t)}}_2^2,
\end{align*}
where we used \cref{cor:lsw} (with $\beta=n_i^2 \eps/d \leq 1$) and the choice of the subsystem $i^{(t)}$ in our algorithm.
We now show that the right-hand norm squared is precisely the norm squared of $Y^{(t)}$:
\begin{align*}
&\quad \norm{(n_{i^{(t)}} \rho^{(t)}_{i^{(t)}})^{-1/2} \cdot Y^{(t)}}_2^2
= \tr[ \left(I_{n_0}\ot\dots\ot I_{n_{i^{(t)}-1}} \ot (n_{i^{(t)}} \rho^{(t)}_{i^{(t)}})^{-1} \ot I_{n_{i^{(t)}+1}}\dots \ot I_{n_d}\right) Y^{(t)} (Y^{(t)})^\dagger ] \\
&= \tr[ \left(I_{n_0}\ot\dots\ot I_{n_{i^{(t)}-1}} \ot (n_{i^{(t)}} \rho^{(t)}_{i^{(t)}})^{-1} \ot I_{n_{i^{(t)}+1}}\dots \ot I_{n_d} \right) \frac {Y^{(t)} (Y^{(t)})^\dagger} {(Y^{(t)})^\dagger Y^{(t)}} ] \norm{Y^{(t)}}_2^2 \\
&= \tr[ (n_{i^{(t)}} \rho^{(t)}_{i^{(t)}})^{-1} \rho^{(t)}_{i^{(t)}} ] \, \norm{Y^{(t)}}_2^2
= \tr[ \frac {I_{n_{i^{(t)}}}} {n_{i^{(t)}}} ] \norm{Y^{(t)}}_2^2
= \norm{Y^{(t)}}_2^2.
\end{align*}
Together, we find the crucial estimate
\begin{align*}
  \norm{Y^{(t+1)}}_2^2 < 2^{-\frac{n_i \eps}{6 d \ln2}} \norm{Y^{(t)}}_2^2
\end{align*}
and so, using \cref{eq:initial bound},
\begin{equation}\label{eq:crucial upper}
  \norm{Y^{(T+1)}}_2^2 < 2^{-\frac{T \ell \eps}{6 d \ln2}} \norm{X}_2^2 \leq 2^{-\frac{T \ell \eps}{6 d \ln2}} \cdot n 2^{2b},
\end{equation}
where we recall that $\ell = \min_{i\geq1} n_i$.
Now we assume that $X$ is not in the null cone.
We will show that this leads to a contradiction.
Since $X$ has integral entries we can use the lower bound in \cref{prp:inf norm on orbit}.
Using that $Y^{(T+1)}$ is a scaling of $X$ we obtain
\begin{equation}\label{eq:crucial lower}
\norm{Y^{(T+1)}}_2^2 \geq \capac\left(Y^{(T+1)}\right) = \capac(X) \geq \frac 1 {n^2}.
\end{equation}
Comparing \cref{eq:crucial upper,eq:crucial lower}, it follows that
\begin{align*}
T < \frac {18\ln2} {\ell \eps} d \left(  b + \log n \right),
\end{align*}
which is the desired contradiction.
Thus $X$ is in the null cone, which agrees with the output of the algorithm in step~\ref{it:scaling step 3}.
\end{proof}

\paragraph{Analysis of bit complexity.}
To get an actual algorithm, one needs to control the bit complexity of the intermediate computations in \cref{alg:scaling}.
We provide a rough sketch of how to do this truncation argument, following the notation of the proof of \cref{thm:main precise}, and leave the details to the reader (cf.~\cite{GGOW}).
What we will do is that, at each step, instead of multiplying $Y^{(t)}$ by $A^{(t)}_{i^{(t)}}$ to get $Y^{(t+1)}$, we will multiply $Y^{(t)}$ by $B^{(t)}$ to get $Y^{(t+1)}$.
Here $B^{(t)}$ is obtained by truncating the entries of $A^{(t)}_{i^{(t)}}$ to a fixed number $P(n,b,\log(1/\eps))$ of bits after the decimal point, where $P$ is an appropriately chosen polynomial.
This way we can ensure that $\|A^{(t)}_{i^{(t)}} - B^{(t)}\|_F^2 \le \delta \ll \eps$.
Note that the only data we maintain during the algorithm is $Y^{(t)}$.
The above truncation ensures that as long as we run for polynomially many iterations, the bit sizes of entries of $Y^{(t)}$ will remain polynomially bounded.
It is also crucial to observe the norm of $Y^{(t)}$ keeps decreasing, so the number of bits before the decimal point also remain bounded.
Since the action of $A^{(t)}_{i^{(t)}}$ decreases the norm-squared of $Y^{(t)}$ by a fixed factor, the action of $B^{(t)}$ also does the same.
One last detail is that since $B^{(t)}$ may not have determinant one after the truncation, we may longer remain in the $G$-orbit of the initial tensor, so how do we get a lower bound on $\norm{Y^{(T+1)}}_2^2$? This is not a problem since $\det(B^{(t)}) \ge (1-\sqrt{\delta})^{n_{i^{(t)}}}$ and hence we are ``almost" in the $G$-orbit if $T$ is small.
In particular, one can prove that $\norm{Y^{(T+1)}}_2^2 \ge \frac{1}{n} (1-\sqrt{\delta})^{2T}$ and this suffices for the analysis.

\section{Hilbert-Mumford criterion and quantifying instability}\label{sec:instability}

In this section, we explain a special case of the Hilbert-Mumford criterion, which allows us to characterize tensors that lie in the null cone in terms of simple, one-dimensional subgroups of our group $\SL(n_1) \times \cdots \times \SL(n_{d})$.
Moreover, we define the instability (\cref{def:ins}), which quantifies how fast a tensor in the null cone can be sent to the zero vector $0\in V$ by the group action.
The instability is an important quantity in invariant theory and we present an algorithm for the $(0,\eps)$-gap problem (\cref{thm:eps instability}).

We again consider briefly the general setting of a group~$G$ acting on a vector space~$V$.
We know from \cref{sec:nullcone as optimization} that a vector is in the null cone if and only if its orbit closure contains the zero vector, $0 \in \overline{G \cdot v}$.
The \emph{Hilbert-Mumford criterion}~\cite{Hil, Mum65} shows that it suffices to consider certain one-dimensional (and hence in particular commutative) subgroups of~$G$.
More precisely, recall that a \emph{one-parameter subgroup} (\emph{1-PSG}) of $G$ is an algebraic group homomorphism $\lambda: \C^* \rightarrow G$, i.e., an algebraic map that satisfies $\lambda(zw) = \lambda(z) \lambda(w)$.
(For example, any 1-PSG of $\C^*$ is of the form $z\mapsto z^k$ for some integer~$k$.)
Then the Hilbert-Mumford criterion states that a vector $v$ is in the null cone if and only if there exists a 1-PSG $\lambda$ of $G$ such that $\lim_{z\to0}\lambda(z)\cdot v=0$.

A familiar example appears when the group $G = \SL(n)$ acts by
left multiplication on a single matrix $X$.
In this case, the null cone consists of the singular matrices and the Hilbert-Mumford
criterion tells us that if $X$ is singular, then there is a 1-PSG given by
$\lambda(z) = B^{-1} \diag(z^{a_1}, \ldots, z^{a_n}) B$ which drives $X$ to zero.
It is easy to see that in this case such one-parameter subgroup
can be found by taking $B$ to be a matrix in $\SL(n)$ which makes $X$ upper triangular (through row eliminations and permutations) with its last
row being all zeros, and that the exponents $a_i$ can be taken such that
$a_1 = \cdots = a_{n-1}=1$ and $a_n = 1-n$.

We now return to our setup, where $G = \SL(n_1) \times \cdots \times \SL(n_{d})$.
Here any 1-PSG $\lambda$ is of the following form:
\begin{equation}\label{eq:1-PSG}
  \lambda(z) = \left(B_1^{-1} \diag(z^{a_{1,1}},\ldots, z^{a_{1,n_1}}) B_1, \ldots,  B_d^{-1} \diag(z^{a_{d,1}},\ldots, z^{a_{d,n_{d}}}) B_d\right),
\end{equation}
where $B_1, \ldots, B_{d}$ are invertible matrices ($B_i$ of dimension $n_i \times n_i$) and $(a_{i,j})$ is a tuple of integers in
\begin{equation}\label{eq:abelian 1PSG}
  \Gamma := \bigl\{ (a_{i,j})_{i\in[d],j\in[n_i]} \;\big|\; a_{i,j} \in \Z, \, \sum_{j=1}^{n_i} a_{i,j}=0 \text{ for } i\in[d] \bigr\}.
\end{equation}
Intuitively, the matrices $B_i$ are a change of basis after which the action of $G$ becomes reduced to an action of diagonal groups (similar to actions arising in the study of the classical matrix or tensor scalings).

We want to understand what it means for a 1-PSG $\lambda$ and a tensor $X$ that $\lim_{z\to0} \lambda(z) \cdot X=0$.
For this, we write the tensor in the basis corresponding to $B=(B_1,\dots,B_{d})$, i.e., $Y = B \cdot X$.
We define the \emph{support} of $Y$ as
$$
\supp(Y) := \left\{ (j_1,\ldots, j_{d}) \in [n_1] \times \cdots \times [n_{d}] \: \big| \: \exists j_0\: \text{s.t.}\: Y_{j_0,j_1,\ldots,j_d} \neq 0 \right\}.
$$
Thus $(j_1,\dots,j_{d})$ is in the support of $Y$ iff at least one of the slices $Y^{(i)}$ in the $0^\text{th}$ direction of the tensor~$Y$
has a nonzero entry at this position.
Now note that
\begin{equation}\label{eq:action of 1-PSG}
  (B \cdot \lambda(z)\cdot X)_{j_0,j_1\dots,j_d} = z^{\sum_{i=1}^{d} a_{i,j_i}} Y_{j_0,j_1,\dots,j_d} .
\end{equation}
It follows that $\lim_{z\to0} \lambda(z)\cdot X=0$ is equivalent to the support $\supp(Y)$ being deficient in the following sense:


\begin{definition}[Deficiency]
We call a subset $S \subseteq [n_1] \times \cdots \times [n_{d}]$ \emph{deficient} if there exists $(a_{i,j}) \in \Gamma$ such that
\[
 \forall \: (j_1,\ldots, j_{d}) \in S\quad \sum_{i=1}^{d} a_{i, j_i} > 0 .
\]
\end{definition}
We note that in the case $d=2$, a subset $S \subseteq [n] \times [n]$ is deficient if and only if the bipartite graph corresponding to $S$ does not admit a perfect matching, as can be proved via Hall's matching theorem.
For $d > 2$, we do not know of such a clean combinatorial characterization,
although the characterization above is given by a linear program, and therefore is efficiently testable.

In the Hilbert-Mumford criterion, one can do slightly better, namely restrict attention to the one-parameter subgroups compatible with a maximally compact subgroup of the group $G$.
What this means in our case is that we will be able to take the matrices $B_1,\ldots, B_{d}$ to be unitary matrices (see \cref{lem:instability S vs U})!
We can thus summarize the statement of the Hilbert-Mumford criterion as follows:

\begin{restatable}[Special case of~\cite{Hil, Mum65}]{proposition}{HilbertMumford}\label{prp:hilbert mumford}
A tensor $X \in \Ten(n_0,\ldots, n_d)$ is in the null cone of the natural action of the group $G = \SL(n_1) \times \cdots \times \SL(n_{d})$ iff there exist unitary $n_i \times n_i$ matrices $U_i$, $i\in[d]$, such that the support $\supp(Y)$ of the tensor $Y = (U_1,\ldots, U_{d}) \cdot X$ is deficient.
\end{restatable}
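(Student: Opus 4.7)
I would prove the two directions of the equivalence separately. The \emph{if} direction is a direct calculation: given unitaries $U_1,\ldots,U_d$ and integers $(a_{i,j})\in\Gamma$ witnessing that $Y := (U_1,\ldots,U_d)\cdot X$ has deficient support, I define the 1-PSG $\lambda$ of the form~\eqref{eq:1-PSG} with the $U_i$ in place of the $B_i$ and the given exponents. By~\eqref{eq:action of 1-PSG}, the entries of $(U_1,\ldots,U_d)\cdot\lambda(z)\cdot X$ equal $z^{\sum_i a_{i,j_i}}\,Y_{j_0,\ldots,j_d}$, each of which either vanishes identically (when $(j_1,\ldots,j_d)\notin\supp(Y)$) or carries a strictly positive power of $z$. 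Since $(U_1,\ldots,U_d)$ acts isometrically on $V$, $\lambda(z)\cdot X\to 0$ as $z\to 0$, placing $X$ in the null cone by the general Hilbert-Mumford criterion.

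For the \emph{only if} direction, I would first invoke the general Hilbert-Mumford criterion (proved in \cref{app:git}) to produce a 1-PSG $\lambda$ of the form~\eqref{eq:1-PSG} with invertible $B_i\in\GL(n_i)$ and $(a_{i,j})\in\Gamma$ such that $\lim_{z\to 0}\lambda(z)\cdot X = 0$. Equation~\eqref{eq:action of 1-PSG} combined with the continuity of the bounded invertible operator $(B_1,\ldots,B_d)$ then forces the tensor $Y' := (B_1,\ldots,B_d)\cdot X$ to satisfy $\sum_i a_{i,j_i}>0$ on its support, so $\supp(Y')$ is already deficient with witnesses $(a_{i,j})$. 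What remains is the reduction from invertible $B_i$ to unitary $U_i$, which is \cref{lem:instability S vs U}.

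The reduction proceeds as follows. By absorbing an appropriate permutation matrix into each $B_i$, I may assume the exponents are weakly decreasing, $a_{i,1}\geq\cdots\geq a_{i,n_i}$; this only permutes the deficient set and the exponents in tandem. A triangular decomposition then yields $B_i = R_i U_i$ with $U_i$ unitary and $R_i$ invertible upper triangular (this is RQ, equivalently the QL decomposition of $B_i^\dagger$). Since the action of $G$ factors through the product, $Y' = (R_1,\ldots,R_d)\cdot(U_1,\ldots,U_d)\cdot X$, whence $(U_1,\ldots,U_d)\cdot X = (R_1^{-1},\ldots,R_d^{-1})\cdot Y'$. Each $R_i^{-1}$ being upper triangular, any $(j_1,\ldots,j_d)\in\supp((U_1,\ldots,U_d)\cdot X)$ must coordinatewise dominate some $(k_1,\ldots,k_d)\in\supp(Y')$, i.e.\ $k_i\geq j_i$ for all $i$. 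The decreasing ordering of the exponents then gives $\sum_i a_{i,j_i}\geq\sum_i a_{i,k_i}>0$, so $\supp((U_1,\ldots,U_d)\cdot X)$ is itself deficient with the same exponents.

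The main obstacle is this final unitary reduction; everything else is a direct application of~\eqref{eq:action of 1-PSG} or the general Hilbert-Mumford criterion. The crucial trick is that combining the sorting of exponents in decreasing order with the triangular decomposition $B_i = R_i U_i$ (with $R_i$ on the \emph{left}, rather than the more familiar $B_i = U_i R_i$) lets us transfer deficiency from $(B_1,\ldots,B_d)\cdot X$ to $(U_1,\ldots,U_d)\cdot X$ without altering the exponents.
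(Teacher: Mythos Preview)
Your \emph{if} direction matches the paper's. (One quibble: once $\lambda(z)\cdot X\to 0$ with $\lambda(z)\in G$, $X$ is in the null cone by definition; invoking Hilbert--Mumford there is superfluous.)

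Your \emph{only if} direction takes a genuinely different route. The paper gives a self-contained argument following~\cite{sury00}: starting from an arbitrary sequence $g^{(k)}\cdot X\to 0$, it applies the SVD $g^{(k)}_i=V^{(k)}_iD^{(k)}_iU^{(k)}_i$, discards the left unitaries (norms are unchanged), extracts a convergent subsequence of the right unitaries by Bolzano--Weierstrass to obtain $(U_1,\ldots,U_d)$, and then argues via Farkas' lemma on the residual diagonal action to produce integer exponents witnessing deficiency of $\supp((U_1,\ldots,U_d)\cdot X)$. You instead black-box the general Hilbert--Mumford criterion to obtain a destabilizing 1-PSG with invertible $B_i$, and then reduce to unitaries by sorting exponents and applying an RQ decomposition --- which is exactly the mechanism behind \cref{lem:instability S vs U}. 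Your approach is cleaner and more modular (it separates ``a destabilizing 1-PSG exists'' from ``it may be taken with unitary conjugators''); the paper's is elementary and does not rely on the classical criterion as input.

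There is, however, a circularity in your write-up: you say the general Hilbert--Mumford criterion is ``proved in \cref{app:git}'', but the only Hilbert--Mumford statement proved there is \emph{this very proposition}, and it is proved via the SVD/Farkas route, not by first establishing an invertible-$B_i$ version. So as written you are invoking the proposition to prove itself. The fix is easy: cite the general criterion directly from~\cite{Hil,Mum65} (the proposition is stated as a special case of those) rather than pointing to the appendix, or else supply an independent proof of the version with arbitrary invertible $B_i$.
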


\noindent For completeness, we provide a proof of the criterion in~\cref{app:git} (following~\cite{sury00} by reducing it to Farkas' lemma of linear programming).

\begin{remark}
Deficiency can also be understood in terms of the null cone of the action of the subgroup
$T \subseteq \SL(n_1) \times \cdots \times \SL(n_{d})$ formed by tuples of diagonal matrices.
Indeed, when we fixed $B$ but varied the $a_{i,j}$, we were precisely varying over the 1-PSGs of $T$.
Thus, a tensor $Y \in \Ten(n_1,\dots,n_d)$ is in the null cone for the action of the diagonal subgroup if and only if $\supp(Y)$ is deficient.
(This follows from the Hilbert-Mumford criterion for the commutative group $T$, or directly from Farkas' lemma.)
Thus \cref{prp:hilbert mumford} is a special case of a general reduction principle in geometric invariant theory, from actions of non-commutative groups to commutative groups up to a basis change.
\end{remark}

In geometric invariant theory~\cite{Mum65}, vectors in the null cone are also referred to as \emph{unstable}.
The Hilbert-Mumford criterion suggests a natural way of quantifying the instability of a vector: instead of merely asking whether there exists a 1-PSG that sends the vector to zero, we may measure the (suitably normalized) rate at which the vector is sent to zero in \cref{eq:action of 1-PSG}.
This rate, also known as \emph{Mumford's numerical function}, takes the following form for the tensor action that we are studying:

\begin{definition}[Deficiency, instability]\label{def:ins}
Given a set $S\subseteq[n_1]\times\dots\times[n_d]$, we define its \emph{deficiency} as
\begin{align*}
  \defi(S) & := \max_{(a_{i,j}) \in \Gamma} \frac {\min_{(j_1,\dots,j_{d}) \in S}\left(\sum_{i=1}^{d} a_{i,j_i}\right)} {\sqrt{\sum_{i=1}^{d} \sum_{j=1}^{n_i} a_{i,j}^2}}.
\end{align*}
where we recall that the set $\Gamma$ was defined in \cref{eq:abelian 1PSG}.
We then define the \emph{instability} of a tensor~$X\in\Ten(n_0,\dots,n_d)$ by
\[ \ins(X) := \max_{\substack{U=(U_1,\dots,U_{d}) \text{ tuple}\\ \text{of unitary matrices}}} \defi(\supp(U \cdot X)), \]
\end{definition}

\noindent The instability can also be defined using general invertible matrices:

\begin{restatable}{lemma}{insSvsU}\label{lem:instability S vs U}
For any $X\in\Ten(n_0,\dots,n_d)$, we have that
\[ \ins(X) = \max_{\substack{B=(B_1,\dots,B_{d}) \text{ tuple}\\ \text{of invertible matrices}}} \defi(\supp(B \cdot X)) . \]
In particular, $X\mapsto \ins(X)$ is a $G$-invariant function.
\end{restatable}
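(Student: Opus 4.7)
The plan is to establish the nontrivial inequality $\max_B \defi(\supp(B\cdot X)) \leq \max_U \defi(\supp(U\cdot X))$, as the reverse direction is immediate (every unitary matrix is invertible) and the $G$-invariance then follows at once: in the invertible formulation, substituting $B\mapsto Bg$ for any $g\in G$ merely reparametrizes the supremum.

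The central idea is to reinterpret the combinatorial quantity
\[
w_a(Y) := \min_{(j_1,\ldots,j_d)\in\supp(Y)} \sum_{i=1}^d a_{i,j_i}
\]
as a flag-theoretic one. Since $v_0\otimes B_1^{-1}e_{j_1}\otimes\cdots\otimes B_d^{-1}e_{j_d}$ is a weight-$\sum_i a_{i,j_i}$ eigenvector of the $1$-PSG $\lambda_B(z) = (B_i^{-1}\diag(z^{a_{i,j}})B_i)_i$ from \cref{eq:1-PSG}, the weight-$\geq c$ subspace of $V$ under $\lambda_B$ is
\[
F^B_c := \sum_{c_1+\cdots+c_d\geq c} \C^{n_0}\otimes F^{B,1}_{c_1}\otimes\cdots\otimes F^{B,d}_{c_d}, \qquad F^{B,i}_c := \spn\{B_i^{-1}e_j : a_{i,j}\geq c\}.
\]
Unfolding $\lambda_B(z)\cdot X = B^{-1}\diag(z^{a})(B\cdot X)$ shows that $X\in F^B_c$ if and only if $\supp(B\cdot X)$ avoids weights strictly below $c$, i.e., $w_a(B\cdot X)\geq c$. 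Hence $w_a(B\cdot X)$ depends on $B$ only through the factor-wise flags $\{F^{B,i}_c\}_c$.

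Now fix an invertible tuple $B$ and a maximizer $(a_{i,j})\in\Gamma$ of $\defi(\supp(B\cdot X))$. For each $i$, apply Gram--Schmidt to the vectors $B_i^{-1}e_j$ in any order that makes the weights $a_{i,j}$ non-increasing. Since Gram--Schmidt preserves the span of every prefix, the resulting orthonormal basis defines a unitary $U_i$ with $\spn\{U_i^{-1}e_j : a_{i,j}\geq c\} = \spn\{B_i^{-1}e_j : a_{i,j}\geq c\}$ for every threshold $c$; that is, $F^{B,i}_c = F^{U,i}_c$. This forces $F^B_c = F^U_c$, and therefore $w_a(B\cdot X) = w_a(U\cdot X)$. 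Dividing by $\|a\|$ and using that $a$ maximizes the $B$-side,
\[
\defi(\supp(B\cdot X)) = \frac{w_a(U\cdot X)}{\|a\|} \leq \defi(\supp(U\cdot X)),
\]
and taking the supremum over $B$ completes the argument.

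The main subtlety is the bridge from the combinatorial $w_a$ to its flag-theoretic description, since $\supp(B\cdot X)$ and $\supp(U\cdot X)$ can otherwise be very different sets. Once this interpretation is in hand, the remainder of the proof is a single Gram--Schmidt step executed with the weights processed in decreasing order.
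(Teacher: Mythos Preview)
Your proof is correct and rests on the same underlying idea as the paper's: reduce from invertible $B$ to unitary $U$ via Gram--Schmidt/QR, after arranging the weights $a_{i,j}$ in non-increasing order so that the relevant subspaces form a flag preserved by the triangular factor. The paper carries this out by writing $B_i = R_i^{-1}U_i$ and proving combinatorially that $\defi(\supp(RB\cdot X)) \geq \defi(\supp(B\cdot X))$ for upper-triangular $R$, via the observation that each $(k_1,\dots,k_d)\in\supp(RB\cdot X)$ is coordinatewise dominated by some $(j_1,\dots,j_d)\in\supp(B\cdot X)$. Your version instead packages the same mechanism abstractly: you identify $w_a(B\cdot X)$ as the largest $c$ with $X\in F^B_c$, note that $F^B_c$ depends only on the factor-wise flags, and observe that Gram--Schmidt in weight order produces a unitary with identical flags, yielding the equality $w_a(B\cdot X)=w_a(U\cdot X)$. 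The flag formulation is a bit more conceptual and gives equality rather than just the needed inequality; the paper's support calculation is more hands-on but avoids introducing the filtration $F^B_c$. Both are the same argument in different clothing.
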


\noindent In our case, this follows from the QR decomposition, and we give a succinct proof in \cref{app:git}.

Clearly, a subset~$S$ is deficient if and only if its deficiency is positive, $\defi(S)>0$.
And by the Hilbert-Mumford criterion, a tensor~$X$ is in the null cone if and only if its instability is positive, $\ins(X)>0$.
The latter suggests a natural relaxation of the null-cone problem:

\begin{problem}\label{prb:eps-instability}
  For $\eps>0$, the problem \emph{$\eps$-instability} is defined as follows:
  Given a tensor $X\in\Ten(n_0,\dots,n_d)$ with integer coordinates such that either
  \begin{enumerate}
    \item $X$ is not in the null cone $($i.e., $\ins(X)\leq0)$, or
    \item $\ins(X)\geq\eps$,
  \end{enumerate}
  decide which is the case.
\end{problem}

We will now show that the $\eps$-instability problem can be solved as a consequence of our main \cref{thm:main}.
Importantly, the instability of a tensor is tightly connected to the distance from $d$-stochasticity, as defined in \cref{def:ds}.

\begin{restatable}[Special case of {\cite[Lemma 3.1, (iv)]{NM84}}]{lemma}{insvsds}\label{lem:ins vs ds}
For all tensors $X\in\Ten(n_0,\dots,n_d)$,
\[ \ins(X) \leq \sqrt{\dds(X)}. \]
\end{restatable}

We prove \cref{lem:ins vs ds} in \cref{app:git}.
In fact, the inequality in \cref{lem:ins vs ds} is tight for tensors in the null cone (e.g., \cite[Corollary 11.2]{georgoulas2013moment}), but we will not need this here (if $X$ is not in the null cone then $\ins(X)\leq0$ while $\ds(X)=0$).
We obtain the following corollary of \cref{thm:main}.

\begin{theorem}\label{thm:eps instability}
There is $\poly\left(n, b, \frac1\eps \right)$ time deterministic algorithm (\cref{alg:scaling})
for the $\eps$-instability problem (\cref{prb:eps-instability}) for tensors with entries
of bit size bounded by~$b$.
\end{theorem}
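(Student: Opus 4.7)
The plan is to reduce the $\eps$-instability problem to a single invocation of \cref{alg:scaling} at a carefully chosen precision. The bridge from $\ds$ to $\ins$ is \cref{lem:ins vs ds}, which controls the instability by the minimal distance to $d$-stochasticity via $\ins(X) \leq \sqrt{\dds(X)}$; together with \cref{thm:main}, this is all we need.

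Concretely, I would run \cref{alg:scaling} on the input tensor~$X$ with precision parameter
\[
   \eps' := \min\!\left(\tfrac{\eps^2}{2},\; \tfrac{d}{\max_{i\geq 1} n_i^2}\right).
\]
The second term in the minimum ensures that the hypothesis $\eps'\leq d/\max_{i\geq 1} n_i^2$ of \cref{thm:main precise} is satisfied, while the first term ensures that $\sqrt{\eps'} < \eps$ strictly. By \cref{thm:main}, this call terminates in $\poly(n, b, 1/\eps') = \poly(n, b, 1/\eps)$ time and produces one of two possible outputs. If it certifies that $X$ lies in the null cone, then $\ins(X) > 0$, so case~1 of the promise in \cref{prb:eps-instability} is excluded and I output case~2. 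Otherwise it returns a scaling $Y \in G\cdot X$ with $\ds(Y) \leq \eps'$; since $Y$ lies in the orbit, $\dds(X) \leq \ds(Y) \leq \eps'$ by definition of $\dds$, and \cref{lem:ins vs ds} then yields
\[
   \ins(X) \leq \sqrt{\dds(X)} \leq \sqrt{\eps'} \leq \tfrac{\eps}{\sqrt{2}} < \eps.
\]
This strictly rules out case~2 of the promise, so I output case~1. The running time follows from \cref{thm:main}, and the bit-complexity issues are already handled inside that theorem.

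There is no real obstacle: once \cref{thm:main} and \cref{lem:ins vs ds} are granted, the reduction is essentially a one-liner. The only care needed is to choose $\eps'$ \emph{strictly} below $\eps^2$, so that the resulting bound on $\ins(X)$ is strict and genuinely excludes the threshold $\ins(X)\geq\eps$ of case~2. The more conceptual content of the proof therefore lies upstream, in the polynomial convergence of the scaling algorithm and in the Kempf--Ness-type inequality $\ins \leq \sqrt{\dds}$ which connects the Hilbert--Mumford numerical function to the dual optimization problem.
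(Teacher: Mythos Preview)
Your proof is correct and follows the same approach as the paper: run \cref{alg:scaling} and use \cref{lem:ins vs ds} to translate a bound on $\ds$ into a bound on $\ins$. You are in fact more careful than the paper, which simply says to run the algorithm at precision~$\eps$; because \cref{lem:ins vs ds} only gives $\ins(X)\le\sqrt{\dds(X)}$, one does need precision roughly~$\eps^2$ (as you chose) to strictly exclude $\ins(X)\ge\eps$, a detail the paper's one-line argument elides.
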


The algorithm is obtained by running \cref{alg:scaling} and outputting ``$X$ is not in the null cone" if \cref{alg:scaling} produces a scaling $Y$ with $\ds(Y) < \eps$ and otherwise outputting ``$\ins(X)\geq\eps$''.

Thus $\eps$-instability can be decided efficiently if $\eps=\Omega(1/\poly(n))$.
Unfortunately, there may well exist tensors~$X$ which are in the null cone and whose instability is exponentially small.
However, the instability cannot be worse than exponentially small, as we prove in \cref{app:git}:

\begin{restatable}{lemma}{inslb}\label{lem:gap}
Suppose a tensor $X \in \Ten(n_0,\ldots,n_d)$ is in the null cone.
Then $\sqrt{\dds(X)} \geq \ins(X) = \exp(-O(n\log n))$,
where we recall that $n=n_0 \cdots n_d$.
\end{restatable}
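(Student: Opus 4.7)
The first inequality $\sqrt{\dds(X)} \geq \ins(X)$ is exactly \cref{lem:ins vs ds}, so my plan is to prove the lower bound $\ins(X) \geq \exp(-O(n\log n))$ assuming $X$ lies in the null cone. My first step is to apply the Hilbert-Mumford criterion (\cref{prp:hilbert mumford}) to produce unitaries $U = (U_1, \dots, U_d)$ such that the set $S := \supp(U \cdot X) \subseteq [n_1] \times \cdots \times [n_d]$ is deficient. By \cref{def:ins} of the instability, $\ins(X) \geq \defi(S)$, so it suffices to establish a purely combinatorial statement: for every deficient subset $S \subseteq [n_1] \times \cdots \times [n_d]$, one has $\defi(S) \geq \exp(-O(n\log n))$.

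To bound $\defi(S)$ I would exploit that the ratio $(\min_S \sum_i a_{i,j_i})/\|a\|_2$ is scale-invariant. Hence the supremum of this ratio over integer tuples in $\Gamma$ equals the supremum over rational tuples (by clearing denominators) and over real tuples (since the underlying LP has rational data). I consider the feasibility polyhedron
\[ P := \Bigl\{ a \in \mathbb{R}^N \;\Big|\; \sum_{j=1}^{n_i} a_{i,j} = 0 \ \forall i \in [d], \ \sum_{i=1}^d a_{i,j_i} \geq 1 \ \forall (j_1, \dots, j_d) \in S \Bigr\}, \]
where $N = n_1 + \cdots + n_d \leq n$. Since $S$ is deficient, $P \neq \emptyset$. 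The constraint matrix has entries in $\{0, \pm 1\}$ and the right-hand side lives in $\{0, 1\}$. I would now apply a standard basic-feasible-solution argument: after converting to equality form by introducing slacks and splitting $a = a_+ - a_-$ with $a_\pm \geq 0$, any BFS is supported on at most $d + |S| \leq n + n/n_0$ coordinates determined by a nonsingular square submatrix of size $m \leq n + n$ of the $\{0,\pm 1\}$-valued constraint matrix. By Cramer's rule, each nonzero coordinate is $\det(M_i)/\det(M)$ where $|\det(M_i)|, |\det(M)|$ are bounded via Hadamard's inequality by $m^{m/2}$, and $|\det(M)| \geq 1$ (nonzero integer). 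Thus there exists a feasible $a \in P$ with $\|a\|_\infty \leq n^{O(n)}$.

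Combining, $\|a\|_2 \leq \sqrt{N} \cdot n^{O(n)} = \exp(O(n\log n))$, while $\min_{(j_1, \dots, j_d) \in S} \sum_i a_{i,j_i} \geq 1$. Clearing denominators of $a$ (which preserves the ratio) returns a tuple in $\Gamma$ certifying $\defi(S) \geq 1/\|a\|_2 \geq \exp(-O(n \log n))$, which chained with Step~1 completes the proof. The main obstacle is the BFS bound in the middle step: one must verify it carefully, since $P$ need not be pointed and vertices in the naive sense may fail to exist. A clean workaround, if one wants to avoid slack variables, is to intersect $P$ with the symmetric box $\{|a_{i,j}| \leq M\}$ for any $M \geq n^n$ (this box is large enough that feasibility is preserved once one knows the final bound) and then extract a vertex of the bounded polyhedron, to which Cramer's rule applies directly.
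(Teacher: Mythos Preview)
Your proposal is correct and follows essentially the same route as the paper: reduce to \cref{lem:ins vs ds}, apply Hilbert--Mumford to obtain a deficient support $S$, rewrite $\defi(S)$ via the scale-invariant feasibility polyhedron $P$ with $\{0,\pm1\}$ constraint matrix, and extract a small rational feasible point. The only difference is cosmetic: the paper invokes \cite[Theorem~10.1]{schrijver1998theory} as a black box for the bit-size bound on a feasible solution, whereas you unpack that argument by hand via basic feasible solutions, Cramer's rule, and Hadamard's inequality (and correctly flag the pointedness issue together with a valid workaround).
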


This begs the question: are there natural examples where we can
instantiate \cref{alg:scaling} 
with $1/\epsilon = \poly(b,n)$?
One set of examples comes from the notion of slice rank of tensors -- discovered and widely used in the recent breakthroughs in additive combinatorics regarding cap sets and other combinatorial objects (see, e.g., \cite{BCCGNSU17}).
We discuss this in \cref{sec:slice rank}.

\section{Quantum marginals and Kronecker coefficients}\label{sec:quantum}

In this section, we introduce two versions of the quantum marginal problem
(\cref{prb:qmp} and \cref{prb:qmp slocc}), and elaborate on their relevance in
quantum information theory. Additionally, we describe the connection between the
quantum marginal problem and an important problem from representation theory:
the positivity of Kronecker coefficients.

As mentioned before, the problems we are studying have a very natural interpretation
from the perspective of quantum physics.
It will be convenient to set $n_0=1$; 
thus we study the action of $G=\SL(n_1)\times\dots\times\SL(n_d)$ on $\Ten(n_1,\dots,n_d)$.

Recall that the classical marginal problem for probability distributions ask the following question: given probability distributions of various subsets of a number of random variables, are they consistent with a joint probability distribution on all the random variables?
For example, if $p(x,y)$ and $p(y,z)$ are bivariate probability distributions then a necessary and sufficient condition for their compatibility is that $\sum_x p(x,y) = \sum_z p(y,z)$ (indeed, in this case we can define $p(x,y,z) = p(x,y) p(y,z) / \sum_x p(x,y)$).

The \emph{quantum marginal problem}~\cite{klyachko2004quantum} is the quantum analog of the classical marginal problem.
It asks whether given quantum states on various subsets of a number of quantum systems are consistent with a joint quantum state on all the systems.
In the context of this paper, we are particularly interested in the case when the quantum marginals $\rho_k$ of single subsystems are given and the overall state is to be in a pure state.
More formally, the problem is defined as follows:

\begin{problem}[One-body quantum marginal problem]\label{prb:qmp}
Given PSD matrices $\rho_1,\dots,\rho_d$ with unit trace (``density matrices''), does there exist a tensor $Y\in\Ten(n_1,\dots,n_d)$ (``pure state'') such that the $\rho_k$ are the quantum marginals of $Y Y^\dagger/Y^\dagger Y$ in the sense of \cref{def:quantum marginal}?
\end{problem}

This problem does not have a direct classical analog, since the probability theory analog of a pure state is a deterministic probability distribution, for which the problem is trivial.

A generalization of the one-body quantum marginal problem asks the same question but with the pure quantum state $Y$ varying over the $G$-orbit closure of a given tensor $X$ (see, e.g., \cite{walter2013entanglement} and the references therein).
The physical interpretation of this group action is that $Y$ can be obtained to arbitrary precision from $X$ by a class of quantum operations known as \emph{stochastic local operations and classical communication (SLOCC)}~\cite{bennett2000exact}.
SLOCC can be intuitively understood as follows: we imagine that different parties hold the different systems of a quantum state; SLOCC then corresponds to a sequence of local quantum operations and measurements, where we allow for \emph{post-selection} on specific measurement outcomes.
Thus we are interested in the following problem:

\begin{problem}[One-body quantum marginal problem, SLOCC version]\label{prb:qmp slocc}
Given PSD matrices $\rho_1,\dots,\rho_d$ with unit trace and a tensor $X\in\Ten(n_1,\dots,n_d)$, does there exist $Y \in \overline{G \cdot X}$ such that the $\rho_k$ are the quantum marginals of $YY^\dagger$?
\end{problem}

Now for the special case when $\rho_k = I_{n_k}/n_k$, we get precisely the problem we are studying in this paper (cf.~\cref{thm:kempf ness}).
In quantum theory, this means that the $k^\text{th}$ system is maximally entangled with the other subsystems, as we already briefly mentioned before.
Thus a quantum state with these marginals is called \emph{locally maximally entangled}, and we can interpret the transformation $X \leadsto Y$ as an \emph{entanglement distillation}\footnote{It can also be understood as finding a normal form of the quantum state $X$~\cite{VDB03}.}.

Therefore, \cref{alg:scaling} gives a method for distilling locally maximally entanglement from a given pure quantum state (not in the null cone).
More precisely, \cref{thm:main,thm:null cone exp scaling} imply the following result:

\begin{theorem}[SLOCC entanglement distillation]
There is a $\poly(n, b, \frac1\eps)$ time deterministic algorithm that, given the classical description of a pure quantum state $X\in\Ten(n_1,\dots,n_d)$, with integer coordinates of bit size bounded by~$b$, either identifies that $X$ cannot be transformed by SLOCC into a locally maximally entangled state or outputs an SLOCC operation $g\in G$ such that $\ds(g \cdot X)<\eps$.
In particular, \cref{prb:qmp slocc} for $\rho_k=I_{n_k}/n_k$ can be decided in time $\poly(b)\exp(O(n \log n))$.
\end{theorem}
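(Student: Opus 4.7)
The theorem is essentially a translation of our main results into the language of quantum information, so the plan is to invoke \cref{thm:main} and \cref{thm:null cone exp scaling} and bridge them to the SLOCC terminology via the Kempf-Ness duality (\cref{thm:kempf ness}).

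First, I would fix the dictionary. Under SLOCC, a pure state $X$ is transformed (to arbitrary precision) into states of the form $Y \in \overline{G\cdot X}$, and ``locally maximally entangled'' means exactly that the single-body marginals $\rho_k$ of $YY^\dagger/Y^\dagger Y$ equal $I_{n_k}/n_k$, i.e., that $Y$ is $d$-stochastic. By \cref{thm:kempf ness}, such a $Y$ exists in $\overline{G\cdot X}$ if and only if $X$ is \emph{not} in the null cone. Equivalently, $\dds(X)=0$ precisely when $X$ admits an SLOCC distillation to a locally maximally entangled state; \cref{eq:nc duality} makes this dichotomy sharp.

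For the algorithmic part, I would simply run \cref{alg:scaling} on $X$ with the given $\eps$. By \cref{thm:main}, in time $\poly(n,b,1/\eps)$ the algorithm either certifies that $X$ lies in the null cone, or produces $Y = g\cdot X \in G\cdot X$ with $\ds(Y) < \eps$. In the first case, the above dictionary implies that no element of $\overline{G\cdot X}$ is locally maximally entangled, so no SLOCC protocol can distill $X$ to such a state, and we output this conclusion. In the second case, the element $g\in G$ is exactly the desired approximate SLOCC operation.

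For the ``in particular'' clause, \cref{prb:qmp slocc} specialized to $\rho_k=I_{n_k}/n_k$ asks whether some $Y\in\overline{G\cdot X}$ has these marginals, which by Kempf-Ness is the null-cone membership problem for $X$. Applying \cref{thm:null cone exp scaling} decides this in time $\poly(b)\exp(O(n\log n))$, completing the claim. There is no real obstacle here beyond carefully matching definitions; the only mild subtlety is that the first part of the theorem produces an approximate certificate (an $\eps$-approximately $d$-stochastic scaling), while the ``in particular'' part asks an exact decision problem, and it is precisely the Kempf-Ness duality that lets us use the exact form of \cref{thm:null cone exp scaling} to answer the exact quantum-marginal question.
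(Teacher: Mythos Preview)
Your proposal is correct and matches the paper's approach exactly: the theorem is stated as an immediate corollary of \cref{thm:main} and \cref{thm:null cone exp scaling}, with \cref{thm:kempf ness} supplying the dictionary between null-cone membership and SLOCC distillability to a locally maximally entangled state. The only explicit remark the paper adds is that \cref{alg:scaling} is straightforwardly modified to record the group element $g\in G$ with $Y=g\cdot X$, which you implicitly use when you say ``the element $g\in G$ is exactly the desired approximate SLOCC operation.''
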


Here we have used that \cref{alg:scaling} is straightforwardly modified to record an SLOCC operation $g\in G$ such that $Y = g \cdot X$.
This operation can then be implemented in the laboratory to \emph{provably} distill an approximately locally maximally entangled state%
\footnote{It is interesting to contrast this with the fact it was only very recently understood for which dimensions $n_1,\dots,n_d$ locally maximally states exist at all~\cite{bryan2017existence}.}.
It is also possible to implement \cref{alg:scaling} directly as a quantum algorithm, given many copies of the unknown quantum state~$X$, without performing quantum state tomography of the global state (but only of the quantum marginals). 
This can be done by a straightforward adaption of the method described in~\cite{walter2013entanglement}.

We note that for general $\rho_1,\dots,\rho_d$, a natural alternating minimization algorithm along the lines of \cref{alg:scaling} suggests itself, and it is an interesting open question whether this algorithm can be analyzed using similar methods as developed in this paper.
We return to this question in \cref{sec:conclusion}.

\medskip

The answer to either version of the quantum marginal problem depends only on the eigenvalues of the matrices $\rho_k$, since we can always achieve an arbitrary eigenbasis by applying local unitaries $U_1\ot\dots\ot U_d$ to $Y$.
Let us denote by $s_k(Y)$ the vector of eigenvalues of the quantum marginals of $Y Y^\dagger/Y^\dagger Y$, ordered nonincreasingly.
Then the solution of \cref{prb:qmp slocc} is encoded by the following object:
\[ \Delta(X) := \{ (s_1(Y),\dots,s_d(Y)) \in \R^{n_1+\dots+n_d} \;|\; Y \in \overline{G \cdot X} \}. \]
A remarkable theorem by Kirwan asserts that $\Delta(X)$ is a convex polytope~\cite{Kirwan84} (``moment polytope'').
In our case, $\Delta(X)$ is known as the \emph{entanglement polytope} of $X$~\cite{walter2013entanglement}.
For generic tensors $X$, it is known~\cite{brion1987image} (cf.~\cite{BI16,walter2013entanglement})
that \cref{prb:qmp slocc} reduces to \cref{prb:qmp}.
That is,
\[  \Delta(n_1,\dots,n_d) = \{ (s_1(Y),\dots,s_d(Y)) \in \R^{n_1+\dots+n_d} \;|\; Y \in\Ten(n_1,\dots,n_d) \}. \]
is likewise a convex polytope and it encodes the answer to \cref{prb:qmp}.
It is known that deciding membership in $\Delta(n_1,\dots,n_d)$ is in NP $\cap$ coNP~\cite{buergisser2017membership}, and our paper has in part been motivated by the desire to find a polynomial-time algorithm for this problem.

%

\medskip

\Cref{thm:kempf ness} established an intriguing connection between the existence of nonzero) invariants
and quantum states with maximally mixed marginals.
Now a $G$-invariant polynomial is nothing but a trivial representation of the group $G$ in the polynomial ring $\C[V]$.
It holds more generally that there is a relation between irreducible representations in $\C[V]$ and the existence of quantum states with certain quantum marginals.
This relation, which can be proved using similar methods from geometric invariant theory as we used in this paper, allows for a most transparent proof of the polyhedrality of~$\Delta(X)$ mentioned above~\cite{NM84}, and a long line of works have drawn out this intriguing connection (see, e.g., \cite{klyachko2002coherent,klyachko2004quantum,daftuar2005quantum,christandl2006spectra,christandl2007nonzero,ressayre2010geometric,brandao2016mathematics,vergne2014inequalities}).
In particular, it implies a remarkable relation between the one-body quantum marginal problem and the Kronecker coefficients from representation theory, which we describe next.

Kronecker coefficients describe the interrelationship between irreducible representations of the symmetric group.
They are prevalent in algebraic combinatorics and play an important role in the GCT program.
The irreducible representations of the symmetric group $S_r$ are indexed by the partitions of $r$ (see, e.g., \cite{Sagan}).
Let us denote the irreducible representation corresponding to a partition $\lambda$ as $[\lambda]$.
The \emph{Kronecker coefficient} $k_{\lambda,\mu,\nu}$ is the multiplicity of the irreducible representation $[\lambda]$ in the representation $[\mu] \otimes [\nu]$ (it is symmetric in its three indices).
Determining the positivity of Kronecker coefficients, $k_{\lambda,\mu,\nu} > 0$, has long been a major challenge in representation theory and was recently proven to be NP-hard~\cite{ikenmeyer2017vanishing} (even if $\lambda, \mu, \nu$ are given in the unary encoding).
However, the asymptotic or \emph{stretched} version of this question is nicer and has a lot of structure.
It asks: given three partitions $\lambda, \mu, \nu$ of $r$, does there exist an integer $l \ge 1$ s.t.
$k_{l \lambda,l \mu, l \nu} > 0$.%
\footnote{The closely connected \emph{Littlewood-Richardson coefficients} $c_{\lambda, \mu, \nu}$ have the \emph{saturation property}, $c_{\lambda, \mu, \nu} > 0$ iff $c_{l \lambda, l \mu, l \nu} > 0$~\cite{KnT03}, which the Kronecker coefficients don't share.
Also, the positivity of Littlewood-Richardson coefficients can be decided in strongly polynomial time~\cite{deloera2006computation,GCT3,BI_LRcoeffs}.}
Quite astonishingly, this asymptotic version is equivalent to the one-body quantum marginal problem (for $3$ systems)!
That is, $k_{l \lambda,l \mu, l \nu} > 0$ for some $l \ge 1$ if and only if $(\lambda/r,\mu/r,\nu/r) \in \Delta(n,n,n)$, where $n$ is chosen to be at least as large as the number of parts in each of the three partitions.
This holds because the Kronecker coefficients are precisely the multiplicities of certain irreducible representations in the polynomial ring $\C[V]$.
Hence the asymptotic version is in NP $\cap$ coNP and believed to have a polynomial time algorithm.

This is yet another instance (compare the slice rank discussed in \cref{sec:slice rank} below),
where a problem is combinatorially nasty and NP-hard, but its asymptotic version is analytically nice and might well admit a polynomial time algorithm.
We view our alternating minimization algorithm and its analysis as a first step towards obtaining efficient algorithms for computing the asymptotic slice rank, for solving the one-body quantum marginal problems, and for deciding the asymptotic vanishing of Kronecker coefficients.
To achieve this, the main issue is an exponential improvement of speed
by obtaining a running time polynomially bounded
in $\log\frac{1}{\epsilon}$ instead of $\frac{1}{\epsilon}$, as in Theorem~\ref{thm:main precise}.

\section{Slice rank and the null cone}\label{sec:slice rank}

In this section, we define the slice rank of a tensor, a notion introduced by
Tao~\cite{Tao_slicerank} and developed further in a number of papers. We
describe the relation between an asymptotic version of the slice rank and the instability
of a tensor, therefore relating the slice rank of a tensor to the null-cone problem.
We will focus on tensors of format $n_0= 1$ and $n_1 = \cdots = n_{d} = m$.
We denote the space of such tensors by $\Ten_m^d := \Ten(m,\dots,m) \simeq\Ten(1,m,\dots,m)$.

\begin{definition}[Slice rank]
A tensor $X \in \Ten_m^d$ is said to have \emph{slice rank-one} if it decomposes as the tensor product of a vector and a lower-order tensor, i.e., if there exists a $k \in [d]$, a vector $V \in \C^m$, and a tensor $Y \in \Ten_m^{d-1}$ such that
\[
X_{j_1,\dots,j_d} = V_{j_k} \cdot Y_{j_1,\dots,j_{k-1},j_{k+1},\dots,j_d}.
\]
The \emph{slice rank} of a tensor $X$, denoted $\sr(X)$, is defined as the minimum number $r$ of slice rank-one tensors $X_1,\dots,X_r$ such that $X=\sum_{i=1}^r X_i$.
\end{definition}

Importantly, the coordinate $k$ along which we decompose each slice rank-one tensor $X_i$ may be different for each $i$.
Thus the slice rank is in general only upper-bounded by the rank of any flattening of the tensor.
In particular, it is clear that $\sr(X) \le m$ by flattening/slicing along any fixed coordinate.
The recent paper~\cite{BCCGNSU17} developed some intriguing connections between slice rank and the null cone of the action of the group $G = \SL(m)^d = \SL(m) \times \cdots \times \SL(m)$ ($d$ times):

\begin{restatable}[\cite{BCCGNSU17}]{theorem}{slicerank}\label{thm:slice rank1}
Consider a tensor $X \in \Ten_m^d$. If $\sr(X) < m$, then $X$ is in the null cone for $\SL(m)^d$. In fact, $\ins(X) \ge 1/\sqrt{d m^3}$.
\end{restatable}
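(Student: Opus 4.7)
My plan is to exhibit the instability lower bound directly by constructing an explicit basis change $B \in G$ and an integer weight $(a_{k,j}) \in \Gamma$ witnessing a deficient support; together with the fact that $\ins$ can be computed using arbitrary invertible basis changes (\cref{lem:instability S vs U}) and the Hilbert-Mumford criterion (\cref{prp:hilbert mumford}), this yields both the desired lower bound and null-cone membership. The starting point is the slice-rank decomposition: write $X = \sum_{i=1}^{r} X_i$ with $r = \sr(X) \le m-1$, where each $X_i$ is slice rank-one along some direction $k_i \in [d]$, say $X_i = V^{(i)} \otimes Y^{(i)}$ with $V^{(i)} \in \C^m$ sitting along direction $k_i$. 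Let $r_k := \lvert\{i : k_i = k\}\rvert$, so $\sum_k r_k = r$.

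For each direction $k\in[d]$, I would pick $B_k \in \GL(m)$ sending the $r_k$ vectors $\{V^{(i)} : k_i = k\}$ into $\spn\{e_1,\dots,e_{r_k}\}$ (possible since $r_k \le m$). After applying $B = (B_1,\dots,B_d)$, each translated slice rank-one component is confined to the ``slab'' $\{(j_1,\dots,j_d) : j_{k_i} \le r_{k_i}\}$, so
\[
\supp(B \cdot X) \subseteq \bigcup_{k=1}^{d}\{(j_1,\dots,j_d) : j_k \le r_k\}.
\]
The point of this alignment is that every tuple in the support of $B \cdot X$ is ``small'' in at least one coordinate.

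Next, I would choose an integer weight rewarding such smallness: set $a_{k,j} = m - r_k$ for $j \le r_k$ and $a_{k,j} = -r_k$ for $j > r_k$. Then $(a_{k,j}) \in \Gamma$ since $\sum_j a_{k,j} = r_k(m-r_k) - (m-r_k)r_k = 0$. For any tuple in the support of $B \cdot X$ one has
\[
\sum_{k=1}^d a_{k, j_k} \;=\; m\,\lvert\{k : j_k \le r_k\}\rvert - \sum_{k=1}^d r_k \;\ge\; m - r \;\ge\; 1.
\]
A quick calculation gives $\sum_{k,j} a_{k,j}^2 = m\sum_k r_k(m-r_k) \le dm^3/4$, using $r_k(m-r_k) \le m^2/4$, so
\[
\defi(\supp(B \cdot X)) \;\ge\; \frac{m-r}{\sqrt{dm^3/4}} \;\ge\; \frac{2}{\sqrt{dm^3}} \;>\; \frac{1}{\sqrt{dm^3}}.
\]
By \cref{lem:instability S vs U} this lower-bounds $\ins(X)$, and positivity of $\ins(X)$ then forces $X$ into the null cone by \cref{prp:hilbert mumford}.

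The only real design choice is the weight $(a_{k,j})$: the pattern $(m - r_k, -r_k)$ is tailored so that being small in even a single coordinate contributes a surplus of $m$, which strictly dominates the deficit of at most $r \le m-1$ picked up from the remaining coordinates. I do not anticipate any serious obstacle beyond this; given the slice-rank decomposition and the Hilbert-Mumford characterization, the remainder is a direct and routine verification.
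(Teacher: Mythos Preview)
Your proposal is correct and follows essentially the same approach as the paper: both perform a basis change pushing the slice-rank-one vectors into the first few coordinate directions, observe the resulting support lies in the union of ``slabs'' $\{j_k \le r_k\}$, and use the identical weight pattern $a_{k,j} = m - r_k$ or $-r_k$. Your version is marginally cleaner in two respects: you do not invoke linear independence of the $V^{(i)}$ (sending $r_k$ arbitrary vectors into an $r_k$-dimensional coordinate subspace needs no such assumption), and your denominator bound $\sum_{k,j} a_{k,j}^2 = m\sum_k r_k(m-r_k) \le dm^3/4$ is sharper than the paper's cruder $dm^3$, yielding the constant $2$ instead of $1$.
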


\begin{theorem}[\cite{BCCGNSU17}]\label{thm:slice rank2}
Consider a tensor $X \in \Ten_m^d$. If $X$ is in the null cone for $\SL(m)^d$, then $\sr(X^{\otimes k}) < m^k$ for some positive integer $k$.
\end{theorem}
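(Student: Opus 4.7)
The plan is to reduce, via the Hilbert--Mumford criterion (\cref{prp:hilbert mumford}), to the combinatorial statement that if $S \subseteq [m]^d$ is deficient, then every tensor supported on the natural $k$-fold ``tensor power'' of $S$ has slice rank strictly less than $m^k$ for some $k$. This is essentially an asymptotic converse of \cref{thm:slice rank1}, and the proof will follow a Chernoff-concentration argument in the spirit of the slice-rank upper bounds used in the cap-set literature.

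First I would observe that slice rank is invariant under $\GL(m)^d$: applying invertible matrices $(A_1, \ldots, A_d)$ sends a slice-rank-one summand $V \otimes Y$ (with $V$ in the $i$-th factor) to $(A_i V) \otimes ((A_j)_{j\neq i} \cdot Y)$, which is again slice-rank one. The same invariance therefore holds for $\sr(X^{\otimes k})$ under the induced $\GL(m^k)^d$-action. Since $X$ lies in the null cone, \cref{prp:hilbert mumford} yields unitaries $U_1, \ldots, U_d$ such that $Y := (U_1, \ldots, U_d) \cdot X$ has deficient support $S := \supp(Y) \subseteq [m]^d$; hence there exist integer weights $(a_{i,j}) \in \Gamma$ with $\sum_j a_{i,j} = 0$ for each $i$ and $\sum_{i=1}^d a_{i, j_i} \geq 1$ for all $(j_1, \ldots, j_d) \in S$ (positivity automatically upgrades to $\geq 1$ since the weights are integers). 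We then have $\sr(X^{\otimes k}) = \sr(Y^{\otimes k})$ for every $k$.

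Next, view $Y^{\otimes k}$ as an element of $\Ten_{m^k}^d$ by grouping the $dk$ tensor factors into $d$ blocks of $k$. Its entry at $(J_1, \ldots, J_d) \in ([m]^k)^d$ equals $\prod_{\ell=1}^k Y_{J_1(\ell), \ldots, J_d(\ell)}$, so
\[
  \supp(Y^{\otimes k}) = \{ (J_1, \ldots, J_d) : (J_1(\ell), \ldots, J_d(\ell)) \in S \text{ for every } \ell \in [k] \}.
\]
Define $A_i(J) := \sum_{\ell=1}^k a_{i, J(\ell)}$ for $J \in [m]^k$. Summing the deficiency inequality over the $k$ layers, every support point $(J_1, \ldots, J_d)$ of $Y^{\otimes k}$ satisfies $\sum_{i=1}^d A_i(J_i) \geq k$, so by pigeonhole at least one $A_i(J_i) \geq k/d$. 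Therefore, setting $T_i := \{ J \in [m]^k : A_i(J) \geq k/d \}$, we have $\supp(Y^{\otimes k}) \subseteq \bigcup_{i=1}^d \{ (J_1, \ldots, J_d) : J_i \in T_i \}$, and peeling off the $i$-th slab as a sum of $|T_i|$ slice-rank-one tensors (one for each value $J_i \in T_i$) yields $\sr(Y^{\otimes k}) \leq \sum_{i=1}^d |T_i|$.

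It remains to show that $\sum_i |T_i| < m^k$ for some $k$. Under the uniform distribution on $[m]^k$, $A_i(J)$ is a sum of $k$ i.i.d.\ random variables with mean $\tfrac{1}{m} \sum_j a_{i,j} = 0$; when $(a_{i,j})_j$ is not identically zero this variable has strictly positive variance, and Chernoff's bound gives $|T_i|/m^k = \Pr[A_i(J) \geq k/d] \leq e^{-c_i k}$ for some constant $c_i > 0$ depending only on $d$ and the weights. Indices $i$ with $(a_{i,j})_j \equiv 0$ trivially contribute $|T_i| = 0$. Choosing $k$ large enough makes $\sum_i |T_i| \leq d \cdot m^k \cdot e^{-(\min_i c_i) k} < m^k$, as desired. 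The only mildly subtle step is verifying the Chernoff bound, which amounts to showing $\E[\exp(t \, a_{i,J})] < e^{t/d}$ for some $t > 0$; a Taylor expansion at $t=0$ gives $\E[\exp(t \, a_{i,J})] = 1 + O(t^2)$ while $e^{t/d} = 1 + t/d + O(t^2)$, so any sufficiently small $t > 0$ works, and this is the only quantitative ingredient of the argument.
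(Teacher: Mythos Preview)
Your argument is correct. Note, however, that the paper does not itself prove \cref{thm:slice rank2}: it is imported from~\cite{BCCGNSU17}, and only the companion \cref{thm:slice rank1} is established in \cref{app:git}. Your route---reducing via the Hilbert--Mumford criterion to a deficient support, summing the deficiency inequality across the $k$ layers of $Y^{\otimes k}$, covering the support by $d$ coordinate slabs $\{J_i\in T_i\}$ via pigeonhole, and then bounding $|T_i|/m^k$ by a large-deviation (Chernoff) estimate for the mean-zero sum $A_i(J)$---is precisely the method of~\cite{BCCGNSU17} and, more broadly, of Tao's slice-rank technique. The one analytic step, namely that $\E[e^{t\,a_{i,J}}]=1+O(t^2)<1+t/d+O(t^2)=e^{t/d}$ for small $t>0$ whenever $(a_{i,j})_j\not\equiv 0$, is exactly the right justification, and your handling of the degenerate indices $i$ with $(a_{i,j})_j\equiv 0$ (where $T_i=\emptyset$) closes the argument cleanly.
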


The notion of instability we consider is slightly different from the notion of instability considered in~\cite{BCCGNSU17} (since we use Mumford's original definition, there are differences in normalization; however, it is not hard to see that the two notions are polynomially related).
Hence we will provide a proof of \cref{thm:slice rank1} in~\cref{app:git}.
There we also prove the following lemma.

\begin{restatable}{lemma}{asymptstab}\label{le:asympt-stab}
Let $k \ge 1$. Then
$X$ is in the null cone of the action of $\SL(m)^d$ if and only if
$X^{\otimes k}$ is in the null cone of the action of $\SL(m^k)^d$.
\end{restatable}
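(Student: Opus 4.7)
The argument rests on the identification $(\C^{m^k})^{\otimes d} \cong ((\C^m)^{\otimes k})^{\otimes d}$, which we further reorder to $((\C^m)^{\otimes d})^{\otimes k}$ by grouping the $k$ copies of each of the $d$ legs together. Under this reordering, the tensor $X^{\otimes k}$ lives naturally in $(\C^{m^k})^{\otimes d}$, and the subgroup $H := \{(A_1^{\otimes k},\dots,A_d^{\otimes k}) : A_i\in\SL(m)\}$ sits inside $\SL(m^k)^d$ (since $\det(A^{\otimes k}) = \det(A)^{k m^{k-1}} = 1$). I will prove the two implications separately.

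For the forward direction, assume $X$ is in the null cone of $\SL(m)^d$, so $\capac_{\SL(m)^d}(X) = 0$. Restricting the infimum defining $\capac_{\SL(m^k)^d}(X^{\otimes k})$ to the subgroup $H$ and using multiplicativity of the norm under tensor products,
\[
\capac_{\SL(m^k)^d}(X^{\otimes k}) \;\le\; \inf_{(A_i)\in\SL(m)^d} \lVert (A_1^{\otimes k}\!\otimes\!\cdots\!\otimes\! A_d^{\otimes k})\cdot X^{\otimes k}\rVert_2^2 \;=\; \capac_{\SL(m)^d}(X)^k \;=\; 0,
\]
so $X^{\otimes k}$ lies in the null cone of $\SL(m^k)^d$.

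For the reverse direction, I will prove the contrapositive using the Kempf--Ness theorem (\cref{thm:kempf ness}). Suppose $X$ is \emph{not} in the null cone for $\SL(m)^d$. Then there exists $Y\in \overline{\SL(m)^d\cdot X}$ such that the reduced density matrices of $\rho := YY^\dagger/Y^\dagger Y$ satisfy $\rho_i = I_m/m$ for all $i\in[d]$. Take a sequence $g_n\in\SL(m)^d$ with $g_n\cdot X\to Y$; then $g_n^{\otimes k}\in H\subseteq\SL(m^k)^d$ and $g_n^{\otimes k}\cdot X^{\otimes k} = (g_n\cdot X)^{\otimes k}\to Y^{\otimes k}$, so $Y^{\otimes k}\in\overline{\SL(m^k)^d\cdot X^{\otimes k}}$. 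The key computation is that $Y^{\otimes k}(Y^{\otimes k})^\dagger/(Y^{\otimes k})^\dagger Y^{\otimes k} = \rho^{\otimes k}$, and under the regrouping described above, its $i$-th reduced density matrix on $\C^{m^k}$ equals
\[
\rho_i^{\otimes k} \;=\; (I_m/m)^{\otimes k} \;=\; I_{m^k}/m^k.
\]
Thus $Y^{\otimes k}$ is $d$-stochastic for the $\SL(m^k)^d$-action, and applying Kempf--Ness in the other direction shows that $X^{\otimes k}$ is not in the null cone of $\SL(m^k)^d$.

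The only delicate step is the marginal computation in the reverse direction: one must keep careful track of how the reordering of tensor factors turns the $i$-th one-body marginal of $\rho^{\otimes k}$ (viewed on $(\C^m)^{\otimes dk}$) into $\rho_i^{\otimes k}$ (viewed on $\C^{m^k}$). This is routine but is the place where a bookkeeping error could sneak in; once this is done, the equivalence follows cleanly by combining the two arguments above.
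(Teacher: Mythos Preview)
Your proof is correct and follows essentially the same approach as the paper: the reverse direction is identical (contrapositive via Kempf--Ness, using that the one-body marginals of $Y^{\otimes k}$ are $\rho_i^{\otimes k}=I_{m^k}/m^k$), and the forward direction differs only cosmetically---the paper embeds $\SL(m)^d$ into $\SL(m^k)^d$ by acting on a single tensor copy ($A_i\mapsto A_i\otimes I^{\otimes(k-1)}$) rather than diagonally ($A_i\mapsto A_i^{\otimes k}$), but either embedding drives $X^{\otimes k}$ to zero.
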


\Cref{le:asympt-stab}, combined with the above theorems, implies that the asymptotic non-fullness of the slice rank characterizes the null cone:

\begin{corollary}\label{cor:slice rank null cone}
A tensor $X \in \Ten_m^d$ is in the null cone iff $\sr(X^{\otimes k}) < m^k$ for some positive integer $k$.
\end{corollary}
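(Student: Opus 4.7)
The plan is to derive \cref{cor:slice rank null cone} as an essentially immediate corollary of \cref{thm:slice rank1}, \cref{thm:slice rank2}, and \cref{le:asympt-stab}, after clarifying the identification under which $X^{\otimes k}$ is regarded as a tensor in $\Ten_{m^k}^d$.

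First I would fix notation: given $X\in\Ten_m^d$, the $k$-fold tensor power $X^{\otimes k}$ naturally lives in $(\C^m)^{\otimes dk}$, and by regrouping the $d$ copies of $(\C^m)^{\otimes k}$ and identifying each with $\C^{m^k}$ through the standard isomorphism, we view $X^{\otimes k}\in\Ten_{m^k}^d$. This is precisely the identification under which \cref{le:asympt-stab} is stated, so the three cited results can all be applied consistently to $X^{\otimes k}$.

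For the forward direction, suppose $X$ is in the null cone of $\SL(m)^d$. Then \cref{thm:slice rank2} gives directly the existence of some positive integer $k$ such that $\sr(X^{\otimes k})<m^k$. For the reverse direction, suppose $\sr(X^{\otimes k})<m^k$ for some $k\geq 1$. Applying \cref{thm:slice rank1} to the tensor $X^{\otimes k}\in\Ten_{m^k}^d$ (with $m$ in the statement of that theorem replaced by $m^k$), we conclude that $X^{\otimes k}$ is in the null cone of $\SL(m^k)^d$. Then \cref{le:asympt-stab} converts this back into the statement that $X$ itself is in the null cone of $\SL(m)^d$.

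There is essentially no obstacle here beyond making sure the identifications line up: the only subtle point is that \cref{thm:slice rank1} is stated for the natural action of $\SL(m^k)^d$ on $\Ten_{m^k}^d$, which is exactly the action that \cref{le:asympt-stab} pairs with the original action of $\SL(m)^d$ on $\Ten_m^d$ via the block-embedding $\SL(m)^d\hookrightarrow\SL(m^k)^d$ given by $g\mapsto g^{\otimes k}$. Once this compatibility is noted, the corollary is a two-line combination of the cited statements.
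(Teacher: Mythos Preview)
Your proposal is correct and matches the paper's approach: the paper states the corollary as an immediate consequence of \cref{thm:slice rank1}, \cref{thm:slice rank2}, and \cref{le:asympt-stab}, and your argument spells out precisely this combination (forward direction via \cref{thm:slice rank2}, reverse via \cref{thm:slice rank1} applied to $X^{\otimes k}$ followed by \cref{le:asympt-stab}).
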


A tighter connection between the asymptotic slice rank and moment polytopes (images of moment maps discussed above) will be developed in~\cite{CGNOS17}.

We remark that the problem of determining the slice rank of a tensor $X \in \Ten_m^d$ is NP-hard for $d \ge 3$~\cite{Sawin17}.
It is plausible that the problem of determining whether $\sr(X) < m$ is NP-hard as well, but we do not know of a proof. 
However, the \emph{asymptotic} non-fullness of slice rank is equivalent to the null-cone problem, which we believe to be in polynomial time and make partial progress towards in this paper.
Indeed, using the scaling algorithm developed in this paper (\cref{thm:eps instability}), \cref{cor:slice rank null cone}, and the lower bound on the instability in \cref{thm:slice rank1}, we obtain the following corollary:

\begin{corollary}
There is a deterministic $\poly(m^d, b)$ time algorithm that, given a tensor $X \in \Ten_m^d$,
with entries of bit size bounded by $b$, such that either
\begin{enumerate}
\item $\sr(X^{\otimes k}) = m^k$ for all $k$, or
\item $\sr(X) < m$,
\end{enumerate}
decides which is the case.
\end{corollary}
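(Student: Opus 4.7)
The plan is to reduce this promise problem to the $\eps$-instability problem (\cref{prb:eps-instability}) and invoke the scaling algorithm of \cref{thm:eps instability}, with a suitable choice of $\eps$ dictated by the slice-rank structure.

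First, I would translate the two cases of the promise into quantitative statements about the instability $\ins(X)$. In Case~1, the hypothesis $\sr(X^{\ot k}) = m^k$ for every $k \geq 1$ says, via \cref{cor:slice rank null cone}, that $X$ is not in the null cone of $\SL(m)^d$; then the Hilbert--Mumford criterion (\cref{prp:hilbert mumford}) forces $\ins(X) \leq 0$. In Case~2, the hypothesis $\sr(X) < m$ is exactly the input to \cref{thm:slice rank1}, which immediately yields the explicit lower bound $\ins(X) \geq 1/\sqrt{d m^3}$. Thus the two cases of the promise align precisely with the two cases of $\eps$-instability when we set $\eps := 1/\sqrt{d m^3}$.

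Next, I would run the scaling algorithm (\cref{alg:scaling}) on $X$ with this $\eps$ and invoke \cref{thm:eps instability}: output Case~1 if the algorithm certifies ``not in the null cone'', and Case~2 if it certifies ``$\ins(X) \geq \eps$''. The stated running time $\poly(n, b, 1/\eps)$ becomes $\poly(m^d, b, \sqrt{d m^3})$, which is $\poly(m^d, b)$ as required (both $m$ and $d$ are at most $m^d$ in the nontrivial regime).

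There is essentially no hard step left; the conceptual work has already been done. The key leverage is the quantitative lower bound $\ins(X) \geq 1/\sqrt{d m^3}$ furnished by \cref{thm:slice rank1}: it ensures that the gap between the two alternatives is only inverse-polynomial in~$m$ and $d$, which is exactly what brings the problem within reach of the FPTAS of \cref{thm:main}. Absent this slice-rank hypothesis, one could only appeal to the exponentially small lower bound of \cref{lem:gap}, and the same reduction would yield merely an exponential-time algorithm -- precisely the barrier of \cref{thm:null cone exp scaling}.
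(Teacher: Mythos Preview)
Your proposal is correct and follows exactly the approach indicated in the paper: reduce to the $\eps$-instability problem via \cref{cor:slice rank null cone} (Case~1 $\Rightarrow$ not in the null cone) and \cref{thm:slice rank1} (Case~2 $\Rightarrow$ $\ins(X)\ge 1/\sqrt{dm^3}$), then invoke \cref{thm:eps instability} with $\eps=1/\sqrt{dm^3}$. The running-time check that $1/\eps=\sqrt{dm^3}$ is $\poly(m^d)$ is the only thing the paper leaves implicit, and you have filled it in correctly.
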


We may also replace the second condition by $\sr(X^{\otimes k_0}) < m^{k_0}$ for some constant $k_0$.

\section{Cayley's Omega process and bounds on coefficients of generators of \texorpdfstring{$\C[V]^G$}{C[V]\^{}G}}\label{sec:cayley}

In this section, we prove upper bounds on the {\em coefficients} of the generating invariants for very general group actions on {\em any} vector space, namely those which can be decomposed into the actions of
special linear groups.
An important corollary of this result is that we obtain an alternative upper bound for the coefficients of the invariants of the
tensor product action of the group $G = \SL(n_1) \times \cdots \times \SL(n_d)$ on $V=\C^{n_0}\ot\C^{n_1}\ot\dots\ot\C^{n_d}$, studied in the previous sections.

As we prove this bound, we also devise a way of computing a set of generating invariants in exponential time.
This allows us to give an alternative (algebraic rather than analytic)
exponential algorithm (\cref{alg:null cone alternative}) for deciding membership in the null cone of tensor product action of $G = \SL(n_1) \times \cdots \times \SL(n_d)$ which generalizes immediately to arbitrary polynomial actions of~$G$.

\subsection{Reynolds operator}

We now introduce the Reynolds operator, which as we shall soon see, gives us an approach to computing a generating set of the invariant polynomials of a given degree.
The general definition of the Reynolds operator is as follows (cf.~\cite[\S2.2.1]{Derksen-Kemper}):

\begin{definition}[Reynolds operator]
For a reductive group~$G$ acting on a vector space~$V$, the Reynolds operator~$\RO$ is a $G$-invariant projection
\begin{align*}
\RO: \mathbb{C}[V] \rightarrow \mathbb{C}[V]^G.
\end{align*}
That is, it is a linear map satisfying the following properties for every polynomial $p \in \mathbb{C}[V]$:
\begin{itemize}
\item $\RO(p) = p$ if $p \in \mathbb{C}[V]^G$,

\item $\RO(g\cdot p) = \RO(p)$ for all $g \in G$.
\end{itemize}
\end{definition}

\noindent It is known that the Reynolds operator exists and that it is unique. It can be defined by averaging $g\cdot p$ over a maximal compact subgroup of~$G$ with respect to its invariant Haar measure.

The following is an immediate consequence of the defining properties of the Reynolds operator.

\begin{lemma}\label{span_inv} The set $\{\RO(\nu) \mid \nu \text{ monomial of degree } D\}$
generates the vector space $\mathbb{C}[V]_D^G$ of $G$-invariant polynomials of degree~$D$.
\end{lemma}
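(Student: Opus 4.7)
The plan is to exploit two basic features of the Reynolds operator: its linearity, and the fact that it preserves the homogeneous grading on $\C[V]$. Once these are in hand, the lemma is just a one-line identity.

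First I would note that the grading $\C[V] = \bigoplus_{D\ge 0} \C[V]_D$ is preserved by the $G$-action: since $G$ acts linearly on $V$, the induced action on $\C[V]$ sends $\C[V]_D$ into itself, so each $\C[V]_D$ is a $G$-subrepresentation. I claim that $\RO$ then restricts to a map $\C[V]_D \to \C[V]_D^G$. The cleanest way to see this is via the uniqueness/averaging description: $\RO$ is a $G$-equivariant projection onto the invariants, so on each isotypic component of $\C[V]$ it is the identity on the trivial-type piece and zero on all other types. Since each $\C[V]_D$ is $G$-invariant, $\RO$ carries it into $\C[V]_D \cap \C[V]^G = \C[V]_D^G$.

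Next, I would take an arbitrary $p \in \C[V]_D^G$ and expand it in the monomial basis as
\[
p = \sum_{\nu \text{ monomial of degree } D} c_\nu \, \nu,
\]
which is possible since $p$ is homogeneous of degree $D$. Applying $\RO$ to both sides, linearity gives
\[
\RO(p) = \sum_\nu c_\nu \, \RO(\nu),
\]
and since $p \in \C[V]^G$, the projection property $\RO(p) = p$ yields
\[
p = \sum_\nu c_\nu \, \RO(\nu).
\]
This exhibits $p$ as an explicit $\C$-linear combination of the elements $\RO(\nu)$ for monomials $\nu$ of degree $D$, which is exactly the claim.

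The only nontrivial point here is the degree-preservation of $\RO$; everything else is formal. I do not foresee any real obstacle, since the degree-preservation is a direct consequence of the uniqueness of the Reynolds operator together with the fact that the grading is $G$-stable. (Alternatively, if one prefers the concrete description of $\RO$ as an average over a maximal compact subgroup $K$, then $\RO$ commutes with the scaling action $v\mapsto tv$ on $V$, hence preserves degree automatically.)
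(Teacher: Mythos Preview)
Your proof is correct and is precisely the argument the paper has in mind: the paper does not write out a proof at all, merely stating that the lemma ``is an immediate consequence of the defining properties of the Reynolds operator,'' and your expansion via linearity, degree-preservation, and the projection identity $\RO(p)=p$ is exactly that immediate consequence spelled out.
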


Thus, to understand the invariant ring $\C[V]^G$ and the null cone, we only need to understand the behaviour of the Reynolds operator on monomials.
Moreover, a theorem by Derksen~\cite{derksen2001polynomial} tells us that to determine a set of generators for $\C[V]^G$, one only needs to compute the invariants of degree at most exponential in the dimension of the group $G$.

With these observations in mind, given any group $G$ acting on a space $V$, an important algorithmic question is to efficiently compute the Reynolds operator $\RO$.
For this, a useful property of the Reynolds operator is that it can be computed from the Reynolds operators of any normal subgroup~$N$ of~$G$ and the corresponding quotient group~$G/N$.
This property is formalized in the following lemma:

\begin{lemma}[{\cite[Lemma 4.5.3]{Derksen-Kemper}}]\label{lem:reynolds-composition}
  If $G$ has a normal subgroup $N$ and $\cR_G, \cR_N, \cR_{G/N}$ are the Reynolds operators with respect to the groups $G$, $N$ and $G/N$, respectively, then we have $\cR_G = \cR_N \cR_{G/N}$.
\end{lemma}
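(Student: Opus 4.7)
The plan is to prove the identity $\cR_G = \cR_N \cR_{G/N}$ by invoking the uniqueness of the Reynolds operator. That is, I would define the composite map $T := \cR_{G/N} \circ \cR_N$ (read as: first project onto $N$-invariants, then onto $G/N$-invariants of those) and verify that $T$ satisfies the two defining properties of $\cR_G$, namely that (i) it restricts to the identity on $\C[V]^G$, and (ii) it is $G$-invariant in the sense that $T(g \cdot p) = T(p)$ for all $g \in G$ and $p \in \C[V]$. Since the Reynolds operator of a reductive group action is unique, this forces $T = \cR_G$.

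Before this works one must make sense of the target of $T$. Because $N \trianglelefteq G$, the subspace $\C[V]^N \subseteq \C[V]$ is stable under the $G$-action: for $n \in N$, $g \in G$, and $p \in \C[V]^N$, $n \cdot (g \cdot p) = g \cdot (g^{-1} n g) \cdot p = g \cdot p$ because $g^{-1} n g \in N$. Since $N$ acts trivially on $\C[V]^N$, the induced action factors through $G/N$, and its invariants are exactly $(\C[V]^N)^{G/N} = \C[V]^G$. Thus $\cR_{G/N}: \C[V]^N \to \C[V]^G$ is well-defined and $T$ lands in $\C[V]^G$.

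Property (i) is straightforward: for $p \in \C[V]^G \subseteq \C[V]^N$, we have $\cR_N(p) = p$ by the first defining property of $\cR_N$, and then $\cR_{G/N}(p) = p$ by the first defining property of $\cR_{G/N}$, since $p$ is $G$-invariant and hence $G/N$-invariant on $\C[V]^N$. Property (ii) requires the key intermediate fact that $\cR_N: \C[V] \to \C[V]^N$ is \emph{$G$-equivariant}, i.e.\ $\cR_N(g \cdot p) = g \cdot \cR_N(p)$ for every $g \in G$. I would deduce this from uniqueness as well: the map $p \mapsto g^{-1} \cdot \cR_N(g \cdot p)$ is a projection onto $\C[V]^N$ that is $N$-invariant (using normality of $N$, so that conjugation by $g$ permutes $N$), hence agrees with $\cR_N$. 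Granted this equivariance, for any $g \in G$,
\[
T(g \cdot p) = \cR_{G/N}(\cR_N(g \cdot p)) = \cR_{G/N}(g \cdot \cR_N(p)) = \cR_{G/N}(\cR_N(p)) = T(p),
\]
where the third equality uses that $\cR_{G/N}$ is $G/N$-invariant and $g$ acts as its image $\bar g \in G/N$ on the $N$-invariant polynomial $\cR_N(p)$.

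The main technical point, and the only place care is needed, is establishing the $G$-equivariance of $\cR_N$; once that is in hand, both properties drop out and uniqueness finishes the argument. (Alternatively, if one prefers a direct averaging proof via the Haar measure on a maximal compact subgroup $K \subseteq G$ with $K \cap N$ maximal compact in $N$, the identity $\cR_G = \cR_N \cR_{G/N}$ reduces to Fubini's theorem for the decomposition of Haar measure on $K$ into Haar measures on $K \cap N$ and $K/(K \cap N)$; I would mention this as a conceptual backup but prefer the uniqueness argument as it avoids unnecessary analytic machinery.)
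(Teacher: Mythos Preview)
The paper does not supply its own proof of this lemma; it is stated with a citation to \cite[Lemma~4.5.3]{Derksen-Kemper} and used as a black box. Your argument is correct and is essentially the standard one: verify that the composite projection satisfies the two defining properties of $\cR_G$ and invoke uniqueness of the Reynolds operator. The $G$-equivariance of $\cR_N$ is indeed the only point requiring care, and your proof of it (conjugating $\cR_N$ by a fixed $g\in G$, using normality of $N$ to see that the conjugate is again an $N$-invariant projection onto $\C[V]^N$, and invoking uniqueness of $\cR_N$) is sound. One minor remark: the paper writes the composite as $\cR_N\,\cR_{G/N}$ while you write $\cR_{G/N}\circ\cR_N$; your reading ``first $\cR_N$, then $\cR_{G/N}$'' is the one that makes sense given that $\cR_{G/N}$ is only defined on $\C[V]^N$, and in the paper's sole application (direct products of $\SL(n_i)$'s, where each factor already acts on all of $\C[V]$ and the Reynolds operators commute) the distinction is immaterial.
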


This lemma essentially tells us that if $G$ can be decomposed into normal subgroups and quotient groups whose Reynolds operators are individually easy to compute, then $\RO$ is also easy to compute.
In \cref{subsec:coeff-bds} we will make use of this lemma to compute the Reynolds operator of our $d$-fold product action $G = \SL(n_1) \times \cdots \times \SL(n_d)$ via the Reynolds operators for the groups $\SL(n_i)$.
This will allow us to deduce good coefficient bounds on a set of generators for the invariants in $\C[V]^G$.

This is where Cayley's Omega process comes in.
The Omega process is used to construct the invariants for the groups $\SL(n_i)$, as we will now discuss.

\subsection{Cayley's Omega process}

For actions of the special linear group $\SL(m)$, Cayley's Omega process provides an
explicit way of constructing the Reynolds operator. We now define the Omega process
and study some of its properties.

Let $Z=(Z_{i,j})_{i,j=1}^m$ denote a matrix of variables and $q$ be a polynomial in $Z$.
The group $\GL(m)$ acts on the space of such polynomials:
for a matrix $A \in \GL(m)$, we define the polynomial $A \star q$ by $(A \star q)(Z) := q\left(A^{T} Z\right)$.
Clearly, this action preserves
homogeneity and degree.

\begin{definition}[Omega operator]\label{def:omega}
Given a polynomial $q$ in the variables
$(Z_{i,j})_{i,j=1}^m$, we define
\begin{align*}
\Omega(q) := \sum_{\sigma \in S_m}(-1)^{\sign(\sigma)} \frac{\partial^m q}{\partial Z_{1,\sigma(1)} \cdots \partial Z_{m,\sigma(m)}}.
\end{align*}
We also denote by $\Omega^r(q)$ the result of applying the differential operator~$\Omega$ $r$ times, starting with~$q$.
\end{definition}

If $q$ is homogeneous of degree $t$, then $\Omega^r(q)$ is homogeneous of degree $t-rm$.
In particular, $\Omega^r(q)$ is a constant if $q$ has degree $rm$.
The following is the crucial invariance property of the $\Omega$-process.

\begin{proposition} For any polynomial in $Z$ and
$A \in \GL(m)$ we have
$$
\Omega\left(A \star q\right) = \det(A) \cdot (A \star \Omega(q)) .
$$
\end{proposition}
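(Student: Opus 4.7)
The plan is to interpret $\Omega$ as the determinant of the matrix of partial derivative operators $(\partial_{Z_{i,j}})$ and to track how the linear change of variables $W = A^T Z$ transforms these operators. By the chain rule, $\partial_{Z_{i,j}}\bigl(q(A^T Z)\bigr) = \sum_{k=1}^m A_{i,k}\,(\partial_{W_{k,j}} q)(A^T Z)$. Iterating this (using that the $\partial_{Z_{i,j}}$ commute with one another), the $m$-fold product $\prod_{i=1}^m \partial_{Z_{i,\sigma(i)}}$ appearing in $\Omega(A\star q)$ expands as a sum indexed by tuples $(k_1,\dots,k_m)\in[m]^m$, and interchanging the order of summation yields
\[
\Omega(A\star q)(Z) \;=\; \sum_{k_1,\dots,k_m=1}^{m} \Bigl(\prod_{i=1}^m A_{i,k_i}\Bigr) \sum_{\sigma \in S_m} \sign(\sigma)\, \Bigl(\prod_{i=1}^m \partial_{W_{k_i,\sigma(i)}} q\Bigr)(A^T Z).
\]

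Next I would analyze the inner signed sum as a function of the tuple $(k_1,\dots,k_m)$. If two indices coincide, say $k_a = k_b$ with $a\neq b$, then pairing each $\sigma$ with $\sigma \circ (a,b)$ produces two terms of opposite sign but identical derivatives (once more by commutativity of partials), so the inner sum vanishes. For a tuple arising from a permutation $\pi\in S_m$ via $k_i = \pi(i)$, the substitution $\sigma = \tau \circ \pi$ gives $\prod_{i} \partial_{W_{\pi(i),\sigma(i)}} = \prod_{j} \partial_{W_{j,\tau(j)}}$, and since $\sign(\sigma) = \sign(\tau)\sign(\pi)$ the inner sum collapses to $\sign(\pi)\cdot\Omega(q)(A^T Z)$.

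Substituting these observations back, only the $m!$ permutation tuples survive and the outer sum becomes
\[
\Omega(A\star q)(Z) \;=\; \Bigl(\sum_{\pi\in S_m}\sign(\pi)\prod_{i=1}^m A_{i,\pi(i)}\Bigr) \cdot \Omega(q)(A^T Z) \;=\; \det(A)\cdot (A\star \Omega(q))(Z),
\]
which is the asserted identity. There is no deep obstacle here: the argument is essentially careful bookkeeping of the iterated chain rule combined with one sign-involution cancellation, reflecting the fact that $\Omega$ is a determinant and hence transforms by the determinant of the linear substitution. The same style of proof will subsequently help verify properties of the iterated operator~$\Omega^r$ and its role in constructing the Reynolds operator.
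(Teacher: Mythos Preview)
Your proof is correct and follows essentially the same route as the paper: apply the chain rule to turn $\partial_{Z_{i,j}}$ acting on $q(A^T Z)$ into an $A$-weighted sum of $\partial_{W_{k,j}}$, expand the $m$-fold product, swap the sums, and then use the alternating property to see that only permutation tuples survive with the correct sign, assembling into $\det(A)$. The only cosmetic difference is that the paper groups the $A$-factors into the inner signed sum (so that the vanishing for non-injective $\kappa$ and the $\sign(\kappa)\det(A)$ for permutations is read off directly as a determinant with possibly repeated columns), whereas you keep the $A$-factors outside and apply the sign involution to the derivative factors instead; the underlying multilinear/alternating bookkeeping is identical.
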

\begin{proof}
By the chain rule,
\begin{align*}
  \frac{\partial \left( A \star q \right)}{\partial Z_{i,j}}
= \sum_{k=1}^m A_{i,k} \left( A \star \frac{\partial q}{\partial Z_{k,j}} \right).
\end{align*}
Iterating $m$ times, we get
\begin{align*}
  \frac{\partial^m (A \star q)}{\partial Z_{i_1,j_1} \cdots \partial Z_{i_m,j_m}}
= \sum_{k_1=1,\dots,k_m=1}^m \left( \prod_{t=1}^m A_{i_t,k_t} \right) \left( A \star \frac{\partial q}{\partial Z_{k_1,j_1} \cdots \partial Z_{k_m,j_m}} \right).
\end{align*}
Hence,
\begin{align*}
  \Omega(A \star q)
&= \sum_{\sigma \in S_m}(-1)^{\sign(\sigma)} \frac{\partial^m (A \star q)}{\partial Z_{1,\sigma(1)} \cdots \partial Z_{m,\sigma(m)}} \\
&= \sum_{\sigma \in S_m}(-1)^{\sign(\sigma)} \sum_{k_1=1,\dots,k_m=1}^m \left( \prod_{t=1}^m A_{t,k_t} \right) \left( A \star \frac{\partial q}{\partial Z_{k_1,\sigma(1)} \cdots \partial Z_{k_m,\sigma(m)}} \right) \\
&= \sum_{\sigma \in S_m}(-1)^{\sign(\sigma)} \sum_{\kappa \colon [m]\to[m]} \left( \prod_{t=1}^m A_{t,\kappa(t)} \right) \left( A \star \frac{\partial q}{\partial Z_{\kappa(1),\sigma(1)} \cdots \partial Z_{\kappa(m),\sigma(m)}} \right) \\
&= \sum_{\sigma \in S_m}(-1)^{\sign(\sigma)} \sum_{\kappa \colon [m]\to[m]} \left( \prod_{t=1}^m A_{t,\kappa(\sigma(t))} \right) \left( A \star \frac{\partial q}{\partial Z_{\kappa(1),1} \cdots \partial Z_{\kappa(m),m}} \right) \\
&= \sum_{\kappa \colon [m]\to[m]} \left( A \star \frac{\partial q}{\partial Z_{\kappa(1),1} \cdots \partial Z_{\kappa(m),m}} \right) \left( \sum_{\sigma \in S_m}(-1)^{\sign(\sigma)} \prod_{t=1}^m A_{t,\kappa(\sigma(t))} \right).
\end{align*}
It is not hard to verify that
\begin{align*}
  \sum_{\sigma \in S_m}(-1)^{\sign(\sigma)} \prod_{t=1}^m A_{t,\kappa(\sigma(t))}
= \begin{cases}
  (-1)^{\sign(\kappa)} \det(A), &\text{ if $\kappa$ is a permutation}, \\
  0, &\text{ otherwise}.
\end{cases}
\end{align*}
Thus,
\begin{align*}
  \Omega(A \star q)
= \det(A) \cdot \sum_{\kappa\in S_m} (-1)^{\sign(\kappa)} \left( A \star \frac{\partial q}{\partial Z_{\kappa(1),1} \cdots \partial Z_{\kappa(m),m}} \right)
= \det(A) \cdot \left( A \star \Omega(q) \right).
\end{align*}
This completes the proof.
\end{proof}

As a corollary, we obtain that for any homogeneous polynomial~$q$ of degree $rm$, the map $A\mapsto \Omega^r(A \star q)$ is an $\SL(m)$-invariant (recall that in this case $\Omega^r(A \star q)$ is a complex number).
More generally, the following holds:

\begin{corollary}{\label{Omega_process}} Let $q$ be a homogeneous polynomial of degree $rm$ in the variables $(Z_{i,j})_{i,j=1}^m$ and $A \in \GL(m)$.
Then,
\begin{align*}
\Omega^r(A \star q) = \det(A)^r \cdot \Omega^r(q).
\end{align*}
\end{corollary}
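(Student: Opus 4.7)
The plan is to derive the corollary directly from the preceding proposition by induction on $r$. The proposition already establishes the key one-step commutation identity
\[ \Omega(A \star q) = \det(A) \cdot (A \star \Omega(q)) \]
for \emph{any} polynomial $q$, and the corollary is essentially the iterate of this identity $r$ times, followed by the observation that a constant polynomial is fixed by the $\star$-action.

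More concretely, first I would prove the intermediate statement that for every polynomial $q$ (not necessarily of degree $rm$) and every $A \in \GL(m)$,
\[ \Omega^r(A \star q) = \det(A)^r \cdot \left(A \star \Omega^r(q)\right). \]
The base case $r=1$ is exactly the proposition. For the inductive step, I would write
\[ \Omega^{r}(A \star q) = \Omega\!\left(\Omega^{r-1}(A \star q)\right) = \Omega\!\left(\det(A)^{r-1}\,(A \star \Omega^{r-1}(q))\right), \]
using the inductive hypothesis. Since $\Omega$ is linear and $\det(A)^{r-1}$ is a scalar, this equals $\det(A)^{r-1} \Omega(A \star \Omega^{r-1}(q))$. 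Applying the proposition (the base case) to the polynomial $\Omega^{r-1}(q)$ in place of $q$ yields $\det(A)^{r-1}\cdot \det(A)\cdot (A \star \Omega^r(q))=\det(A)^r\cdot (A \star \Omega^r(q))$, completing the induction.

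Finally, I would specialize to the hypothesis of the corollary: $q$ is homogeneous of degree $rm$. Each application of $\Omega$ lowers the degree by $m$ (as noted immediately after \cref{def:omega}), so $\Omega^r(q)$ is homogeneous of degree $0$, i.e.\ a constant $c\in\C$. The $\star$-action of any $A$ fixes constants, so $A \star \Omega^r(q)=\Omega^r(q)$, and the identity collapses to
\[ \Omega^r(A \star q) = \det(A)^r \cdot \Omega^r(q), \]
as desired. I do not anticipate any real obstacle here; the only mild care needed is to track that the inductive identity holds for polynomials of \emph{any} degree, so that the intermediate terms $\Omega^{r-1}(q)$ (of degree $m$, not $0$) are legitimate inputs to the proposition. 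Everything else is a direct consequence of linearity of $\Omega$ and of the one-step commutation already proved.
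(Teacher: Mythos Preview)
Your proposal is correct and is exactly the natural argument the paper intends: the corollary is stated without proof in the paper, as it is an immediate consequence of iterating the proposition $r$ times and then using that $\Omega^r(q)$ is a constant when $q$ has degree $rm$. Your careful induction (noting the intermediate identity holds for polynomials of any degree) makes this implicit step explicit, and there is nothing to add.
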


Now let us see how to get a Reynolds operator for a polynomial action of $\SL(m)$ via Cayley's Omega-process.
We assume that $\SL(m)$ acts by homogeneous polynomials of degree~$\ell$ on an $n$-dimensional vector space~$V$.
This means that $Z \circ v = \rho(Z)v$, where $\rho(Z)$ is an $n$ by $n$ matrix whose entries are homogeneous polynomials of degree~$\ell$ in the variables $Z_{i,j}$ (this action is not to be confused with the $\star$ action defined above).

Let $p \in \mathbb{C}[V]$ be a polynomial function on~$V$.
Then we define the polynomial function $Z \circ p$ on~$V$ by
\begin{align*}
(Z \circ p)(v) := p\bigl(\rho(Z)^T v\bigr).
\end{align*}
This defines a polynomial action of $\SL(m)$ on $\C[V]$.
If $p$ is a homogeneous polynomial, we can write it as $p = \sum_{\nu} p_\nu \cdot \nu$, where $p_\nu\in\C$ and $\nu$ ranges over all monomials of the same degree as $p$.
Now, if $p$ is a homogeneous polynomial of degree $tm$ then so is $Z \circ p$, and hence can
be written as
\begin{align*}
Z \circ p = \sum_{\nu} \wt{p}_\nu(Z) \cdot \nu,
\end{align*}
where it is easily checked that the $\wt{p}_\nu(Z)$ are homogeneous polynomials of degree $\ell{}tm$ in the $Z_{i,j}$'s
(which furthermore depend linearly on $p$).
We can now define a scaled version of the Reynolds operator as follows:
\begin{equation}\label{eq:reynolds-slice}
\cR(p) := \Omega^{\ell{}t}(Z \circ p) := \sum_{\nu} \Omega^{\ell t}(\wt{p}_\nu(Z)) \cdot \nu .
\end{equation}
(Recall that $\Omega^{\ell t}(\wt{p}_\nu(Z)) \in\C$.)
We also define $\cR(p):=0$ if $p$ is homogeneous of a degree that is not divisible by $m$,
and extend $\cR$ to a linear map $\C[V] \to \C[V]$.

The next lemma shows that this map $\cR$, suitably rescaled, is a map onto the invariant ring $\C[V]^G$ that satisfies the first defining property of the Reynolds operator.
This is sufficient to ensure \cref{span_inv}.
The second property also holds (but we refrain from verifying it), so that $ c_{\ell t,m}^{-1}\cR$ is indeed the Reynolds operator in our setting.

\begin{lemma}\label{le:Cay-prop}
\begin{enumerate}
\item $c_{\ell t,m} :=\Omega^{\ell t}\big(\det(Z)^{\ell t} \big)$ is a nonzero constant.
\item For all $p \in \mathbb{C}[V]$, $\cR(p)$ is an $\SL(m)$-invariant polynomial.
\item For all $p \in \mathbb{C}[V]^{\SL(m)}$ we have $\cR(p) = c_{\ell t,m} \, p$.
\end{enumerate}
\end{lemma}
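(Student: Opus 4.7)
The plan is to prove the three claims in order, with the main ingredients being the transformation law in Corollary~\ref{Omega_process} and the first fundamental theorem of invariant theory for $\SL(m)$.

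For (1), the polynomial $\det(Z)^{\ell t}$ is homogeneous of degree $m\cdot\ell t$ in the entries of $Z$, and each application of $\Omega$ lowers the degree by $m$, so $c_{\ell t, m}$ is a constant. For non-vanishing I would prove the classical identity
\[ \Omega(\det(Z)^k) = k(k+1)\cdots(k+m-1)\cdot \det(Z)^{k-1} \]
by induction on $k$: the base case is a direct computation giving $\Omega(\det(Z)) = m!$; for the inductive step one writes $\det(Z)^k = \det(Z)\cdot\det(Z)^{k-1}$, expands via the product rule together with the identity $\partial_{ij}\det(Z) = \mathrm{cof}_{ij}(Z)$, and collapses the resulting sums against the signed permutations appearing in $\Omega$. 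Iterating then gives
\[ c_{\ell t, m} = \prod_{j=1}^{\ell t} j(j+1)\cdots(j+m-1), \]
which is a positive integer, hence nonzero.

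For (2), fix $B\in\SL(m)$ and set $q_v(Z) := p(\rho(Z)^T v)$. Using the representation property $\rho(BZ) = \rho(B)\rho(Z)$, I would compute
\[ (B\circ\cR(p))(v) = \cR(p)(\rho(B)^T v) = \Omega_Z^{\ell t}\!\left[p(\rho(Z)^T \rho(B)^T v)\right] = \Omega_Z^{\ell t}\!\left[q_v(BZ)\right]. \]
From the definition of $\star$, $q_v(BZ) = (B^T \star q_v)(Z)$, and Corollary~\ref{Omega_process} (with $A = B^T$, $r = \ell t$) gives $\Omega^{\ell t}(B^T\star q_v) = \det(B)^{\ell t}\Omega^{\ell t}(q_v) = \cR(p)(v)$, since $\det(B)=1$.

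For (3), by linearity it suffices to treat a homogeneous $p \in \C[V]^{\SL(m)}$ of degree $tm$. Consider the bivariate polynomial $Q(Z,v) := p(\rho(Z)^T v)$, which is homogeneous of degree $\ell tm$ in $Z$. Combining the $\SL(m)$-invariance of $p$ with $\rho(ZB) = \rho(Z)\rho(B)$ yields $Q(ZB,v) = p(\rho(B)^T \rho(Z)^T v) = p(\rho(Z)^T v) = Q(Z,v)$ for every $B \in \SL(m)$, so $Q(\cdot,v)$ is invariant under right multiplication by $\SL(m)$. By the first fundamental theorem of invariant theory for $\SL(m)$, any such polynomial in the entries of a square matrix $Z$ is a polynomial in $\det(Z)$, and by homogeneity we must have $Q(Z,v) = f(v)\det(Z)^{\ell t}$ for some $f\in\C[V]$. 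Specializing $Z = I$ gives $f = p$, and applying $\Omega_Z^{\ell t}$ then yields $\cR(p) = p\cdot\Omega^{\ell t}(\det(Z)^{\ell t}) = c_{\ell t, m}\cdot p$.

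I expect the main obstacle to be part (1): parts (2) and (3) follow essentially mechanically from Corollary~\ref{Omega_process} and the first fundamental theorem, whereas non-vanishing of $c_{\ell t, m}$ requires either the explicit cofactor-based induction above or, alternatively, an abstract argument via the non-degeneracy of the pairing $\langle f, g\rangle := (f(\partial)g)(0)$ on polynomials, noting that $\Omega = \det(\partial)$ and that the self-pairing of $\det(Z)^{\ell t}$ under this (positive-definite) form is manifestly positive.
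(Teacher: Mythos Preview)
Your proposal is correct and follows essentially the same route as the paper. Parts~(2) and~(3) are virtually identical to the paper's argument: both hinge on the identity $\Omega^{\ell t}((YZ)\circ p)$ computed two ways via Corollary~\ref{Omega_process}, and both reduce~(3) to the fact that $Z\circ p = \det(Z)^{\ell t}\,p$ for an invariant~$p$ (the paper asserts this directly, you derive it from the first fundamental theorem for $\SL(m)$ acting on an $m\times m$ matrix by one-sided multiplication --- a slightly heavier but equally valid justification). The only real difference is in~(1): the paper simply cites \cite[Lemma~4.5.25]{Derksen-Kemper}, whereas you supply the classical Cayley identity $\Omega(\det(Z)^k)=k(k+1)\cdots(k+m-1)\det(Z)^{k-1}$ and iterate, which makes your treatment self-contained and in fact yields the explicit value of $c_{\ell t,m}$.
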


\begin{proof}
\begin{enumerate}

\item This is~\cite[Lemma 4.5.25]{Derksen-Kemper}.

\item We will prove that for $Y\in\SL(m)$,
$$
Y \circ \cR(p) = \det(Y)^{\ell t} \cdot \cR(p) ,
$$
by showing that both sides of the equation above are equal to
$\Omega^{\ell t}((YZ) \circ p)$.
For this we may assume that $p$ is homogeneous of degree $tm$.

By decomposing
\begin{equation*}
  (YZ) \circ p = Y \circ (Z \circ p) = Y \circ \sum_{\nu} \wt{p}_\nu(Z)  \cdot \nu
  = \sum_{\nu} \wt{p}_\nu(Z)  \cdot (Y \circ \nu)
\end{equation*}
we get 
\begin{align*}
  \Omega^{\ell t}((YZ) \circ p) &=
  \Omega^{\ell t}\left( \sum_{\nu} \wt{p}_\nu(Z)  \cdot (Y \circ \nu) \right)
  = \sum_{\nu} \Omega^{\ell t}(\wt{p}_\nu(Z))  \cdot (Y \circ \nu) \\
  &= Y \circ \sum_{\nu} \Omega^{\ell t}(\wt{p}_\nu(Z))  \cdot \nu = Y \circ \cR(p) ,
\end{align*}
where the second equality is true because $Y$ and $\nu$ are constants with respect
to $\Omega$, and the third equality holds because of linearity of the group action.
On the other hand, if we decompose 
$$
  (YZ) \circ p = \sum_{\nu} \wt{p}_\nu(YZ) \cdot \nu =
  \sum_{\nu} (Y^T \star \wt{p}_\nu(Z)) \cdot \nu ,
$$
we obtain that $\Omega^{\ell t}((YZ) \circ p)$ equals
\begin{align*}
  \Omega^{\ell t}((YZ) \circ p)
  = \sum_{\nu} \Omega^{\ell t}(Y^T \star \wt{p}_\nu(Z)) \cdot \nu
  = \sum_{\nu} \det(Y)^{\ell t} \cdot \Omega^{\ell t}(\wt{p}_\nu(Z)) \cdot \nu
  = \det(Y)^{\ell t} \cdot \cR(p)
\end{align*}
where the second equality holds due to \cref{Omega_process}.

\item Let $p$ be a nonzero homogeneous invariant. It is easy to see that its degree
is a multiple of $m$, say $tm$.
If $p$ is homogeneous of degree $tm$ and invariant under the
action of $\SL(m)$, there is some integer $d$ such that
$Z \circ p = \det(Z)^{d} \cdot p$.
Since $Z \circ p$ is of degree $\ell tm$ in $Z$, we must have $d=\ell t$.
By the definition of the operator $\cR$, this implies
\[
\cR(p) = \Omega^{\ell t}(Z \circ p) = \Omega^{\ell t}\left(\det(Z)^{\ell t} \right) \cdot p = c_{\ell{}t,m} p. \qedhere
\]
\end{enumerate}
\end{proof}

Now that we have a description of the Reynolds operator $\cR$ for the $\SL(m)$-action, we can use it to bound the size of the coefficients of invariant polynomials $\cR(p)$ obtained by applying the Reynolds operator to some polynomial~$p \in \C[V]$. 

To do that, we will need the following property of the Omega process, which tells us that
applying the Omega process to a polynomial does not increase the size of the coefficients
by much.

\begin{lemma}[Omega process and coefficient size]\label{lem:omega-coeff}
  Let $Z = (Z_{i,j})_{i,j=1}^m$ be a matrix of variables and $p(Z) \in \C[Z]$ be a polynomial of degree $d$ with integer coefficients whose absolute values are bounded by~$R$.
  Then the coefficients of $\Omega^s(p)$ have absolute value bounded by $R \,(d^m \,m!)^s$.
\end{lemma}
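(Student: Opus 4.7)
The plan is to establish the bound for a single application of $\Omega$ and then iterate. Write $p = \sum_\alpha c_\alpha Z^\alpha$, where $\alpha = (\alpha_{i,j}) \in \N^{m\times m}$ ranges over multi-indices and $|c_\alpha|\leq R$ with $c_\alpha\in\Z$. For a permutation $\sigma \in S_m$, I will differentiate term-wise and collect:
\begin{equation*}
\frac{\partial^m p}{\partial Z_{1,\sigma(1)} \cdots \partial Z_{m,\sigma(m)}}
= \sum_\alpha c_\alpha \Bigl( \prod_{k=1}^m \alpha_{k,\sigma(k)} \Bigr) \, Z^{\alpha - E_\sigma},
\end{equation*}
where $E_\sigma$ denotes the $0/1$-matrix with a $1$ in positions $(k,\sigma(k))$ and the term is understood to be $0$ if any $\alpha_{k,\sigma(k)}=0$. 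Reindexing by $\beta = \alpha - E_\sigma$ and summing over $\sigma$ gives the explicit formula
\begin{equation*}
[Z^\beta]\,\Omega(p) \;=\; \sum_{\sigma \in S_m} (-1)^{\sign(\sigma)} \, c_{\beta + E_\sigma} \, \prod_{k=1}^m \bigl(\beta_{k,\sigma(k)} + 1\bigr).
\end{equation*}

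Next I will bound this in absolute value. Since $p$ has total degree at most $d$, every exponent $\alpha_{i,j}$ in a monomial of $p$ satisfies $\alpha_{i,j}\leq d$; in particular $\beta_{k,\sigma(k)} + 1 = \alpha_{k,\sigma(k)} \leq d$ whenever the corresponding summand is nonzero. Thus the product of $m$ such factors is bounded by $d^m$, and summing over $|S_m|=m!$ permutations yields
\begin{equation*}
\bigl| [Z^\beta]\,\Omega(p) \bigr| \;\leq\; m! \cdot d^m \cdot R.
\end{equation*}
Moreover the coefficients of $\Omega(p)$ remain integers, since differentiation of monomials with integer coefficients produces integers.

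Finally I will iterate. After one application, $\Omega(p)$ is a polynomial with integer coefficients of absolute value at most $R\, d^m m!$ and total degree at most $d-m \leq d$; so the same bound $d$ on individual exponents holds throughout the iteration. Applying the single-step estimate $s$ times yields the claimed bound
\begin{equation*}
\bigl\lVert \Omega^s(p) \bigr\rVert_\infty \;\leq\; R \cdot (d^m \cdot m!)^s.
\end{equation*}
There is no real obstacle here beyond careful bookkeeping; the only subtlety is recognizing that every factor $\beta_{k,\sigma(k)}+1$ that survives in the sum is at most the original exponent $\alpha_{k,\sigma(k)}\leq d$, so the uniform bound $d^m$ is valid at each iteration even though the degree of $\Omega^t(p)$ is strictly decreasing in $t$.
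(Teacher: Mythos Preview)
Your proof is correct and follows essentially the same approach as the paper: both expand $p$ in monomials, compute $\Omega(p)$ term-wise, reindex by $\beta=\alpha-E_\sigma$, bound each surviving factor $\alpha_{k,\sigma(k)}$ by the total degree $d$, and then iterate. The only cosmetic difference is that the paper works directly with $\prod_i \alpha_{i,\sigma(i)}$ rather than rewriting it as $\prod_k(\beta_{k,\sigma(k)}+1)$.
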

\begin{proof}
It suffices to prove the assertion for $s=1$.
Write $p(Z) = \sum_{\alpha\in\N^{m\times m}} p_\alpha Z^\alpha$ with $p_\alpha\in\Z$ such that $\lvert p_\alpha\rvert \le R$; $Z^\alpha$ denotes the monomial $\prod_{i,j} Z_{i,j}^{\alpha_{i,j}}$.
By definition of the $\Omega$-operator,
$$
 \Omega(p) = \sum_\alpha p_\alpha \sum_{\sigma\in S_m}  (-1)^{\sign(\sigma)}
   \left( \prod_{i=1}^m \alpha_{i,\sigma(i)} \right) Z^{\alpha -M_\sigma} ,
$$
where $M_\sigma$ denotes the permutation matrix of $\sigma\in S_m$;
we use the convention that $Z^{\beta} := 0$ if $\beta\not\in\N^{m\times m}$.
Therefore,
$\Omega(p) = \sum_\beta q_\beta Z^\beta$,
with the coefficient $q_\beta$ described as
$$
 q_\beta = \sum_{\sigma}  (-1)^{\sign(\sigma)} p_\alpha \prod_{i=1}^m \alpha_{i,\sigma(i)} ,
$$
where we have abbreviated $\alpha = \beta + M_\sigma$ in the sum.
Since $p_\alpha=0$ if $\sum_{i,j}\alpha_{i,j}  >d$, we obtain
$|q_\beta| \le m!\,R\, d^m$ as claimed.
\end{proof}

%


\begin{proposition}[Reynolds operator and coefficient size]\label{prop:coeff-bounds}
  Let $V = \C^n$ and $\rho\colon\SL(m) \to \GL(V)$ be a polynomial representation, where each $\rho_{i,j}(Z)$ is a homogeneous polynomial of degree~$\ell$ whose coefficients are integers with absolute value at most~$R$.
  In addition, let $p \in \C[V]$ be a homogeneous polynomial of degree~$tm$ such that its coefficients have absolute value at most~$M$.

  Then the coefficients of the invariant polynomial $\cR(p) = \Omega^{\ell t}(Z \circ p)$ are bounded in absolute value by
  $M \left( Rn^2 \right)^{tm} \left(\ell t m^4 \right)^{\ell t m}$.
\end{proposition}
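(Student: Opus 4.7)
The plan is to use the two-step structure in the definition $\cR(p) = \Omega^{\ell t}(Z \circ p)$ of \cref{eq:reynolds-slice}: first I will bound the coefficients of the bihomogeneous polynomial $Z \circ p$ (viewed as an element of $\C[Z, v]$), and then I will propagate that bound through $\ell t$ applications of the $\Omega$-operator using \cref{lem:omega-coeff}.

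For the first step, write $p = \sum_\alpha p_\alpha v^\alpha$ (with $|\alpha|=tm$ and $|p_\alpha| \leq M$) and substitute $v_j \mapsto (\rho(Z)^T v)_j = \sum_{i, \gamma} r_{i,j,\gamma} Z^\gamma v_i$, where $|r_{i,j,\gamma}|\leq R$. Expanding fully, each monomial $v^\alpha$ becomes a sum indexed by a choice, for each of its $tm$ linear factors, of an index $i \in [n]$ and a degree-$\ell$ monomial $Z^\gamma$ in the $m^2$ entries of $Z$. Every such contribution to a fixed coefficient of $Z \circ p$ has magnitude at most $M \cdot R^{tm}$. I will bound the total number of contributions by: at most $n^{tm}$ choices of $\alpha$ (using $\binom{tm+n-1}{n-1} \le n^{tm}$); at most $n^{tm}$ choices of the index sequence $(i_{j,k})$; and at most $(m^2)^{\ell tm} = m^{2\ell tm}$ choices of the $\gamma$-monomials (since there are at most $(m^2)^\ell$ degree-$\ell$ monomials in $m^2$ variables). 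Writing $Z \circ p = \sum_\nu \tilde{p}_\nu(Z)\,\nu$, this shows that every coefficient of every $\tilde{p}_\nu(Z)$ is bounded in absolute value by
\begin{align*}
C_0 := M (Rn^2)^{tm} m^{2\ell tm}.
\end{align*}

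For the second step, each $\tilde{p}_\nu(Z)$ is a homogeneous polynomial of degree $d = \ell tm$ in $m^2$ variables, so \cref{lem:omega-coeff} with $s = \ell t$ gives $|\Omega^{\ell t}(\tilde{p}_\nu(Z))| \le C_0 \cdot ((\ell tm)^m m!)^{\ell t}$, and by \cref{eq:reynolds-slice} these constants are exactly the coefficients of $\cR(p)$. Estimating $m! \le m^m$ yields $(\ell tm)^m m! \le (\ell tm^2)^m$, and hence $((\ell tm)^m m!)^{\ell t} \le (\ell tm^2)^{\ell tm}$. Combining with the bound on $C_0$ gives
\begin{align*}
|\Omega^{\ell t}(\tilde{p}_\nu(Z))| \le M (Rn^2)^{tm} m^{2\ell tm} \cdot (\ell tm^2)^{\ell tm} = M(Rn^2)^{tm}(\ell tm^4)^{\ell tm},
\end{align*}
which is the desired bound. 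The main obstacle is the first step: the three combinatorial estimates must be matched precisely against the factor $R^{tm}$ so that no extraneous exponentials in $n$ or $m$ slip into the final expression, and in particular one has to rely on the loose bound $\binom{tm+n-1}{n-1} \le n^{tm}$ (multisets bounded by ordered sequences) rather than attempting a sharper multinomial estimate.
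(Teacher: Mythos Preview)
Your proposal is correct and follows essentially the same approach as the paper's proof: both bound the coefficients of $Z\circ p$ by $M(Rn^2)^{tm}m^{2\ell tm}$ via the same counting (at most $\binom{n-1+tm}{tm}\le n^{tm}$ source monomials, $n^{tm}$ index choices, and $\binom{m^2-1+\ell}{\ell}^{tm}\le m^{2\ell tm}$ monomial choices in $Z$), and then apply \cref{lem:omega-coeff} with $d=\ell tm$, $s=\ell t$ together with $m!\le m^m$ to obtain the final bound. The organization differs only cosmetically.
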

\begin{proof}
  Let $v = (v_1, \ldots, v_n)$ be a vector of $n$~variables. 
  Write $p=\sum p_\nu \cdot \nu$ and $Z \circ p = \sum_\nu \wt{p}_\nu(Z) \cdot \nu$, where $\nu$ ranges over all monomials of degree~$tm$ in the variables $v$.
  According to \cref{eq:reynolds-slice},
  $\cR(p) = \sum_\nu \Omega^{\ell t}(\wt{p}_\nu(Z)) \cdot \nu$,
  so we shall focus on each of the polynomials~$\wt{p}_\nu(Z)$.

  Fix a monomial~$\nu(v) = \prod_{i=1}^n v_i^{\alpha_i}$ of degree~$tm$.
  By applying the linear transformation $\rho(Z)^T$ to~$v$ we obtain that $\nu(\rho(Z)^T v) = \prod_{i=1}^n \left( \sum_{j=1}^n \rho_{j,i}(Z) v_j \right)^{\alpha_i}$.
  This implies that each $\nu(\rho(Z)^T v)$ can be written as a sum of at most~$n^{tm}$ products of the form $\prod_{r=1}^{tm} \rho_{j_r, i_r}(Z) v_{j_r}$.
  As there are exactly $\binom{n-1+tm}{tm}$ monomials~$\nu$ of degree~$tm$ in $n$~variables,
  we have that $(Z \circ p)(v)$ can be written as a sum of at most $\binom{n-1+tm}{tm} n^{tm}$ products of the form
  $p_\nu \prod_{r=1}^{tm} \rho_{j_r, i_r}(Z) v_{j_r}$.
  In particular, the above implies that for each monomial $\nu$ of $Z \circ p$, the coefficient~$\wt{p}_\nu(Z)$ can be written as a sum of at most $\binom{n-1+tm}{tm} n^{tm} \leq n^{2tm}$ polynomials of the form $p_\nu \prod_{r=1}^{tm} \rho_{j_r, i_r}(Z)$.

  Since each polynomial~$\rho_{j,i}(Z)$ is a polynomial of degree~$\ell$ in $m^2$~variables whose coefficients are integers bounded by~$R$, we have that the product $\prod_{r=1}^{tm} \rho_{j_r, i_r}(Z)$ is a polynomial of degree~$\ell tm$ in $m^2$~variables whose coefficients are bounded by $R^{tm} \binom{m^2-1+\ell}{\ell}^{tm} \leq R^{tm} m^{2\ell tm}$.
  Since we assumed that $\lvert p_\nu\rvert \leq M$, we have that $p_\nu \prod_{r=1}^{tm} \rho_{j_r, i_r}(Z)$ is a polynomial of degree $\ell tm$ whose coefficients are integers bounded by $M R^{tm} m^{2\ell tm}$.

  Thus, each $\wt{p}_\nu(Z)$ is a polynomial of degree $\ell{}tm$, with integer coefficients whose absolute values are bounded by
  $M (n^2R)^{tm} m^{2\ell tm}$.
  Now, applying $\Omega^{\ell t}$ to the polynomial $\wt{p}_\nu(Z)$ and using \cref{lem:omega-coeff}, we see that $\Omega^{\ell t}(\wt{p}_\nu(Z))$ has absolute value bounded by
  \begin{equation*}
    M (n^2R)^{tm} m^{2\ell tm} \cdot \left((\ell tm)^m m!\right)^{\ell t}
    \leq M \left( Rn^2 \right)^{tm} \left(\ell t m^4 \right)^{\ell t m}  \qedhere
  \end{equation*}
\end{proof}


\Cref{prop:coeff-bounds} shows in particular that the Reynolds operator for a polynomial action of~$\SL(m)$ of degree~$\ell$ on some vector space $V$ of dimension~$n$, when applied to a monomial, does not generate polynomials with coefficients of high bit complexity.

\subsection{Coefficient and capacity bounds}\label{subsec:coeff-bds}

We now consider a representation $\rho \colon G \to \GL(V)$ of $G = \SL(n_1) \times \cdots \times \SL(n_d)$ of dimension $\dim(V) = n$, where each entry~$\rho_{i,j}$ is a homogeneous polynomial of degree~$\ell$.
By~\cref{lem:reynolds-composition}, together with the fact that $G$ has $N = \SL(n_1) \times \{I_{n_2}\} \times \cdots \times \{I_{n_{d}}\}$ as a normal subgroup, an easy induction shows that
\begin{equation}\label{eq:reynolds}
  \cR_G = \cR_{\SL(n_1)} \cR_{\SL(n_2)} \cdots \cR_{\SL(n_{d})},
\end{equation}
where $\cR_{\SL(n_k)}$ denotes the Reynolds operator corresponding to the action of the $k$-th factor, i.e., $\rho^{(k)}(Z) = \rho(I_{n_1}, \dots, I_{n_{k-1}}, Z, I_{n_{k+1}}, \dots, I_{n_d})$.
To apply the results of the previous section, we will assume that each entry $\rho^{(k)}_{i,j}$ is given by a homogeneous polynomial of degree no larger than~$\ell_k$, with integer coefficients bounded in absolute value by some constant~$R$.
Note that $\sum_{k=1}^d \ell_k=\ell$.

We are now ready to prove the main results of \cref{sec:cayley}:
upper bounds on the \emph{coefficients} of the invariant polynomials and, as a corollary, lower bounds on the \emph{capacity} $\capac(v) := \inf_{g\in G} \lVert \rho(g) v\rVert_2^2$ for vectors that are not in the null cone.
In particular, we obtain alternative bounds for the tensor action of $G$ on $\Ten(n_0,\ldots, n_{d})$, where $n=n_0 n_1\cdots n_d$, $\ell=d$, $\ell_k=1$ for all $k\in[d]$, and $R=1$.

Thus, applying \cref{prop:coeff-bounds} to each of the Reynolds operators $\cR_{\SL(n_i)}$, and noting that each of these operators has the
form~\eqref{eq:reynolds-slice}, we obtain the main result of this section:

\begin{theorem}\label{thm:coefficient-bounds}
  The space $\C[V]^G_D$ of homogeneous invariants of degree $D$ is spanned by invariants with integer coefficients
        of absolute value at most~$L$, where
  \begin{align*}
   \ln L =O\left( dD \ln(Rn) +  \ell D \ln \Big(D \max_{i\ge 1} \ell_i n_i\Big) \right) .
  \end{align*}
  For the tensor action of~$G$ on $\Ten(n_0,\ldots, n_d)$, this bound becomes $O(d D \ln(D n))$.
\end{theorem}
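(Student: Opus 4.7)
The plan is to combine \cref{span_inv} with the factorization of the Reynolds operator in \eqref{eq:reynolds} and iterate the single-factor coefficient bound from \cref{prop:coeff-bounds}. By \cref{span_inv}, the space $\C[V]^G_D$ is spanned by the polynomials $\cR_G(\nu)$ as $\nu$ ranges over monomials of degree $D$ on $V$; each such $\nu$ has integer coefficients bounded by $1$. It therefore suffices to bound the coefficient size of $\cR_G(\nu) = \cR_{\SL(n_1)}\cR_{\SL(n_2)}\cdots\cR_{\SL(n_d)}(\nu)$, obtained by applying the $d$ Reynolds operators in sequence.

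The main computation is a telescoping application of \cref{prop:coeff-bounds}. Fix $k\in[d]$ and suppose that at step $k-1$ we have produced a homogeneous polynomial on $V$ of degree $D$ whose integer coefficients have absolute value at most $M_{k-1}$. Applied to the $k$-th factor, $\rho^{(k)}$ is a polynomial representation of $\SL(n_k)$ of degree $\ell_k$ with integer coefficients of size at most $R$ on the $n$-dimensional space $V$, so \cref{prop:coeff-bounds} (with $m=n_k$ and $t=D/n_k$, assuming $n_k\mid D$ lest the result vanish) yields
\[
M_k \;\le\; M_{k-1}\,(Rn^2)^{D}\,(\ell_k D\, n_k^{3})^{\ell_k D}.
\]
Starting from $M_0=1$ and multiplying the bounds for $k=1,\dots,d$ gives
\[
L \;\le\; (Rn^2)^{dD}\;\prod_{k=1}^d (\ell_k D\, n_k^{3})^{\ell_k D}.
\]
Taking logarithms and using $\sum_k \ell_k = \ell$ together with $\ln(\ell_k D n_k^3)=O(\ln(D\max_{i\ge 1}\ell_i n_i))$, we arrive at
\[
\ln L \;=\; O\!\left(dD\ln(Rn)\;+\;\ell D\ln\!\big(D\max_{i\ge1}\ell_i n_i\big)\right),
\]
which is the stated bound. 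The tensor-action specialization follows by plugging in $R=1$, $\ell=d$, $\ell_k=1$ for all $k$.

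The only subtlety I anticipate is making sure \cref{prop:coeff-bounds} can truly be iterated along the chain \eqref{eq:reynolds}: at each stage the input must be a homogeneous polynomial on $V$ of degree divisible by $n_k$. Homogeneity of degree $D$ is preserved at every step (the Reynolds operator is degree-preserving), and if any $n_k\nmid D$ then $\cR_G$ is identically zero on degree $D$ so the bound is vacuous. One also needs to verify that \eqref{eq:reynolds} indeed follows by the promised induction from \cref{lem:reynolds-composition} applied to the normal subgroup $\SL(n_1)\times\{I\}\times\cdots\times\{I\}$ of $G$; this is a straightforward recursion on $d$ since each factor group is itself a product of $\SL$'s. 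No step requires a nontrivial idea beyond careful bookkeeping of the parameters $(M,t,m,R,\ell)$ when plugging into \cref{prop:coeff-bounds}, so I expect the proof to be short once these identifications are made explicit.
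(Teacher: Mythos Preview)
Your proposal is correct and follows essentially the same approach as the paper: invoke \cref{span_inv} to reduce to bounding the coefficients of $\cR_G(\nu)$ for monomials $\nu$, factor $\cR_G$ via \eqref{eq:reynolds}, and telescope \cref{prop:coeff-bounds} across the $d$ factors to obtain the product bound $\prod_{k=1}^d (Rn^2)^D(\ell_k D n_k^3)^{\ell_k D}$, then take logarithms. Your treatment of the divisibility condition and the tensor-action specialization also matches.
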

\begin{proof}
  By \cref{span_inv}, the invariants of degree $D$ are spanned by the polynomials $\cR_G(\nu)$, where $\nu$ ranges over
  all monomials of degree $D$ in $V$.
  By \cref{eq:reynolds} and \cref{prop:coeff-bounds}, we find that the coefficients of $\cR_G(\nu)$ are bounded by
  \begin{align*}
    \prod_{i=1}^d \left[ (Rn^2)^D \cdot (\ell_i D n_i^3 )^{\ell_i D} \right]
    &\leq \exp\left( O\left( d D \ln(Rn) + \sum_{i=1}^d \ell_i D \ln(D \ell_i n_i) \right) \right)\\
    &\leq \exp\left( O\left( d D \ln(Rn) + \ell D \ln(D \max_{i\geq1} \ell_i n_i) \right) \right).
  \end{align*}
  For the tensor action, use that $\ell_i=1$, $\ell=d$, $R=1$, and $n=n_0 n_1\cdots n_d$.
\end{proof}


Note that these bounds on the coefficients of the generators are somewhat worse than the bounds obtained in~\cref{subsec:invariants}.
However, one advantage of the method present in this chapter is that the bounds hold for more general actions than just the tensor product action of $G$ on $\Ten(n_0,\ldots, n_{d})$.
Moreover, it allows us to compute generators for the invariant rings of more general actions.

The null cone is in fact cut out already by a subset of the generators of the invariant ring.
Their degree can be bounded using a result by Derksen, which we instantiate here
according to our parameters.

\begin{lemma}[{\cite[Proposition~1.2]{derksen2001polynomial}}]\label{lem:degree bound}
The null cone of the $G$-representation~$V$ is the zero set of homogeneous $G$-invariant polynomials of degree
\begin{align*}
D
\leq \ell^{\sum_{i=1}^d (n_i^2 - 1)} \cdot \max_{i\ge 1} n_i^d
\leq \exp(2 d \ln(\ell) \cdot \max_{i\ge 1} n_i^2 ).
\end{align*}
For the tensor action of $G$ on $\Ten(n_0,\ldots, n_{d})$, the bound becomes $D \leq \exp(2 d \ln(d) \cdot \max_{i\ge 1} n_i^2).$
\end{lemma}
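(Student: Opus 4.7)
The plan is to follow Derksen's approach from \cite{derksen2001polynomial}, which produces a generating set of the null-cone ideal (set-theoretically) whose degrees are bounded by using effective Nullstellensatz-style estimates tied to the geometry of the graph of the group action.

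At a high level, one considers the polynomial morphism $\psi: G \times V \to V \times V$, $(g, v) \mapsto (v, g \cdot v)$, and its Zariski-closed image $B := \overline{\psi(G \times V)}$. A geometric invariant theory argument identifies the null cone with the fiber of $B$ over $0 \in V$ under the second projection: a vector $v$ belongs to the null cone iff $(v, 0) \in B$. The key quantitative step is a Heintz/Bezout-style bound on $\deg B$, refined to exploit the product structure $G = \prod_i \SL(n_i)$, which gives
\[
  \deg B \;\leq\; \ell^{\sum_i(n_i^2 - 1)} \cdot \max_{i \geq 1} n_i^d .
\]
This in turn bounds the degrees of polynomial generators of the vanishing ideal $I(B) \subseteq \C[V \times V]$.

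Substituting $0$ into the second coordinate in these generators produces polynomials $f_1, \ldots, f_r \in \C[V]$ of the same degree bound that cut out the null cone set-theoretically. To promote them to honest $G$-invariants, one applies the Reynolds operator $\mathcal{R}_G$, constructed in \cref{lem:reynolds-composition} and \cref{le:Cay-prop}; since $\mathcal{R}_G$ is a degree-preserving linear projector onto $\C[V]^G$ and the null cone is $G$-stable, the averages $\mathcal{R}_G(f_i)$ are invariants of the same degree that still vanish on the null cone. A standard invariant-theoretic Nullstellensatz argument then confirms that these $\mathcal{R}_G(f_i)$ cut out the null cone rather than merely vanishing on it, yielding the stated bound on the degree~$D$.

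The main obstacle is obtaining the sharp degree bound on $B$: a naive Heintz estimate would introduce an exponential factor in $\dim V$, producing a bound far worse than the claimed one. Derksen's insight is to replace the ambient morphism by a smaller one whose image still has the correct fiber over $0$, trading the $\dim V$ exponent for the much milder $\max_i n_i^d$ factor and thereby isolating the contribution coming from the dimensions of the individual special linear factors. The subsequent restriction to the fiber and the invariantization via the Reynolds operator are conceptually routine, modulo a careful verification that the averaging introduces no spurious components and that the resulting invariants genuinely define the null cone set-theoretically.
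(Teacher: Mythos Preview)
The paper's proof is a one-liner: it invokes \cite[Proposition~2.1]{derksen2001polynomial} as a black box, plugging in the parameters $A = \ell$ (degree of the representation), $H = \max_{i} n_i$ (maximal degree of the equations cutting $G$ out of its ambient affine space), $t = \sum_i n_i^2$ (dimension of that ambient space), and $m = \sum_i (n_i^2 - 1) = \dim G$; Derksen's bound $D \le A^m H^{t-m}$ then gives exactly $\ell^{\sum_i(n_i^2-1)} \cdot (\max_i n_i)^d$, and the tensor case is just $\ell = d$.

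Your proposal goes well beyond this, attempting to re-derive Derksen's result itself. The outline (graph morphism $\psi$, closure $B$, fiber over $0$, degree bound, restriction, Reynolds averaging) is structurally faithful to Derksen's argument, but your explanation of where the factor $\max_i n_i^d$ comes from is incorrect. It does \emph{not} arise from ``replacing the ambient morphism by a smaller one'' to avoid a $\dim V$ exponent. In Derksen's bound $A^m H^{t-m}$, the factor $H^{t-m}$ records the contribution of the hypersurfaces defining $G$ inside $\mathbb A^t$: here $G = \prod_i \SL(n_i) \subset \mathbb A^{\sum_i n_i^2}$ is cut out by the $d$ equations $\det = 1$, of degrees $n_1,\dots,n_d \le \max_i n_i$, and $t - m = d$, whence $H^{t-m} = (\max_i n_i)^d$. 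The dimension $\dim V$ never enters Derksen's degree bound at all, so there is nothing to ``trade away''. Once you correct this, your sketch and the paper's citation point to the same underlying result.
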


\begin{proof}[Proof]
  This is a direct application of~\cite[Proposition 2.1]{derksen2001polynomial}, with parameters
  $A = \ell$,
  $H = \max_{i\in[d]} n_i$,
  $t = \sum_{i=1}^d n_i^2$ and $m = \sum_{i=1}^d (n_i^2-1)$.
   For the tensor action, the representation is given by homogeneous polynomials of degree $\ell = d$.
\end{proof}

In particular, we may obtain a complete set of polynomials defining the null cone by taking all invariants from the preceding theorem up to the degree predicted by Derksen.

We now provide a variant for \cref{prp:inf norm on orbit} for general homogeneous $G$-representations\footnote{Using this lemma, the
logarithmic dependence on $\max_k n_k$ in \cref{thm:main precise}
changes into a polynomial dependence, but it doesn't matter for the time complexity of the
resulting algorithm; see the comments after that theorem.}.

\begin{theorem}
Let $V$ be a representation of $G$ of dimension $n$ and $v \in V$ be a vector with
integral entries that is not in the null cone of the $G$ action.
Then the capacity of $v$ is bounded below by
\begin{align*}
  \capac(v) = \inf_{g\in G} \lVert \rho(g) v\rVert_2^2 \geq
   \exp(-c \left( d\ln(Rn) + d \ell \ln(\ell) \max_{i\ge 1} \ell_i n_i^2 \right) ),
\end{align*}
where $c >0$ is a universal constant.
For the tensor action of $G$ on $\Ten(n_0, \ldots, n_d)$, we have a bound of the form
$\exp\left(-c \; d^2 \ln(d) \max_{i \geq 0} n_i^2 \right)$.
\end{theorem}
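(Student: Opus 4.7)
The plan is to mimic the argument of \cref{prp:inf norm on orbit}, but using the more general coefficient bound of \cref{thm:coefficient-bounds} together with Derksen's degree bound from \cref{lem:degree bound}.

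First, since $v\in V$ is not in the null cone, \cref{lem:degree bound} implies the existence of a nonzero homogeneous $G$-invariant polynomial $P$ of degree $D$ with $P(v)\neq 0$, where
\[
  D \leq D_0 := \exp\bigl(2d\ln(\ell)\,\max_{i\geq 1} n_i^2\bigr).
\]
By \cref{thm:coefficient-bounds}, the space $\C[V]^G_D$ is spanned by invariants with integer coefficients bounded in absolute value by some $L$ satisfying
\[
  \ln L = O\bigl(dD\ln(Rn) + \ell D \ln(D \max_{i\geq 1} \ell_i n_i)\bigr).
\]
Since at least one invariant of degree~$D$ is nonzero at~$v$ (namely~$P$), at least one of these spanning invariants, call it~$\widetilde P$, must also be nonzero at~$v$. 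As $v$ has integer entries and $\widetilde P$ has integer coefficients, $\widetilde P(v)$ is a nonzero integer, hence $|\widetilde P(v)|\geq 1$.

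Next, for any $g\in G$, $G$-invariance yields $|\widetilde P(\rho(g)v)| = |\widetilde P(v)|\geq 1$. On the other hand, writing $\widetilde P = \sum_{|\alpha|=D} c_\alpha x^\alpha$ in coordinates on $V\cong\C^n$, we estimate for any $w\in V$
\[
  |\widetilde P(w)| \;\leq\; L \sum_{|\alpha|=D} |w^\alpha| \;\leq\; L \cdot \binom{n+D-1}{D} \cdot \|w\|_2^D \;\leq\; L\,(n+D)^D\,\|w\|_2^D,
\]
using that $|w^\alpha|\leq \|w\|_2^D$ when $|\alpha|=D$. Applying this to $w=\rho(g)v$ and rearranging gives
\[
  \|\rho(g)v\|_2^2 \;\geq\; \bigl(L\,(n+D)^D\bigr)^{-2/D}.
\]
Taking the infimum over $g\in G$,
\[
  -\ln \capac(v) \;\leq\; \tfrac{2}{D}\ln L + 2\ln(n+D).
\]

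Finally, I would substitute the bounds on $\ln L$ and $D$. Dividing the coefficient bound by~$D$ yields
\[
  \tfrac{1}{D}\ln L = O\bigl(d\ln(Rn) + \ell \ln D + \ell \ln(\max_i \ell_i n_i)\bigr),
\]
and $\ln D \leq 2d\ln(\ell)\,\max_i n_i^2$ by Derksen's bound, so this expression is $O(d\ln(Rn) + d\ell \ln(\ell)\,\max_i \ell_i n_i^2)$. The additive term $2\ln(n+D)$ is likewise absorbed because $\ln D$ is already controlled by Derksen's bound. Combining yields a universal constant $c>0$ such that
\[
  \capac(v) \;\geq\; \exp\bigl(-c\,(d\ln(Rn) + d\ell\ln(\ell)\,\max_{i\geq 1}\ell_i n_i^2)\bigr),
\]
as claimed. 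For the tensor action we plug in $\ell=d$, $\ell_i=1$, $R=1$, and $n=n_0 n_1\cdots n_d$, which simplifies the bound to $\exp(-c\,d^2\ln(d)\max_{i\geq 0}n_i^2)$ (absorbing the $d\ln n$ term into $d^2\ln(d)\max n_i^2$).

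The routine parts are the norm estimate $|\widetilde P(w)|\leq L(n+D)^D\|w\|_2^D$ and the integrality step. The only genuine obstacle is the bookkeeping of the logarithmic bounds: one must track how Derksen's exponential degree bound enters inside $\ln L/D$ and verify that the $\ell\ln D$ term is the one responsible for the factor $\ell\ln(\ell)\max_i n_i^2$ in the final exponent, while the additive $\ln(n+D)$ gets absorbed.
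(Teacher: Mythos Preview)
Your proposal is correct and follows essentially the same argument as the paper: both combine Derksen's degree bound with the coefficient bound from the Omega-process section to get an integer-coefficient invariant $P$ with $|P(v)|\geq 1$, then use invariance plus a crude monomial-count estimate to bound $\|\rho(g)v\|_2$ from below, and finally track the logarithms. The only cosmetic differences are that the paper bounds via $\|g\cdot v\|_\infty$ (getting a factor $n^D L$) whereas you go straight to $\|w\|_2$ with the monomial count $(n+D)^D$, and you spell out the absorption of the extra $\ln(n+D)$ term explicitly; both lead to the same final exponent up to the universal constant.
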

\begin{proof}
According to \cref{lem:degree bound} and \cref{thm:coefficient-bounds}, there is a homogeneous polynomial $P$
of degree~$D$ and integer coefficients with absolute value bounded by $L$ such that $P(v)\ne 0$ and
$$
 \ln D \le 2 d \ln(\ell) \max_{i\ge 1} n_i^2, \quad
 \ln L \le c' \big( dD\ln(Rn) +  \ell D \ln\big(D \max_{i\ge 1} \ell_i n_i\big) \big) ,
$$
where $c' > 0$ is a universal constant.
For $g \in G$ we have
\[ 1 \leq \lvert P(v) \rvert = \lvert P(g \cdot v) \rvert
\leq n^D L \lVert g \cdot v \rVert_\infty^D
 \leq n^D L \lVert g \cdot v \rVert_2^D , \]
and hence, for another universal constant $c>0$,
\begin{align*}
\lVert g \cdot v \rVert_2 \geq \frac 1 {n} L^{-\frac1D}
&= \frac 1 {n} \exp\left( -c' \bigg( d \ln(Rn) + \ell \ln\big(D \max_{i\ge 1} \ell_i n_i\big) \bigg) \right)\\
&\ge  \exp\left( -\frac c2 \bigg( d \ln(Rn) + d\ell \ln(\ell) \max_{i\ge 1} \ell_i n_i^2 \right) .
\end{align*}
For the tensor action, again use that in this case $\ell_i=1$, $\ell = d$, $R = 1$,
and $n = n_0 n_1 \cdots n_d$.
\end{proof}


\subsection{Algebraic algorithm for testing membership in the null cone}

We are now ready to describe an algebraic algorithm for the null-cone membership problem
for tensors, complementing our previous \cref{thm:null cone exp scaling}.
Our approach is similar to the one in~\cite[Section 4.6]{Sturmfels}, in that we use
the Reynolds operator applied to the monomials to obtain a generating set for $\C[V]^G$.
However, we improve on the algorithm in~\cite{Sturmfels} by using Derksen's exponential
bounds on the degree of the generating invariants and by avoiding the use of
Gr\"{o}bner bases. Thus we obtain an exponential time algorithm, as opposed to the
doubly-exponential time algorithm in~\cite{Sturmfels}.

\begin{Algorithm}
\textbf{Input}: A tensor $X$ in $\Ten(n_0,\ldots, n_d)$ with integer
coordinates (specified as a list of coordinates, each encoded in binary with
bit complexity at most~$b$). \\
\textbf{Output}: Is $X$ in the null cone? \\
\textbf{Algorithm}:
Let $D = \exp(2 d \ln(d) \cdot \max_{i\ge 1} n_i^2)$ be the degree bound of~\cref{lem:degree bound}.
\begin{enumerate}
  \item For each monomial $\nu$ 
        of degree at most~$D$, obtain $\RO(\nu)$ by applying the Reynolds operator from~\eqref{eq:reynolds}.
  Proceed to the next step.
  \item For each $\nu$ as above, compute $\RO(\nu)(X)$. If $\RO(\nu)(X) \neq 0$ output
  that $X$ is not in the null cone.
  \item Otherwise, if $\RO(\nu)(X) = 0$ for all $\nu$ in the previous step, output
  that the tensor $X$ is in the null cone.
\end{enumerate}
\caption{Algorithm for the null-cone problem}\label{alg:null cone alternative}
\end{Algorithm}

\begin{theorem}\label{thm:exp time via omega and derksen}
\cref{alg:null cone alternative} decides
  the null-cone problem for the action of $G = \SL(n_1)\times\dots\times\SL(n_{d})$ on
  $\Ten(n_0,\dots,n_d)$ in time
$$
  \poly(b) \exp(O\Big(n d \ln(d) \max_{i\ge 1} n_i^2\Big)),
$$
  where $b$ is an upper bound on the bitsize of the coordinates of the input tensor $X$ (specified as a
  list of coordinates, each encoded in binary).
\end{theorem}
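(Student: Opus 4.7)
The proof will have two independent parts: establishing correctness of the decision procedure and bounding its running time. The first is essentially immediate from the ingredients already assembled in the paper; the second reduces to a monomial count together with an analysis of the Cayley Omega process carried out in \cref{subsec:coeff-bds}.

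To prove correctness, I will recall (from \cref{sec:nullcone as optimization}) that $X$ lies in the null cone iff every non-constant homogeneous $G$-invariant polynomial vanishes on $X$. Derksen's degree bound (\cref{lem:degree bound}) reduces the question to invariants of degree at most $D = \exp(2 d \ln(d) \max_{i\geq1} n_i^2)$. Then \cref{span_inv} tells us that $\C[V]^G_m$ is spanned by $\{\RO(\nu) : \nu \text{ a monomial of degree } m\}$. Combining these two facts, $X$ is in the null cone iff $\RO(\nu)(X) = 0$ for every monomial $\nu$ of degree at most $D$, which is exactly the test performed by \cref{alg:null cone alternative}.

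For the time bound, the main contribution comes from the number of monomials to enumerate. The count of monomials of degree at most $D$ in $n = n_0 n_1 \cdots n_d$ variables is $\binom{n+D}{n}$, which in our regime ($D \gg n$) has logarithm $O(n \log D) = O(n\,d\,\ln(d)\,\max_{i\geq 1} n_i^2)$, already matching the claimed exponential factor. It remains to show that each individual evaluation $\RO(\nu)(X)$ can be carried out in time $\poly(b, D)$. I will do this by using the decomposition $\RO = \cR_{\SL(n_1)} \circ \cdots \circ \cR_{\SL(n_d)}$ from \cref{eq:reynolds} and the explicit Omega-process formula~\eqref{eq:reynolds-slice}: substitute $v = X$ into the intermediate expression $\nu(\rho^{(d)}(Z^{(d)})^T \cdots \rho^{(1)}(Z^{(1)})^T v)$ to obtain a polynomial $F(Z)$ in the formal matrix variables $Z^{(1)},\dots,Z^{(d)}$ of degree at most $dD$ in total, then iteratively apply the differential operators $\Omega_k^{m/n_k}$ to collapse $F(Z)$ to a scalar.

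The hardest step will be the bit-complexity analysis of this per-monomial evaluation: naive expansion of $\nu(\rho(Z)^T X)$ could produce many intermediate terms, and the iterated Omega-differentiations could complicate the bookkeeping. The crucial tool here is \cref{prop:coeff-bounds}, which controls the coefficient sizes throughout the Reynolds computation, together with the fact that each $\Omega_k^{t_k}$ acts as an explicit linear functional picking out a bounded combination of coefficients of a polynomial of known degree. With those bounds in place, each $\RO(\nu)(X)$ is computed by $\poly(b, D)$ arithmetic operations on integers of $\poly(b, D)$ bits, and since $\poly(D) = \exp(O(d \ln(d) \max n_i^2))$ is absorbed into the dominant $\exp(O(n d \ln(d) \max n_i^2))$ factor, the overall bound $\poly(b)\exp(O(n\, d\, \ln(d)\, \max_{i\geq 1} n_i^2))$ follows.
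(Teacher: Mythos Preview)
Your correctness argument is identical to the paper's. The running-time analysis, however, diverges from the paper and contains a gap.

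The paper does not try to bound the cost of a single evaluation $\RO(\nu)(X)$ by $\poly(b,D)$. Instead it invokes \cref{thm:coefficient-bounds}: each $\RO(\nu)$, viewed as a polynomial in the $n$ coordinate variables of $V$, has at most $D^n$ monomials with integer coefficients of bit size $L = D^{O(1)}$. Evaluating such a polynomial at $X$ costs $\poly(L,b,D^n)$, and summing over the at most $D^n$ choices of $\nu$ still gives $\poly(b,D^n)=\poly(b)\exp(O(n\ln D))$, which is exactly the stated bound.

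Your substitute-then-differentiate route does not achieve $\poly(b,D)$ per monomial, and this matters. After plugging in $v=X$, the polynomial $F(Z^{(1)},\dots,Z^{(d)})$ lives in $\sum_{k=1}^d n_k^2$ variables and has degree $m\le D$ in each block $Z^{(k)}$, hence can have up to $\prod_k\binom{n_k^2+m}{m}=D^{\Theta(\sum_k n_k^2)}$ terms; the iterated $\Omega$-operators are explicit linear functionals on this coefficient vector and do not obviously bypass its expansion. This is not $\poly(D)$. Worse, in parameter regimes where $\sum_k n_k^2 \gg n$ (take $n_0=1$, $d=2$, $n_2=2$, and $n_1$ large: then $\sum_k n_k^2 \approx n_1^2$ while $n=2n_1$), the per-monomial cost $D^{\Theta(\sum_k n_k^2)}$ already dominates the $D^{O(n)}$ target and breaks the claimed bound $\exp(O(n\,d\ln d\,\max_i n_i^2))$. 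So the $\poly(b,D)$ per-monomial claim is a genuine gap; to recover the theorem you should account, as the paper does, for the full $D^n$-term representation of $\RO(\nu)$ in the $V$-variables rather than the $Z$-variables.
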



\begin{proof}
We prove the correctness of~\cref{alg:null cone alternative} and analyze its running time.
By~\cref{span_inv} and Derksen's bound~\cref{lem:degree bound},
we know that the null cone is the zero set of
$\{ \RO(\nu) \mid \nu \text{ monomial of degree} \leq D \}$.
Hence the algorithm is correct, as it will evaluate $X$ on a set of polynomials
that cuts out the null cone.

Now, we are left with analyzing the complexity of our algorithm.
By~\cref{thm:coefficient-bounds}, we know that each $\RO(\nu)$ is a polynomial of degree at most~$D$ on
$n := n_0 n_1 \cdots n_d$ variables
with integer coefficients of \emph{bit size} bounded by $L$, where
$$
 \ln(D) = 2 d \ln(d) \cdot \max_{i\ge 1} n_i^2,\quad
 L = O\Big( dD \ln(Dn) \Big) =  D^{O(1)} .
$$
For each monomial $\nu$ of degree at most~$D$, it takes $\poly(B, b, D)$ time to evaluate $\nu$ at point~$X$.
There are at most $D^n$ monomials $\nu$ of degree at most~$D$.
Thus the total time to compute each $\RO(\nu)(X)$ is given by $\poly(B, b, D^n) =\poly(b, D^n) $,
which is bounded by $\poly(b) \cdot \exp(O(n\ln(D)))$.
\end{proof}



\begin{remark}\label{re:alg_gen}
The advantage of the algebraic approach in~\cref{alg:null cone alternative}
presented here is its generality: it immediately
extends to any polynomial action of $G = \SL(n_1)\times\dots\times\SL(n_{d})$
on any vector space. Moreover, an analysis analogous to
\cref{thm:exp time via omega and derksen} holds, which is clear from the construction.
\end{remark}

\section{Conclusion and open problems}\label{sec:conclusion}

This paper continues the recent line of works studying the computational complexity of problems in invariant theory~\cite{Mulmuley,GGOW,IQS2015,IQS15b,ikenmeyer2017vanishing,buergisser2017membership,gct5,baldoni2017computation}.
There are many beautiful algorithms known in invariant theory~\cite{Derksen-Kemper,Sturmfels}, but most of
them come without a complexity analysis or will have at least exponential runtime. 
Designing efficient algorithms for problems in invariant theory is important for the GCT program~\cite{Mulmuley,gct5}, and
we believe that viewing invariant theory from the computational lens will provide significant new structural insights. 
Several open problems arise from this work. We mention the most interesting ones:

\begin{enumerate}
\item Is there a polynomial time algorithm for deciding the null-cone problem for the tensor actions that we study?
It would suffice to give an analog of the algorithm in~\cref{thm:main} with running time $\poly(n, b, \log(1/\eps))$.
One might wonder if known optimization techniques are sufficient to yield this runtime guarantee.
While the optimization problem in \cref{eq:alg-alt-min} is not convex, it is nevertheless known to be \emph{geodesically} convex~\cite{KN79,NM84,Woodward11} (roughly speaking one needs to move from the usual Euclidean geometry to the geometry of the group to witness convexity).
The theory of geodesically convex optimization is just starting to emerge (see, e.g., \cite{Zhang_Sra16} and the references therein) and it is an open problem whether there are computationally efficient analogs of the ellipsoid algorithm and interior point methods (algorithms that in Euclidean convex optimization guarantee $\poly(\log(1/\eps))$ runtime).
\item Is there a polynomial time algorithm for the null-cone problem for more general group actions?
As mentioned before, there is a natural notion of \emph{``$G$-stochasticity''} using moment maps provided by the noncommutative duality theory.
A first important goal is to design an algorithm that achieves the analog of our~\cref{thm:main} (i.e., getting to a point in the orbit where the ``distance" to satisfying $G$-stochasticity is at most $\eps$).
While it is not clear how far the alternating minimization approach can be taken, Kirwan's gradient flow~\cite{Kirwan84} seems to be a promising alternative first proposed in~\cite{walter2013entanglement,walter2014multipartite}.
\item Is there a polynomial time algorithm for the one-body quantum marginal problems described in \cref{sec:quantum}?
There is a natural scaling algorithm generalizing \cref{alg:scaling} but it has so far evaded analysis.
Even obtaining a polynomial time algorithm for a promise problem along the lines of \cref{thm:main} would be rather interesting.
\end{enumerate}

\section*{Acknowledgements}
We would like to thank Henry Cohn, Eric Naslund, Ion Nechita, Will Sawin, Suvrit Sra, Josu\'e Tonelli, and John Wright for helpful discussions.
MW acknowledges financial support by the NWO through Veni grant no.~680-47-459.
PB acknowledges financial support from the DFG grant BU~1371~2-2.

\appendix

\section{Proofs in geometric invariant theory}\label{app:git}

Here we provide some missing proofs from the main part of the paper.
None of the results herein are new and we claim no originality.
We hope that our concrete proofs will be useful to make the paper self-contained and facilitate further developments.

First we provide a proof of the Kempf-Ness theorem~\cite{KN79} in the case of tensors, which we restate for convenience.

\KempfNess*

In one direction, if $X$ is not in the null cone, then the desired tensor $Y$ will be a scalar multiple of a tensor of minimal norm in the $G$-orbit closure of $X$. The proof of $d$-stochasticity will be by acting elements close to identity on $Y$ and noting that this does not decrease the norm. In the other direction, the proof will follow by showing that a tensor $Y$ in $d$-stochastic position is a tensor of minimal norm in its orbit which will follow from applying concavity of the logarithm after an appropriate change of basis.

\begin{proof}
First suppose that $X$ is not in the null cone. Then there exists a nonzero tensor $Y$ of minimal norm in the orbit closure of $X$ (under the action of the group $G$),
$$
Y = \argmin_{Z \in \overline{G \cdot X}} \|Z\|_2 .
$$
We will prove that the quantum marginals $\rho_1,\ldots, \rho_{d}$ of $Y Y^\dagger / Y^\dagger Y$ satisfy that $\rho_i = I_{n_i} / n_i$ for $i\in[d]$.
It suffices to show that the marginals $\tilde\rho_i$ of the unnormalized PSD matrix $Y Y^\dagger$ are proportional to identity matrices.
For this, consider acting on the $i^{\text{th}}$ coordinate by $\expon(t P)$, where $t$ is a parameter and $P$ is an arbitrary $n_i \times n_i$ matrix satisfying $\tr(P) = 0$, so that $\expon(t P) \in \SL(n_i)$.
Note that
$$
\left\|(I_{n_0},\ldots, I_{n_{i-1}}, \expon(t P), \ldots, I_{n_d}) \cdot Y \right\|_2^2 = \tr\left[ \expon(2 t P) \tilde\rho_i\right] .
$$
Since $Y$ is a tensor of minimal norm, acting by the group does not change the norm square to first order, so we obtain that
$$
0 = \frac{d}{dt}  \tr\left[ \expon(2 t P) \tilde\rho_i\right] \Big\rvert_{t = 0} = 2 \tr[P \tilde\rho_i] .
$$
This is true for any traceless matrix $P$.
Hence $\tilde\rho_i$ has to be a scalar multiple of identity.
This proves one direction.

For the other direction,
suppose there exists $Y\in \overline{G\cdot X}$ such that the quantum marginals $\rho_i$ of
$Z := Y / \norm{Y}_2$ satisfy $\rho_i=I_{n_i}/n_i$ for all $i\in [d]$.
It suffices to prove that
$$
\capac(Z) \ge 1 . 
$$
(Then $Z$ is not in the null cone and so neither is $Y$, since the null cone is $G$-invariant and closed.)

Consider $(A_1,\ldots, A_{d})$, where $A_i \in SL(n_i)$. Then
$$
\left\| (A_1,\ldots, A_{d}) \cdot Z\right\|_2^2 = \tr\left[ \left(A_1 A_1^{\dagger} \otimes \cdots \otimes A_{d} A_{d}^{\dagger} \right)  \rho_{1\cdots d} \right] ,
$$
where $\rho_{1\cdots d}$ is the quantum marginal of $Z Z^{\dagger}$ on the last $d$ tensor factors.
Consider the eigenvalue decomposition of each $A_i A_i^{\dagger}$,
$$
A_i A_i^{\dagger} = \sum_{j=1}^{n_i} \expon(\lambda_{i, j}) u_{i, j} u_{i, j}^{\dagger} .
$$
Note that $\sum_{j=1}^{n_i} \lambda_{i,j} = 0$ for all $i\in[d]$ since $\det(A_i)=1$.
Now consider the tensor $T \in \Ten(n_1,\ldots,n_{d})$ given by
$$
T_{j_1,\ldots, j_{d}}
 :=  \tr\left[ \left(u_{1, j_1} u_{1, j_1}^{\dagger} \otimes \cdots \otimes u_{d, j_{d}} u_{d, j_{d}}^{\dagger} \right) \rho_{1\cdots d} \right] ,
$$
which has nonnegative real entries.
We claim that they add up to one.
Indeed, for fixed $i\in [d]$ and $j\in [n_i]$, we have the following stronger stochasticity property:
\begin{equation}\label{eq:magic stochastic}
\begin{aligned}
\sum_{j_1=1}^{n_1} \cdots  \sum_{j_{i-1}=1}^{n_{i-1}} \sum_{j_{i+1}=1}^{n_{i+1}}  \cdots  \sum_{j_{d} = 1}^{n_{d}} T_{j_1,\ldots, j_{d}}
 &=  \tr\left[ \left(I_{n_1} \otimes \cdots I_{n_{i-1}} \otimes u_{i, j_i} u_{i, j_i}^{\dagger} \otimes I_{n_{i+1}} \otimes \cdots I_{n_{d}} \right)     \rho_{1\cdots d} \right] \\
&= \tr \left[ u_{i, j_i} u_{i, j_i}^{\dagger} \rho_i\right] = \frac 1 {n_i}.
\end{aligned}
\end{equation}
(The first equality follows from the fact that the $u_{i,j}$'s form an orthonormal basis;
the second equality follows from the definition of a quantum marginal;
the third equality follows from the fact that the $u_{i,j}$'s have norm~$1$ and that $\rho_i = I_{n_i}/n_i$.)
Summing over $j\in [n_i]$ shows the claim.

We can now write the quantity we want to lower bound as follows:
$$
\tr\left[ \left(A_1 A_1^{\dagger} \otimes \cdots \otimes A_{d} A_{d}^{\dagger} \right) \rho_{1\cdots d} \right]
   = \sum_{j_1=1}^{n_1} \cdots \sum_{j_{d} = 1}^{n_{d}} \expon\left(\sum_{i=1}^{d} \lambda_{i, j_i}\right) \cdot T_{j_1,\ldots, j_{d}}
$$
Using the concavity of the logarithm, we obtain
\begin{align*}
&\: \: \: \: \log \Bigg(  \sum_{j_1=1}^{n_1} \cdots \sum_{j_{d} = 1}^{n_{d}} \expon\left(\sum_{i=1}^{d} \lambda_{i, j_i}\right) \cdot T_{j_1,\ldots, j_{d}} \Bigg)
\ge \sum_{j_1=1}^{n_1} \cdots \sum_{j_{d} = 1}^{n_{d}} \left(\sum_{i=1}^{d} \lambda_{i, j_i}\right) \cdot T_{j_1,\ldots, j_{d}} \\
&= \sum_{i=1}^{d} \sum_{j_i = 1}^{n_i} \lambda_{i, j_i} \cdot \sum_{j_1=1}^{n_1} \cdots  \sum_{j_{i-1}=1}^{n_{i-1}} \sum_{j_{i+1}=1}^{n_{i+1}}  \cdots  \sum_{j_{d} = 1}^{n_{d}} T_{j_1,\ldots, j_{d}} = 0 .
\end{align*}
(The first equality is just a rearrangement of sums;
the second equality follows from the stochasticity property~\eqref{eq:magic stochastic} established above
and the fact that $\sum_{j=1}^{n_i} \lambda_{i,j} = 0$ for all $i\in[d]$.)
This completes the proof.
\end{proof}

Next we prove the special case of the Hilbert-Mumford criterion~\cite{Hil, Mum65}:

\HilbertMumford*

One direction ($\supp(Y)$ being deficient implies tensor $X$ is in the null cone) is easy and already outlined in the main part of the manuscript.
The proof of the other direction will proceed by choosing an appropriate basis, reducing the problem to LP feasibility, and using LP duality.

\begin{proof}
First suppose that there exist unitary matrices $U_i$ such that the support of $Y = (U_1,\ldots, U_{d}) \cdot X$ is deficient.
We will prove that $Y$ is in the null cone which will imply that $X$ is in the null cone\footnote{Note that unitary matrices may not lie in the special linear group but some scalar multiple of unitary matrices does. This suffices for our purposes.}.
Since $\supp(Y)$ is deficient, there exist integers $a_{i,j}$, $i\in[d]$, $j\in[n_i]$, such that
\begin{align*}
\forall \: i\in[d]\colon \quad &\sum_{j=1}^{n_i} a_{i,j} = 0,  \\ 
\forall \: (j_1,\ldots, j_{d}) \in \supp(Y)\colon \quad&\sum_{i=1}^{d} a_{i, j_i} > 0 .
\end{align*}
Then the action of the group element
$\left(\diag\left( z^{a_{1, 1}}, \ldots, z^{a_{1, n_1}}\right), \ldots, \diag\left( z^{a_{d, 1}}, \ldots, z^{a_{d, n_{d}}}\right)\right)$
sends $Y$ to~$0$ as $z \rightarrow 0$ (see \cref{eq:action of 1-PSG}) and hence $Y$ is in the null cone.

In the other direction, suppose that $X$ is in the null cone. Then there exists a sequence of group elements
$g^{(1)}, \ldots, g^{(k)}, \ldots$ such that $\lim_{k \rightarrow \infty} g^{(k)} \cdot X = 0$.
Let $g^{(k)} = \left(V^{(k)}_1 D^{(k)}_1 U^{(k)}_1, \ldots, V^{(k)}_{d} D^{(k)}_{d} U^{(k)}_{d}\right)$, where we have performed a singular value decomposition (SVD) for each component of $g^{(k)}$.
Note that since the action by unitary matrices doesn't change the norm of a tensor, the sequence of group elements $\left(D^{(k)}_1 U^{(k)}_1, \ldots, D^{(k)}_{d} U^{(k)}_{d}\right)$ also sends $X$ to $0$.
We can w.l.o.g.\ assume that the sequence $\left(U^{(k)}_1, \ldots, U^{(k)}_{d}\right)$ converges to a sequence of unitary matrices $\left(U_1, \ldots, U_{d}\right)$
(otherwise we can focus on a convergent subsequence, which exists because of the Bolzano-Weierstrass theorem; it applies because the space of tuples of unitary matrices is compact).
This limit $\left(U_1, \ldots, U_{d}\right)$ will be the tuple which we are after.
Now, we have that the sequence $\left(D^{(k)}_1 U_1, \ldots, D^{(k)}_{d} U_{d}\right)$ also sends $X$ to $0$.
Let $Y = (U_1,\ldots, U_{d}) \cdot X$.
We would like to prove that $\supp(Y)$ is deficient and we know that
$\lim_{k \rightarrow \infty}\left(D^{(k)}_1, \ldots, D^{(k)}_{d}\right) \cdot Y = 0$, where
$D^{(k)}_i$ are diagonal matrices with positive entries and determinant equal to $1$.
Suppose $D^{(k)}_i = \diag\left( d^{(k)}_{i, 1}, \ldots, d^{(k)}_{i, n_i}\right)$.
Then the condition $\lim_{k \rightarrow \infty}\left(D^{(k)}_1, \ldots, D^{(k)}_{d}\right) \cdot Y = 0$ is equivalent to
\begin{equation}\label{eqn:ankit1}
\forall \: (j_1, \ldots, j_{d}) \in \supp(Y)\colon  \quad\lim_{k \rightarrow \infty} \prod_{i=1}^{d} d^{(k)}_{i, j_i} = 0,
\end{equation}
Now we are going to essentially prove the Hilbert-Mumford criterion for the above commutative group action using Farkas' lemma (or linear programming duality).
We first note that \cref{eqn:ankit1} implies that there is no tensor $T \in \Ten(n_1,\ldots,n_{d})$ with nonnegative real entries and $\supp(T) \subseteq \supp(Y)$ such that, for all $i\in[d]$,
$$
\sum_{j_1=1}^{n_1} \cdots  \sum_{j_{i-1}=1}^{n_{i-1}} \sum_{j_{i+1}=1}^{n_{i+1}}  \cdots  \sum_{j_{d = 1}}^{n_{d}} T_{j_1,\ldots, j_{d}} = 1/n_i
$$
(the classical version of being in $d$-stochastic position).
Indeed, suppose that, on the contrary, there was such a tensor $T$.
Then consider the following sequence of sums
$$
S^{(k)} := \sum_{j_1=1}^{n_1} \cdots \sum_{j_{d} = 1}^{n_{d}} T_{j_1,\ldots, j_{d}} \: \prod_{i=1}^{d} d^{(k)}_{i, j_i} .
$$
One one hand, we can prove that $S^{(k)} \ge 1$ by the same argument as in the proof of~\cref{thm:kempf ness} above.
On the other hand $\lim_{k \rightarrow \infty} S^{(k)} = 0$ because of \eqref{eqn:ankit1} and the fact that $\supp(T) \subseteq \supp(Y)$.
This means that the following linear program has no solution:
\begin{align*}
\forall \: i\in[d], j\in[n_i]\colon \quad &
 \sum_{j_1=1}^{n_1} \cdots  \sum_{j_{i-1}=1}^{n_{i-1}} \sum_{j_{i+1}=1}^{n_{i+1}}  \cdots  \sum_{j_{d} = 1}^{n_{d}} T_{j_1,\ldots, j_{d}} = 1/n_i, \\
\forall \: j_1\in[n_1],\ldots, j_{d}\in[n_d]\colon \quad & T_{j_1,\ldots, j_{d}} \ge 0,  \\
\forall \: ( j_1,\ldots, j_{d}) \notin \supp(Y)\colon \quad & T_{j_1,\ldots, j_{d}} = 0
\end{align*}
By Farkas' lemma (e.g., \cite{schrijver1998theory}), there exist real numbers $a_{i, j}$, $i\in[d]$, $j\in[n_i]$, such that
$$
\forall \: (j_1,\ldots, j_{d}) \in \supp(Y)\colon \quad \sum_{i=1}^{d} a_{i, j_i} \ge 0 \quad \mbox{and}
  \quad \sum_{i=1}^{d} \sum_{j=1}^{n_i} \frac {a_{i, j}} {n_i} < 0 .
$$
Let us denote $\operatorname{avg}_i := \sum_{j=1}^{n_i} \frac {a_{i, j}} {n_i}$.
We construct the new solution
$\widetilde{a}_{i,j} := a_{i,j} - \operatorname{avg}_i$, which satisfies
\begin{align*}
\forall \: i\in[d]\colon \quad & \sum_{j=1}^{n_i} \widetilde{a}_{i,j} = 0 ,\\ 
\forall \: (j_1,\ldots, j_{d}) \in \supp(Y)\colon \quad &\sum_{i=1}^{d} \widetilde{a}_{i, j_i} > 0 .
\end{align*}
Since there is a real solution with the above properties, an integer solution exists as well.
This proves that $\supp(Y)$ is deficient,
which completes the proof.
\end{proof}

We now show that the instability can also be defined by optimizing over all tuples of invertible (rather than unitary) matrices.

\insSvsU*
\begin{proof}
We first observe that in the form of the 1-PSG in \cref{eq:1-PSG}, as well as when evaluating the instability, we may assume that the numbers $a_{i,j}$ are ordered non-increasingly for any fixed~$i$, i.e., that the tuple $(a_{i,j})$ is an element of
\[
  \Gamma_+ = \Bigl\{ (a_{i,j})_{i\in[d],j\in[n_k]} \;\Big|\; a_{i,j} \in \Z, \, \sum_{j=1}^{n_i} a_{i,j}=0 \text{ and } a_{i,1}\geq a_{i,2}\geq\dots\geq a_{i,n_i} \text{ for } i\in[d] \Bigr\}.
\]
This is because we can always permute the entries of $a_{i,j}$ for each $i$ by the action of appropriate permutation matrices, which we can absorb into the~$B_i$.

Now recall that any invertible matrix has a QR decomposition.
For each $i\in[d]$, let $B_i^{-1} = U_i^\dagger R_i$ be the QR-decomposition of the inverse of $B_i$, i.e., $U_i$ is unitary and $R_i$ upper triangular.
Then $B_i = R_i^{-1} U_i$ for $i\in[d]$.

In order to prove the lemma, it therefore suffices to show that
\[ \defi(\supp(R B \cdot X)) \geq \defi(\supp(B \cdot X)) \]
for any tuple $R=(R_1,\dots,R_{d})$ of upper triangular matrices (these form a group, and so this in fact implies that the inequality is tight).
Thus let $(a_{i,j})\in\Gamma_+$.
Then,
\begin{align*}
\left(R B \cdot X\right)_{k_0,k_1, \ldots, k_d}
= \sum_{(j_1,\ldots,j_{d})\in\supp(B \cdot X)} \left(B \cdot X\right)_{k_0,j_1, \ldots, j_{d}} \prod_{i=1}^{d} \left(R_i\right)_{k_i, j_i} \\
= \sum_{\substack{(j_1,\ldots,j_{d})\in\supp(B \cdot X)\\ j_i\geq k_i}} \left(B \cdot X\right)_{k_0,j_1, \ldots, j_{d}} \prod_{i=1}^{d} \left(R_i\right)_{k_i, j_i},
\end{align*}
where in the first step we used the definition of the support and in the second that the $R_i$ are upper triangular.
This implies that if $(k_1,\dots,k_{d}) \in \supp(R \cdot B \cdot X)$ then there exists $(j_1,\dots,j_{d})\in\supp(B \cdot X)$ with $j_i\geq k_i$ for $i\in[d]$.
Since $(a_{i,j})\in\Gamma_+$, it follows that
\[ \sum_{i=1}^{d} a_{i,k_i} \ \geq\ \sum_{i=1}^{d} a_{i,j_i} , \]
therefore,
\[ \min_{(k_1,\dots,k_{d}) \in \supp(R B \cdot X)}\left(\sum_{i=1}^{d} a_{i,k_i}\right) \ \geq\  \min_{(j_1,\dots,j_{d}) \in \supp(B \cdot X)}\left(\sum_{i=1}^{d} a_{i,j_i}\right). \]
Thus we have showed that, indeed, $\defi(\supp(R B \cdot X)) \geq \defi(\supp(B \cdot X))$, and the lemma follows.
\end{proof}

Next we prove the lemma relating the instability and minimal distance to $d$-stochasticity (a special case of~\cite[Lemma 3.1, (iv)]{NM84}).

\insvsds*
\begin{proof}
According to \cref{lem:instability S vs U}, the instability is $G$-invariant.
Clearly, it is also invariant under rescaling, since this leaves the support of a tensor invariant.
It follows that we only need to prove that
\[ \ins(Y) \leq \sqrt{\ds(Y)} \]
for any tensor $Y\in\Ten(n_0,\dots,n_d)$.
Since $\ds(Y)$ is invariant under the action of tuples of unitary matrices, it in fact suffices to show that
\[ \frac {\min_{(j_1,\dots,j_{d})\in\supp(Z)} \sum_{i=1}^{d} a_{i,j_i}} {\left(\sum_{i=1}^{d} \sum_{j=1}^{n_i} a_{i,j}^2
\right)^{1/2}} \leq \sqrt{\ds(Z)} \]
for any tensor $Z\in\Ten(n_0,\dots,n_d)$ and for any $(a_{i,j})\in\Gamma$.
Since both the instability and $\ds$ are invariant under rescaling, we may we may assume that $\norm{Z}_2^2=Z^\dagger Z=1$.

Now let $\rho_1,\dots,\rho_{d}$ denote the quantum marginals of the last $d$ systems of $ZZ^\dagger$.
Then,
\begin{align*}
& \min_{(j_0,\dots,j_{d})\in\supp(Z)} \sum_{i=1}^{d} a_{i,j_i}
\leq \sum_{i=1}^{d} \sum_{j_1,\dots,j_d} a_{i,j_i}  \abs{Z_{j_1,\dots,j_d}}^2 \\
&= \sum_{i=1}^{d} \tr[\left(I_{n_1} \ot \dots \ot I_{n_{i-1}} \ot {\begin{psmallmatrix}a_{i,j_1}&&\\&\ddots&\\&&a_{i,j_{n_i}}\end{psmallmatrix}} \ot I_{n_{i+1}} \ot \dots \ot I_{n_d} \right) ZZ^\dagger] \\
&= \sum_{i=1}^{d} \tr[{\begin{psmallmatrix}a_{i,j_1}&&\\&\ddots&\\&&a_{i,j_{n_i}}\end{psmallmatrix}} \rho_i]
= \sum_{i=1}^{d} \tr[{\begin{psmallmatrix}a_{i,j_1}&&\\&\ddots&\\&&a_{i,j_{n_i}}\end{psmallmatrix}} \left(\rho_i - \frac {I_{n_i}} {n_i}\right)]
= \sum_{i=1}^{d} \sum_{j=1}^{n_i} a_{i,j} \left(\rho_i - \frac {I_{n_i}} {n_i}\right)_{j,j} \\
&\leq \left( \sum_{i=1}^{d} \sum_{j=1}^{n_i} a_{i,j}^2 \right)^{1/2} \left( \sum_{i=1}^{d} \sum_{j=1}^{n_i} \left(\rho_i - \frac {I_{n_i}} {n_i}\right)_{j,j}^2 \right)^{1/2}
\leq \left( \sum_{i=1}^{d} \sum_{j=1}^{n_i} a_{i,j}^2 \right)^{1/2} \left( \sum_{i=1}^{d} \sum_{\iota,j=1}^{n_i} \left(\rho_i - \frac {I_{n_i}} {n_i}\right)_{\iota,j}^2 \right)^{1/2} \\
&= \left( \sum_{i=1}^{d} \sum_{j=1}^{n_i} a_{i,j}^2 \right)^{1/2} \sqrt{\ds(Z)} ,
\end{align*}
where the first inequality holds because $\norm{Z}^2_2=1$ and so the $\abs{Z_{j_1,\dots,j_d}}^2$ form a probability distribution supported only on indices $(j_0,\dots,j_d) \in [n_0] \times \supp(Z)$,
the third step is by the definition of the quantum marginals (\cref{eq:rdm}),
next we evaluate the trace in the standard basis and we write $(\rho_i)_{j,j}$ for the diagonal entries of $\rho_i$,
then we use the Cauchy-Schwarz inequality,
in the penultimate step we extend the sum over the absolute values squared of all matrix elements rather than the diagonals, and lastly we plug in the definition of $\ds(Z)$ (\cref{def:ds}).
This is exactly the inequality that we needed to prove.
\end{proof}

Now we prove the exponential lower bound on $\dds(X)$ and the instability.

\inslb*
\begin{proof}
The first inequality is \cref{lem:ins vs ds}, so it remains to prove the lower bound on the instability.
If $X$ is unstable then $\ins(X)>0$, i.e., there exists a tuple of unitaries $U=(U_1,\dots,U_{d})$ and $(a_{k,j})\in\Gamma$
such that the support $S := \supp(U \cdot X)$ has the property that
$\min_{(j_1,\dots,j_{d}) \in S}\left(\sum_{i=1}^{d} a_{i,j_i}\right) > 0$, i.e., is at least one.
Thus we have
\[ \ins(X) \geq \defi(S) =
 \max_{(a_{i,j})\in\Gamma}\frac {\min_{(j_1,\dots,j_{d}) \in S}\left(\sum_{i=1}^{d} a_{i,j_i}\right)} {\sqrt{\sum_{i=1}^{d} \sum_{j=1}^{n_i} a_{i,j}^2}} .\]
In fact, it is clear that the value of the maximization does not change if we optimize over tuples~$(\widetilde a_{i,j})$  of \emph{rational} numbers satisfying
\begin{equation}\label{eq:rational system}
\begin{aligned}
\forall \: 1 \le i \le d \colon \quad &\sum_{j=1}^{n_i} \widetilde a_{i,j} = 0 \quad \text{ and } \\
\forall \: (j_1,\ldots, j_{d}) \in S \colon \quad &\sum_{i=1}^{d} \widetilde a_{i, j_i} \geq 1
\end{aligned}
\end{equation}
instead (since the objective function is invariant under rescaling all components $a_{i,j}$).
We now use~\cite[Theorem 10.1]{schrijver1998theory}, which ensures that there exist a rational solution of \cref{eq:rational system} of bitsize $O(n_1 \cdots n_d \cdot \log(n_1 \cdots n_d))$.
But this implies that
\[
  \frac {\min_{(j_1,\dots,j_{d}) \in S}\left(\sum_{i=1}^{d} \widetilde a_{i,j_i}\right)} {\sqrt{\sum_{i=1}^{d} \sum_{j=1}^{n_i} \widetilde a_{i,j}^2}}
= \exp(-O(n_1\cdots n_d \cdot \log(n_1 \cdots n_d)))
= \exp(-O(n \log n)),
\]
completing the proof.
\end{proof}

We now prove a slight variation on~\cite[Theorem 4.6]{BCCGNSU17}:

\slicerank*
\begin{proof}
We adapt the proof of~\cite[Theorem 4.6]{BCCGNSU17} to $d$ tensor factors and our definition of the instability (\cref{def:ins}).
Since $\sr(X)<m$ we can find vectors $V^{(i,k)}\in\C^m$
and tensors $Y^{(i,k)} \in \Ten_m^{d-1}$ for $i\in[d]$ and $k\in [r_i]$ such that we can write
$X$ as a sum of $\sum_{i=1}^d r_i = \sr(X)$ many slice rank one tensors:
\[
  X_{j_1,\dots,j_d} = \sum_{i=1}^d \sum_{k=1}^{r_i} V^{(i,k)}_{j_i} Y^{(i,k)}_{j_1,\dots,j_{i-1},j_{i+1},\dots,j_d}.
\]
The vectors $V^{(i,k)}$, for a fixed $i$ and varying $k$, are linearly independent (otherwise the above decomposition would not be minimal).
By performing a basis transformation $B=(B_1,\dots,B_d)$, we can ensure that $V^{(i,k)} = e_k$, where $e_1,\ldots, e_m$ is the standard basis of $\C^m$.
It is clear that
\begin{equation}\label{eq:support for slice rank degenerate}
(j_1,\dots,j_d) \in \supp(B \cdot X) \Longrightarrow \exists i\in[d]: j_i \leq r_i.
\end{equation}
Now we define
\[ a_{i,j} := \begin{cases}
m - r_i & \text{ if } 1 \leq j \leq r_i, \\
-r_i & \text{ if } r_i < j \leq m
\end{cases}
\]
for $i\in[d]$, $j\in[m]$.
Then $(a_{i,j}) \in \Gamma$ and so
\begin{align*}
&\quad \ins(X)
\geq \defi(\supp(B \cdot X))
\geq \min_{(j_1,\dots,j_d) \in \supp(B \cdot X)} \frac {\sum_{i=1}^d a_{i,j_i}} {\sqrt{\sum_{i=1}^d \sum_{j=1}^m a_{i,j}^2}} \\
&\geq \min_{i\in[d]} \frac {(m - r_i) - \sum_{l=1, l\neq i}^d r_l} {\sqrt{\sum_{i=1}^d \sum_{j=1}^m a_{i,j}^2}}
= \frac {m - \sr(X)} {\sqrt{\sum_{i=1}^d \sum_{j=1}^m a_{i,j}^2}}
\geq \frac 1 {\sqrt{\sum_{i=1}^d \sum_{j=1}^m a_{i,j}^2}}
\geq \frac 1 {\sqrt{d m^3}}
\end{align*}
(the first inequality uses \cref{lem:instability S vs U} and the third step is due to \cref{eq:support for slice rank degenerate}).
\end{proof}

We finally prove \cref{le:asympt-stab}.

\asymptstab*
\begin{proof}
If $X$ is in the null cone, then $0 \in \overline{\SL(m)^d \cdot X}$, so also $0 \in \overline{\SL(m^k)^d \cdot X^{\ot k}}$
(just act on the first copy of $X$, say), hence $X^{\ot k}$ is likewise in the null cone.

Conversely, suppose that $X$ is \emph{not} in the null cone.
By \cref{thm:kempf ness}, there exists a tensor $Y\in\overline{\SL(m)^d \cdot X}$ such that $\ds(Y)=0$.
But then $Y^{\ot k}\in\overline{(\SL(m)^d)^{\ot k} \cdot X^{\ot k}}\subseteq \overline{\SL(m^k)^d \cdot X^{\ot k}}$
as well is a tensor with $\ds(Y^{\ot k})=0$;
so we conclude that, by the other direction of \cref{thm:kempf ness}, $X^{\ot k}$ is not in the null cone either.
\end{proof}

\addcontentsline{toc}{section}{References}
\bibliographystyle{alpha}
\newcommand{\etalchar}[1]{$^{#1}$}

\end{document}